\newtheorem{theorem}{Theorem}[section]
\newtheorem{corollary}{Corollary}[theorem]
\newtheorem{proposition}{Proposition}[section]
\newtheorem{lemma}{Lemma}[section]
\newtheorem{example}{Example}[section]
\newtheorem{remark}{Remark}[section]
\DeclareMathOperator{\tr}{tr}
\def\@email#1#2{%
 \endgroup
 \patchcmd{\titleblock@produce}
  {\frontmatter@RRAPformat}
  {\frontmatter@RRAPformat{\produce@RRAP{*#1\href{mailto:#2}{#2}}}\frontmatter@RRAPformat}
  {}{}
}%
\begin{document}

%\preprint{AIP/123-QED}

\title[Schwarz maps with symmetry]{Schwarz maps with symmetry}

\author{A. Garc\'{\i}a-Velo}
\affiliation{Depto. de Matem\'aticas, Univ. Carlos III de Madrid, \\ Avda. de la Universidad 30, 28911 Legan\'es, Madrid, Spain.}
\email{alfonsog@fis.uc3m.es}

\author{A. Ibort}
\affiliation{Instituto de Ciencias
Matem\'{a}ticas (CSIC - UAM - UC3M - UCM) ICMAT and Depto. de Matem\'aticas, Univ. Carlos III de Madrid, \\ Avda. de la
Universidad 30, 28911 Legan\'es, Madrid, Spain.}
\email{albertoi@math.uc3m.es}

\date{\today}

\begin{abstract}
The theory of symmetry of quantum mechanical systems is applied to study the structure and properties of several classes of relevant maps in quantum information theory: CPTP, PPT and Schwarz maps. First, we develop the general structure that equivariant maps $\Phi:\mathcal A \to \mathcal B$ between $C^\ast$-algebras satisfy. Then, we undertake a systematic study of unital, Hermiticity-preserving maps that are equivariant under natural unitary group actions. Schwarz maps satisfy Kadison's inequality $\Phi(X^\ast X) \geq \Phi(X)^\ast \Phi(X)$ and form an intermediate class between positive and completely positive maps. We completely classify $U(n)$-equivariant on $M_n(\mathbb C)$ and determine those that are completely positive and Schwarz. Partial classifications are then obtained for the weaker $DU(n)$-equivariance (diagonal unitary symmetry) and for tensor-product symmetries $U(n_1) \otimes U(n_2)$. In each case, the parameter regions where $\Phi$ is Schwarz or completely positive are described by explicit algebraic inequalities, and their geometry is illustrated. Finally, we further show that the $U(n)$-equivariant family satisfies $\mathrm{PPT} \iff \mathrm{EB}$, while the $DU(2)$, symmetric $DU(3)$, $U(2) \otimes U(2)$ and $U(2) \otimes U(3)$, families obey the $\mathrm{PPT}^2$ conjecture through a direct symmetry argument. These results reveal how group symmetry controls the structure of non-completely positive maps and provide new concrete examples where the $\mathrm{PPT}^2$ property holds.
\end{abstract}

\maketitle

\section{Introduction}
\label{Section: Introduction}

The rapid development of quantum information theory is leaving a gap in our understanding of the role played by symmetry. Indeed, as information is physical \cite{La91}, the symmetries present on the physical background will emerge on the properties and characteristics of information systems. Symmetry proves itself valuable across various domains of quantum information theory, including applications in quantum cloning \cite{Keyl_99}, programmable devices \cite{Gschwendtner_21}, and the quantum extension of Shannon's theory \cite{Werner_02, King_03, Holevo_05, Datta_06}, among others.

The theory of information channels (both classical and quantum), i.e., of various classes of linear maps $\Phi:\mathcal A \to \mathcal B$ between $C^\ast$-algebras, is a rich field with many open problems. Indeed, the parametrization and classification of quantum channels is a central problem in quantum information theory, in particular those that exhibit symmetry under a group action. The role of symmetry in this context is still being articulated, especially concerning operational tasks \cite{Bartlett_07}, resource theories \cite{Gour_08}, and the structural analysis of dynamical processes \cite{Chiribella_08}.

The study of quantum channels with symmetry was initiated by Scutaru \cite{Scutaru_79} and developed by Holevo \cite{Holevo_93, Holevo_96}. If symmetry is present, the description of quantum channels can be notably simplified (see, for instance, recent contributions \cite{Nuwairan_14, Mozrzymas_17, Siudzinska_18, Memarzadeh_22, Lee_22, Vollbrecht_01, Collins_2018, Bardet_20, Si21, Park_24, Si22} and references therein). For instance, symmetry has helped making tractable various problems related to quantum entanglement \cite{Vollbrecht_01, Collins_2018, Christandl_19, Bardet_20, Si21, Park_24, Si22}.

The theory of positive and completely positive maps, although well-established in specific cases, remains incomplete. Key questions regarding their structure \cite{Paulsen_03}, extremal properties \cite{Choi_75}, and their relation with symmetry are still under active investigation. In infinite dimensions, these challenges are compounded by additional technical subtleties, making the development of symmetry-based tools not only elegant but necessary.

In this work, we will investigate the general form that all information channels which present some symmetry must satisfy. Once the general structure results for symmetric maps are established, we show its usefulness by applying it to study the interplay between symmetry and Schwarz maps, quantum channels and the PPT$^2$ conjecture. Schwarz maps define a family of positive, but in general non-completely positive, maps that satisfying the so called Kadison's inequality \cite{Kadison_52, Choi_74, CHOI_80}, (see Sect. \ref{Subsection: Positive, Schwarz and completely positive maps}). Schwarz maps provide a class of relevant entanglement witnesses \cite{Chruscinski_14}, but have proved to be extremely difficult to characterize \cite{Chruscinski_19, Chruscinski_21, Carlen_23, Chruscinski_24_1, Ch24}. The structure and properties of maps invariant under the full unitary group $U(n)$ will be completely elucidated, showing that, even in this simple scenario, there is a family of Schwarz maps which are not completely positive. In addition to this, it will be shown that $U(n)$-equivariant PPT maps are EB (see Sects. \ref{Subsection: Entanglement breaking channels: the PPT$^2$ conjecture}, for more details).

When the symmetry group is smaller, the classification and characterization of channels possessing this symmetry becomes increasingly difficult. We will provide several results describing structural properties of $U(n_1) \otimes U(n_2)$ and $DU(n)$-equivariant channels. Both families of maps are important in quantum information theory since they include several relevant channels such as the depolarizing and transpose depolarizing channels, amplitude damping channels, Schur multipliers, and others \cite{Si22b}. Diagonal-unitary channels have received a lot of attention \cite{Mozrzymas_17, Si21, Si22} because they satisfy the PPT$^2$ conjecture \cite{Si22b}. The positivity, complete positivity, and Schwarz properties of $DU(2)$ invariant channels were analyzed and completely determined in \cite{Chruscinski_24_1, Ch24}. Such efforts will be continued here by addressing the structural properties of Schwarz $DU(n)$-equivariant channels, obtaining a complete classification for a class of $DU(3)$-equivariant channels. Moreover, a direct proof of the PPT$^2$ conjecture will be provided in this case. Finally, we will address the product symmetry $U(n_1) \otimes U(n_2)$. Again, the structure of Schwarz, CP and the PPT$^2$ conjecture will be analyzed in small dimensions $n_1 = 2$, $n_2 = 2,3$.

The paper will be organized as follows: Sect. \ref{Section: Analyzing quantum channels} will establish the basic notions and notations on the theory of quantum channels used in the paper. This will be followed by a succinct review of the theory of group representations, Sect. \ref{Section: Fundamentals of representation theory}, that will serve as the natural language for encoding symmetry in quantum systems. The main theorem characterizing maps with symmetry will be established in Sect. \ref{Section: Channels with symmetry}, and some relevant instances are discussed in detail in Sect. \ref{Section: Parametrizing quantum equivariant maps}. More concretely, $U(n)$-equivariant channels are completely described and $DU(n)$-equivariant and $U(n_1) \otimes U(n_2)$-equivariant channels are also considered. Then we will concentrate in the theory of equivariant Schwarz maps. As a consequence of this work, a complete classification of $U(n)$-equivariant Schwarz maps will be established, see Theorem \ref{Theorem: Schwarz U-equivariant maps}, providing the first examples of Schwarz maps which are not CP in any dimension. Partial results for $U(n_1) \otimes U(n_2)$ and $DU(n)$ equivariant maps will also be presented in Sects. \ref{Subsection: Diagonal unitary equivariance}, \ref{Subsection: product unitary equivariance} as well as direct proofs of the PPT$^2$ conjecture in various instances.

\section{Analyzing quantum channels}
\label{Section: Analyzing quantum channels}

\subsection{Positive, Schwarz and completely positive maps}
\label{Subsection: Positive, Schwarz and completely positive maps}

Many efforts and ideas have been introduced to understand quantum channels like Stinespring structure theorem or the Choi-Kraus form of CPTP maps. One relevant result that will be used in what follows is the Choi-Jamiolkowski isomorphism. It defines an isomorphism between linear maps $\Phi: \mathcal A \to \mathcal B$, and operators $C_\Phi = \sum_{i,j} E_{ij} \otimes \Phi(E_{ij}) \in \mathcal A \otimes \mathcal B$, for every finite-dimensional algebras $\mathcal A, \mathcal B$ \cite{Choi_75}. The operators $E_{ij}$ define the canonical basis of $\mathcal A$, and $C_\Phi$ is called the Choi matrix of $\Phi$. Choi's theorem shows that a linear map $\Phi$ is completely positive (CP) if and only if its Choi matrix $C_\Phi$ is positive semi-definite \cite{Choi_75}. Similarly, $\Phi$ is positive if and only if $C_\Phi$ is block positive, that is $\langle v \otimes w, \, C_\Phi \, v \otimes w \rangle \geq 0$, for all $v \in \mathbb{C}^n$, $w \in \mathbb{C}^m$. It is clear that if $\Phi$ is completely positive then it is also positive. Thus, the set $\mathsf{CP}$ of completely positive maps is a strict subset of the set $\mathsf{P}$ of positive maps: $\mathsf{CP} \subset \mathsf{P}$.

We introduce now the class of positive maps relevant for this work. A unital linear map is said to be Schwarz if it satisfies Kadison's inequality:
\begin{equation}
\label{Equation: Schwarz inequality}
    \Phi(X^\dagger X) \geq \Phi(X)^\dagger \Phi(X), \quad \forall \, X \in \mathcal A \, .
\end{equation}

For future use we denote
\begin{equation}\label{eq:MPhi}
M_\Phi(X) \equiv \Phi(X^\dagger X) - \Phi(X)^\dagger \Phi(X) \, .
\end{equation}

It turns out that being Schwarz is a condition that lies in between being positive and completely positive  \cite{Paulsen_03, Stormer_12, Paulsen_07, Chruscinski_21}. That is $\mathsf{CP} \subset \mathsf{S} \subset \mathsf{P}$, where $\mathsf{S}$ denotes the set of Schwarz maps and all inclusions are strict. The first example of a non-completely positive Schwarz map was given by Choi \cite{CHOI_80}. Interestingly, Kadison's inequality is satisfied for normal operators \cite{Choi_74, CHOI_80}. As indicated in the introduction, some examples of the usefulness of the Schwarz inequality are related to the characterization of quantum Markovian evolution \cite{Chruscinski_19}, the asymptotic dynamics of open quantum systems \cite{Amato_23, Amato_24}, and the proof of monotonicity results \cite{Hiai_12, Carlen_22, Carlen_23}.

\subsection{Entanglement breaking channels: the PPT$^2$ conjecture}
\label{Subsection: Entanglement breaking channels: the PPT$^2$ conjecture}

Determining whether or not a state is entangled is a NP-hard problem. Horodecki's theorem \cite[Thm. 2]{Horodecki_96}, states that a bipartite quantum state $\rho \in \mathcal A \otimes \mathcal B$ is separable if and only if $\mathrm{id}_{\mathcal A} \otimes \Phi (\rho) \geq 0$ for all positive maps $\Phi$. The typical positive but non-completely positive map used is the transpose map $T$ \cite{Peres_96}. Similar criteria has been obtained by generalizing the concept to a higher number of subsystems \cite{Horodecki_01}. Because of this $\mathsf{P}$ but not $\mathsf{CP}$ maps were recognized as a powerful tool for entanglement detection \cite{Chruscinski_14}.

The modern approach to entanglement is to consider it as a resource. Correlations can be created and consumed for more efficient information tasks \cite{Chitambar_19}. The use of this resource requires to act on states that carry entanglement using quantum channels, for instance modeling their transfer from one point to another, or the interactions required for the task at hand.

An open problem in this regard is the study of entanglement breaking (EB) channels. As its name suggests, an entanglement breaking channel is a channel $\Phi$ such that $\mathrm{id}_k \otimes \Phi (\rho)$ is separable for every $k \in \mathbb{N}$ and $\rho \in M_k(\mathbb C) \otimes \mathcal A$ \cite{Horodecki_03}. A channel $\Phi$ is EB if and only if its Choi matrix $C_\Phi$ is separable \cite{Horodecki_03}. Thus it is again a NP-hard problem. From the study of EB channels the PPT$^2$ conjecture arose \cite{Christandl_12}. A quantum channel $\Phi$ is said to be PPT (positive partial transpose) if it satisfies $T \circ \Phi \geq 0$. The conjecture states that the channel formed by the serial composition of two PPT quantum channels is always EB. That is, if $\Phi$ is PPT, then $\Phi \circ \Phi \equiv \Phi^2$ is EB.

One of its main consequences is that, in the case of being true, any PPT-entangled state (an state with positive partial transpose $T \otimes \mathrm{id}_{\mathcal B} (\rho) \geq 0$) cannot be used as a quantum resource in quantum networks \cite{Christandl_17}. The difficulty of characterizing EB channels is translated here. The PPT$^2$ conjecture has been proven for channels restricted to certain conditions, such as low Kraus rank channels, approximately depolarizing channels, Gaussian channels, diagonal unitary equivariant, and random channels \cite{Lami_15, Kennedy_18, Rahaman_18, Collins_18_2, Hanson_20, Si22}. Furthermore it holds for systems of dimension $2$ and $3$ \cite{Chen_19}.

\section{Fundamentals of representation theory}
\label{Section: Fundamentals of representation theory}

In this section the basic ingredients of symmetry and the theory of representations of groups that support it will be described (see, for instance \cite{Ibort_20} and references therein).

\subsection{Groups, group actions, and symmetry}
\label{Subsection: Groups, group actions, and symmetry}

We will restrict ourselves to consider finite or compact groups. To fix notation we recall that a group is a set $G$ with a composition law $\circ: G \times G \to G$, satisfying thee properties. i) Associativity: for all $g, g', g'' \in G$, the composition satisfies $g \circ (g' \circ g'') = (g \circ g') \circ g''$. ii) Neutral element: There exists an element $e \in G$, such that $e \circ g = g \circ e = g$ for all $g \in G$. Finally, iii) Inverse element: for all $g \in G$, there exits an element $g^{-1} \in G$, such that $g \circ g^{-1} = g^{-1} \circ g = e$.  If the composition law of the group is commutative, that is, $g \circ g' = g' \circ g$, for all $g, g' \in G$, we say that the group is called Abelian.

But how does it relate to the concept of symmetry? Consider a physical system, which is described mathematically by using some supporting or carrier space $V$. The carrier space can be just a set of objects, for instance, in the context of classical communications, and alphabet. Although usually it is equipped with additional structures like a topological or an algebraic structure. In the applications to be considered in this work $V$ is a topological vector space, and more concretely, either a Hilbert space or a $C^\ast$-algebra.

The group of symmetries $G$ will be realized on the carrier space $V$. To materialize that, one defines the action of $G$ on $V$, which is a map:
\begin{eqnarray}
    \Psi \colon G \times V & \to & V \\
    (g, v) & \to & \Psi(g, v) \equiv \Psi_g (v)\, , \nonumber
\end{eqnarray}
satisfying: $\Psi_e (v) = v$ and $\Psi_{g \circ g'} (v) = \Psi_g \circ \Psi_{g'} (v)$ for all $v \in V$ and $g \in G$. Moreover, if the group $G$ carries a topology, we impose the map $\Psi_v \colon G \to V$, $\psi_v(g) := \psi_g (v)$, to be continuous. In other words, the group $G$ is realized by means of transformations $\Psi_g$ in the carrier space $V$ (note that the maps $\Psi_g$ are all bijections).

If $V$ possesses any additional structure of interest, we will impose $\Psi$ to preserve such structure. For instance, we define $\Psi_g$ to be linear due to the fact that $V$ is assumed to be a linear space. In this sense a `symmetry' is the preservation of some structure on $V$. The `group of symmetries' of a given structure $S$ on $V$ is a group $G$ acting on $V$ whose action preserves such structure.

\begin{example}
    Let $\mathcal B(\mathcal H)$ be the $C^\ast$-algebra of all bounded operators on the Hilbert space $\mathcal H$. Then, the action of the unitary group $U(\mathcal H)$ is given by the map $\Psi_U (X) = U X U^\dagger$, for all $X \in \mathcal B(\mathcal H)$ and $U \in U(\mathcal H)$. Notice that $\Psi_{\mathrm{id}_V} (X) = X$ and $\Psi_{UV} (X) = U (V X V^\dagger) U^\dagger = \Psi_U \circ \Psi_V (X)$. Moreover the action is linear and continuous ($\mathcal B(\mathcal H)$ carrying, for instance the strong topology).
    
    We may consider the Frobenius inner product $\langle X, Y \rangle = \mathrm{Tr\,} X^\dagger Y$, which is defined on the space of Hilbert-Schmidt operators $\mathcal L^2(\mathcal H)$ on $\mathcal{H}$ that becomes a Hilbert space. Again unitary maps preserve this particular structure since $\langle \Psi_U (X), \Psi_U (Y) \rangle = \langle X, Y \rangle$, for all $U \in U(n)$. This implies that the representation $\Psi$ becomes a unitary representation of the unitary group $U(\mathcal H)$ with support the Hilbert space $\mathcal L^2(\mathcal H)$. If $\mathcal H$ is finite-dimensional, then the two situations agree as the $C^\ast$-algebra of bounded operators in $\mathcal H$ can be identified with the $C^\ast$-algebra of $n \times n$-matrices, $M_n(\mathbb C)$, which is itself a Hilbert space with the Frobenius inner product.
\end{example}

\subsection{Linear unitary representations of groups}
\label{Subsection: Linear unitary representations of groups}

The action of a group on a linear space by linear transformations is called a linear representation of the group. Its study and analysis provides a powerful tool for understanding how abstract groups manifest in the concrete spaces where the laws of physics are defined.

Thus a representation of $G$ into the linear vector space $V$ is no more than an association between the elements of $G$ and the elements of the group of linear isomorphisms on it, called the general linear group of $V$ and denoted as $GL(V)$. If the linear space $V$ is finite-dimensional, $\dim V = n$ say, then $GL(V)$ is isomorphic to the group $GL(n)$ of invertible square matrices of dimension $n$. Formally, a representation is denoted by the triplet $(G, V, \mu)$:
\begin{equation}
    \mu \colon G  \to  GL(V) \, , \qquad g  \mapsto  \mu (g) \equiv \mu_g: V \to V \, , 
\end{equation}
and $\mu$ is a group homomorphisms. What this means is that $\mu_{g \circ g'} = \mu_g \circ \mu_{g'}$, for all $g, g' \in G$, and the maps $g \in G \mapsto \mu (g) v \in V$, for all $v\in V$ are continuous. Because $\mu$ is a homomorphism it translates the structure of the group into $V$. Furthermore we get immediately that $\mu_e = \mathrm{id}_V$ and $\mu_{g^{-1}} = \mu_g^{-1}$, for all $g \in G$. The linear space $V$ is called the support of the representation, and we say that the dimension of the representation is the dimension of $V$: $\dim \mu := \dim V$.

If $V$ has additional structure, like a Hilbert space structure, it should be preserved by the representation. Thus if $V$ is a Hilbert space $\mathcal H$, we ask the linear representation $\mu$ to be such that $\mu(g)$ is a unitary map for all $g\in G$. Thus $\mu: G \to U(\mathcal H) \subset GL(\mathcal H)$, where $U(\mathcal H)$ denotes the group of unitary transformations on the space $\mathcal H$. We will call such linear representation a unitary representation.

If $V$ is a $C^\ast$-algebra $\mathcal A$, we should ask the representation $\mu$ to preserve the algebraic structure of $\mathcal A$, thus, $\mu_g (XY) = \mu_g(X) \mu_g(Y)$, and $\mu_g (X^\ast) = [\mu_g(X)]^\ast$, for all $X, Y \in \mathcal A$, $g \in G$. These conditions imply that $\mu(g)$ is a $\ast$-automorphism of the $C^\ast$-algebra $\mathcal A$ and that the representation $\mu$ is a group homomorphism $\mu \colon G \to \mathrm{Aut\, }(\mathcal A)$. A representation of a group on a $C^\ast$-algebra $\mathcal A$ is also known as a dynamical system on $\mathcal A$.

\subsection{Equivalent representations}
\label{Subsection: Equivalent representations}

We face the question of given a group $G$, how many different representations exist. So, in order to answer this question, we have to determine when two given representations have the same structural properties. We say that two representations $(V, \mu)$ and $(W, \nu)$, of the same group $G$, are equivalent, and denote it by $\mu \sim \nu$, if there exists an isomorphism $\phi: V \to W$ such that $\phi \circ \mu_g = \nu_g \circ \phi$, for all $g \in G$.

This definition fulfills the properties of an equivalent relation, and so the family of all representations of $G$ can be decomposed into equivalence classes of representations $[\mu] = \{ \nu \, | \, \mu \sim \nu \}$. We are interested not in the properties of individual representations, but in the properties of the representations in the same equivalence class.

If we are considering linear representations of compact groups on Hilbert spaces it is not hard to see that every linear representation $(V, \mu)$ of $G$, is equivalent to a unitary representation of $G$. Therefore, for compact groups we can work just with unitary representations. This is of special interest for quantum mechanics, since the support space will be a Hilbert space.

On the contrary if we were working with $C^\ast$-algebras, the analysis should be conducted in a different way. Of course, in the finite-dimensional situation, that is $\dim \mathcal A < \infty$, it can be proved that $\mathcal A$ is a finite direct sum of matrix algebras $M_{n_k}(\mathbb C)$, $n_1, \ldots, n_r \in \mathbb{N}$, and we are in the previous situation. The channels to be considered in this paper are going to be always defined on finite-dimensional spaces, however the infinite-dimensional context will be discussed elsewhere.

\subsection{Irreducible representations and Schur's lemma}
\label{Subsection: Irreducible representations and Schur's lemma}

Given that all representations in the same equivalence class have similar properties, we can ask ourselves what are the simplest representations that we can choose to work with. To find them we will decompose an arbitrary representation into simpler blocks that cannot be reduced further. For this argument we will be considering just unitary representations of a group $G$ on a Hilbert space $\mathcal H$.

Consider closed subspaces  $W \subset V$ that are left invariant by $\mu$, i.e., $\mu_g w \in W$ for all $w \in W$ and $g \in G$; we will call them $G$-invariant subspaces. If there exits a proper closed $G$-invariant subspace $W \subset V$, that is, different from $\{ 0 \}$ and $V$, the representation is called reducible. Otherwise, is said to be irreducible. The family of equivalence classes of irreducible unitary representations of the group $G$ will be denoted by $\widehat{G}$.

For every finite-dimensional unitary representation $\mu$, there exists a unique decomposition of $V$ into a direct sum of a finite number of $G$-invariant subspaces $W_i$ such that each subspace $W_i$ supports a unique irreducible representations $\mu_i$ of $G$:
\begin{equation} \label{Equation: Decomposition representations}
    V = W_1 \oplus \cdots \oplus W_r \, , \qquad \mu = \mu_1 \oplus \cdots \oplus \mu_r \, .
\end{equation}

Representations $\mu$ having this property are called completely reducible, and thus finite-dimensional unitary representations always are completely reducible. For infinite-dimensional representations and non-compact groups this is not necessarily true and the situation becomes much more complex (see \cite[Ch. 8]{Ki76}) for an overall discussion on the subject.

Given two representations $(G, W_1, \mu_1)$ and $(G, W_2, \mu_2)$, their direct sum is another linear representation $(G, W_1 \oplus W_2, \mu_1 \oplus \mu_2)$. The direct sum of linear vector spaces is defined as usual. If $w_{1, 2} \in W_{1, 2}$, then $[\mu_1 \oplus \mu_2] (g) (w_1 \oplus w_2) = \mu_1 (g) (w_1) \oplus \mu_2 (g) (w_2)$, for all $g \in G$. The same applies to the construction of the tensor product  $\mu_1 \otimes \mu_2$ of two representations on the space $W_1 \otimes W_2$.

Some of the irreducible representations $\mu_i$ of $G$ that appear in the decomposition (\ref{Equation: Decomposition representations}) may be equivalent. In such case we may rewrite the decomposition (\ref{Equation: Decomposition representations}) as:
\begin{equation} \label{Equation: Decomposition equivalent representations}
    V = \bigoplus_{\alpha \in \widehat{G}} \mathbb C^{n_\alpha} \otimes W^{(\alpha)}, \qquad \mu = \bigoplus_{\alpha \in \widehat{G}} \mathbb I_{n_\alpha} \otimes \mu^{(\alpha)} \, ,
\end{equation}
where $\alpha \in \widehat{G}$ labels the equivalence classes of irreducible representations of $\mu$ with representatives $\mu^{(\alpha)}$. Also notice that $\mathbb C^{n_\alpha} \otimes W^{(\alpha)} \cong W^{(\alpha)} \oplus \cdots \oplus W^{(\alpha)}$ and $\mathbb I_{n_\alpha} \otimes \mu^{(\alpha)} \cong \mu^{(\alpha)} \oplus \cdots \oplus \mu^{(\alpha)}$ where the sum is repeated $n_\alpha$ times. Thus, we have reordered the terms in (\ref{Equation: Decomposition representations}), to obtain an isomorphic space.

Finally we will state the main result concerning linear maps that intertwine two given representations:

\begin{lemma}[Schur's Lemma]
\label{Lemma: Schur's Lemma}
    Let $(G, V_1, \mu_1)$ and $(G, V_2, \mu_2)$ be two finite-dimensional irreducible representations, and let $\phi: V_1 \to V_2$ be a continuous linear map satisfying $\phi \circ \mu_1 (g) = \mu_2 (g) \circ \phi$, for all $g \in G$. Then:
    
    \begin{enumerate}
        \item If $V_1 \ncong V_2$, then $\phi = 0$.
        \item If $V_1 = V_2$, then $\phi = \lambda \, \mathrm{id}_V$, where $\lambda \in \mathbb C$. 
      \end{enumerate}
\end{lemma}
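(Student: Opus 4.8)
The plan is to prove Schur's Lemma by exploiting the structure of kernel and image of the intertwiner $\phi$, using irreducibility to force these to be trivial or everything, and then to use the algebraic closedness of $\mathbb{C}$ to pin down the scalar in the second case.

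\medskip

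First I would establish part (1). The key observation is that the kernel $\ker \phi \subseteq V_1$ and the image $\operatorname{im} \phi \subseteq V_2$ are $G$-invariant subspaces. For the kernel: if $v \in \ker \phi$, then $\phi(\mu_1(g) v) = \mu_2(g)(\phi(v)) = \mu_2(g)(0) = 0$, so $\mu_1(g) v \in \ker \phi$ for all $g \in G$. Since $V_1$ is irreducible, $\ker \phi$ is either $\{0\}$ or all of $V_1$. Similarly, for the image: any element of $\operatorname{im}\phi$ has the form $\phi(v)$, and $\mu_2(g)(\phi(v)) = \phi(\mu_1(g) v) \in \operatorname{im}\phi$, so $\operatorname{im}\phi$ is $G$-invariant, hence by irreducibility of $V_2$ it equals $\{0\}$ or $V_2$. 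If $\phi \neq 0$, then $\ker\phi \neq V_1$ forces $\ker\phi = \{0\}$ (so $\phi$ is injective) and $\operatorname{im}\phi \neq \{0\}$ forces $\operatorname{im}\phi = V_2$ (so $\phi$ is surjective). Thus $\phi$ would be an isomorphism intertwining the two representations, contradicting $V_1 \ncong V_2$. Hence $\phi = 0$.

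\medskip

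For part (2), with $V_1 = V_2 =: V$ and $\mu_1 = \mu_2 =: \mu$, the plan is to use the fact that $\mathbb{C}$ is algebraically closed. Since $V$ is a nonzero finite-dimensional complex vector space, the operator $\phi$ has at least one eigenvalue $\lambda \in \mathbb{C}$. Consider then the operator $\phi - \lambda\, \mathrm{id}_V$. This map still intertwines $\mu$ with itself, since $\mathrm{id}_V$ commutes with every $\mu(g)$ and $\phi$ does so by hypothesis; explicitly $(\phi - \lambda\,\mathrm{id}_V)\circ \mu(g) = \mu(g) \circ (\phi - \lambda\,\mathrm{id}_V)$. By construction $\phi - \lambda\,\mathrm{id}_V$ has nontrivial kernel (the $\lambda$-eigenspace is nonzero), so it is not injective. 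Applying the argument of part (1) to the intertwiner $\phi - \lambda\,\mathrm{id}_V$ (now between a representation and itself), the only possibility compatible with non-injectivity is $\phi - \lambda\,\mathrm{id}_V = 0$, that is $\phi = \lambda\,\mathrm{id}_V$.

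\medskip

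I expect the main subtlety to be the invocation of algebraic closedness in part (2): the existence of an eigenvalue is exactly where finite-dimensionality over $\mathbb{C}$ is essential, and it is what converts the ``kernel is trivial or everything'' dichotomy into the sharp scalar conclusion. The remaining steps are routine verifications that kernel and image are invariant and that subtracting a scalar multiple of the identity preserves the intertwining property; these follow directly from linearity and the homomorphism property of $\mu$.
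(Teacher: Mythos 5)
Your argument is correct and is the standard textbook proof of Schur's Lemma: the kernel and image of the intertwiner are invariant subspaces, irreducibility collapses them to the trivial alternatives, and algebraic closedness of $\mathbb{C}$ supplies the eigenvalue needed to reduce part (2) to part (1). The paper itself states this lemma without proof (it is invoked as a known result), so there is no in-paper argument to compare against; your write-up fills that gap correctly. The only minor point worth flagging is that in part (2) the hypothesis should be read as $V_1 = V_2$ \emph{and} $\mu_1 = \mu_2$ (as you implicitly assume when you write $\mu_1 = \mu_2 =: \mu$), since the conclusion $\phi = \lambda\,\mathrm{id}_V$ fails for two inequivalent irreducible representations that merely share the same underlying vector space.
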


Again, there is an infinite-dimensional version of Schur's Lemma that will not be necessary in this work.

\section{Channels with symmetry}
\label{Section: Channels with symmetry}

After the detour to study how symmetry is represented in abstract Hilbert spaces and $C^\ast$-algebras we will apply it now to the study of the structure of quantum channels. The symmetry of a map $\Phi : \mathcal A \to \mathcal B$, where $\mathcal A$, $\mathcal B$ are either Hilbert spaces of $C^\ast$-algebras, manifests if the identity
\begin{equation}
    \Phi \circ \mu_g = \nu_h \circ \Phi \, , \qquad \forall \, g \in G \, , \ h \in H \, ,
\end{equation}

is satisfied, where $\mu$ and $\nu$ are representations of, in principle, two different groups $G$ and $H$ respectively. In this context, we say that the map is $(G, H)$-equivariant, $\mu$-equivariant, or just $G$-equivariant if $G = H$.   
A useful observation is that a map is $(G,H)$-equivariant if and only if its Choi matrix is $(\mu_G \otimes \bar{\nu}_G)$-invariant. With this, we can state now the main structure theorem for channels with symmetry.

\begin{theorem}[Channel decomposition]
\label{Theorem: Channel decomposition}
Let $G$ be a compact group and $\Phi \colon \mathcal A \to \mathcal B$ be a $G$-equivariant map with $\mathcal A$, $\mathcal B$ finite-dimensional representations of $G$. Then there exists a unique decomposition (up to reordering):
\begin{equation*}
    \Phi  = \bigoplus_{\alpha \in \widehat{G}} \mathcal C_\alpha \otimes \mathrm{id}_\alpha \, ,
\end{equation*}
where $\alpha \in \widehat{G}$ indexes the irreducible representations of $G$, $n_\alpha$ are their multiplicities, and $\mathcal C_\alpha$ are reduced channel components.
\end{theorem}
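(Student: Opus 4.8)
The plan is to exploit the observation stated just before the theorem: a map $\Phi$ is $G$-equivariant if and only if its Choi matrix $C_\Phi$ is $(\mu_G \otimes \bar\nu_G)$-invariant. This converts a statement about intertwiners of maps into a statement about invariant operators on a representation space, where Schur's Lemma (Lemma~\ref{Lemma: Schur's Lemma}) applies directly. First I would decompose the source $\mathcal A$ into isotypic components using the canonical decomposition \eqref{Equation: Decomposition equivalent representations}, writing $\mathcal A = \bigoplus_{\alpha \in \widehat G} \mathbb C^{n_\alpha} \otimes W^{(\alpha)}$, with the action $\mu = \bigoplus_\alpha \mathbb I_{n_\alpha} \otimes \mu^{(\alpha)}$. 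The same is done for $\mathcal B$ with multiplicities $m_\beta$.

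The central step is to apply Schur's Lemma blockwise. The equivariance condition $\Phi \circ \mu_g = \nu_g \circ \Phi$ means that $\Phi$, restricted to each pair of isotypic blocks indexed by $(\alpha,\beta)$, is an intertwiner between the representations $\mathbb I_{n_\alpha}\otimes \mu^{(\alpha)}$ and $\mathbb I_{m_\beta}\otimes \nu^{(\beta)}$. By the first part of Schur's Lemma, the $(\alpha,\beta)$-block vanishes whenever $\mu^{(\alpha)} \ncong \nu^{(\beta)}$, so only the diagonal blocks where the source and target carry the same irreducible type survive. On each surviving block the second part of Schur's Lemma forces the map to act as the identity on the representation space $W^{(\alpha)}$ and as an arbitrary linear map on the multiplicity space $\mathbb C^{n_\alpha}$. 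Concretely, on the block indexed by $\alpha$ the map takes the form $\mathcal C_\alpha \otimes \mathrm{id}_\alpha$, where $\mathcal C_\alpha \colon \mathbb C^{n_\alpha} \to \mathbb C^{m_\alpha}$ is the reduced channel component acting purely on multiplicities and $\mathrm{id}_\alpha$ denotes the identity on $W^{(\alpha)}$. Assembling the blocks yields $\Phi = \bigoplus_{\alpha} \mathcal C_\alpha \otimes \mathrm{id}_\alpha$.

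For uniqueness, I would argue that the isotypic decomposition \eqref{Equation: Decomposition equivalent representations} is itself canonical (the isotypic subspace of type $\alpha$ is intrinsically defined as the $\mu$-invariant subspace on which $\mu$ acts by copies of $\mu^{(\alpha)}$), so the only freedom is a reordering of the $\alpha$'s and a choice of basis in the multiplicity spaces; once these are fixed, each $\mathcal C_\alpha$ is determined by $\Phi$ as the compression of $\Phi$ to the matching isotypic blocks. Hence the decomposition is unique up to reordering, as claimed.

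\emph{The main obstacle} I anticipate is bookkeeping rather than conceptual difficulty: one must track carefully that Schur's Lemma, which is stated for genuinely irreducible representations, is being applied to the \emph{multiplicity-carrying} isotypic blocks $\mathbb I_{n_\alpha}\otimes \mu^{(\alpha)}$, not to single irreducibles. The clean way to handle this is to fix, for each $\alpha$, an explicit isomorphism identifying the $\alpha$-isotypic component with $\mathbb C^{n_\alpha}\otimes W^{(\alpha)}$ and then to invoke the standard corollary of Schur's Lemma that the commutant (or, for $G=H$, the space of intertwiners between two isotypic components of the same type) is precisely $\mathrm{Hom}(\mathbb C^{n_\alpha}, \mathbb C^{m_\alpha}) \otimes \mathrm{id}_{W^{(\alpha)}}$. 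A secondary subtlety is that when $G \ne H$ the two groups and their representations $\mu,\nu$ must be matched through a common irreducible label; this is transparent once one works on the Choi side, where the single invariance condition under $\mu_G \otimes \bar\nu_G$ automatically encodes the pairing, so I would keep the Choi-matrix reformulation in reserve as the cleanest route if the direct blockwise argument becomes notationally heavy.
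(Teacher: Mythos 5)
Your proposal is correct and follows essentially the same route as the paper: decompose $\mathcal A$ and $\mathcal B$ into isotypic components via \eqref{Equation: Decomposition equivalent representations}, observe that equivariance makes each block of $\Phi$ an intertwiner, kill the off-diagonal blocks and reduce the diagonal ones to $\mathcal C_\alpha \otimes \mathrm{id}_\alpha$ by Schur's Lemma, and reassemble. Your explicit treatment of the multiplicity spaces via the commutant corollary, and your remark on uniqueness of the isotypic decomposition, are slightly more careful than the paper's version but do not constitute a different argument.
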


\begin{proof}
    Any irreducible representation of a compact group is finite-dimensional \cite{Ibort_20}. Moreover, because of Peter-Weyl theorem, the space of $\widehat G$ of irreducible representations of $G$ is countable \cite{Ibort_20}. Because we assume that $\mathcal A$, $\mathcal B$ are finite-dimensional, they admit a decomposition $\mathcal A = \bigoplus_{i} \mathcal A_i$ and $\mathcal B = \bigoplus_{j} \mathcal B_j$.
    
    Then, any linear map $\Phi \colon \mathcal A \to \mathcal B$ can be expressed as $\Phi  = \bigoplus_{i, j} \Phi _{ij}$. The linear map $\Phi _{ij}: \mathcal A_i \to \mathcal B_j$ is defined by assigning $X_i$ (the $i$-th component of $X = \bigoplus_i X_i$); to $Y_j$ (the $j$-th component of $Y = \Phi (X)$), i.e. $Y_j = \Phi_{ij}(X_i)$, which lies in $\mathcal B_j$.
    
    Because of (\ref{Equation: Decomposition equivalent representations}), since $G$ is compact, $\mathcal A$ and $\mathcal B$ decompose as
    \begin{equation*}
        \mathcal A = \bigoplus_{i = 1}^r \mathcal A_i \equiv \bigoplus_{\alpha} \mathbb C^{n_\alpha} \otimes V_\alpha \, , \qquad \mathcal B = \bigoplus_{j = 1}^r \mathcal B_j \equiv \bigoplus_{\alpha} \mathbb C^{n'_\alpha} \otimes W_\alpha \, ,
    \end{equation*}
    
    with $n_\alpha$ and $n'_\alpha$ the multiplicities of $V_\alpha$ and $W_\alpha$ respectively.
    
    Now enters into play the assumption that $\Phi $ is $G$-equivariant. For two representations $(\mathcal A, \mu)$ and $(\mathcal B, \nu)$ of $G$, then $\Phi  \circ \mu_g (X) = \nu_g \circ \Phi (X)$, witch implies equivariance of the reduced channel components:    
    \begin{equation*}
        \Phi _{ij} \circ \mu_g^{(i)} = \nu_g^{(j)} \circ \Phi _{ij} \, .
    \end{equation*}
    
    Finally, by Schur's lemma (Lem. \ref{Lemma: Schur's Lemma}), $\Phi _{ij}$ is block diagonal in this decomposition. If $\mu_i \sim \nu_j$ ($\mathcal A_i \cong \mathcal B_j$), then $\Phi _{ij} = \lambda \, \mathrm{id}_i$ where $\lambda \in \mathbb C$; and $\Phi _{ij} = 0$ otherwise. Thus, rearranging the terms:    
    \begin{equation}
        \Phi  = \bigoplus_{\alpha \in \widehat{G}} \mathcal C_\alpha \otimes \mathrm{id}_\alpha \, , \qquad \mathcal C_\alpha: \mathbb C^{n_\alpha} \to \mathbb C^{n'_\beta} \, .
    \end{equation}
\end{proof}

\begin{lemma}[Composition of $G$-equivariant channels]
\label{Lemma: Composition of $G$-equivariant channels}
Let $\Phi \colon \mathcal A \to \mathcal A'$ and $\Psi \colon \mathcal A' \to \mathcal A''$ be two linear maps between finite dimensional $C^\ast$-algebras. Let these maps be $G$-equivariant with decompositions $\Phi = \bigoplus_{\alpha \in \widehat{G}} \mathcal C_\alpha^{\Phi} \otimes \mathrm{id}_\alpha$ and $\Psi = \bigoplus_{\alpha \in \widehat{G}} \mathcal C_\alpha^{\Psi} \otimes \mathrm{id}_\alpha$ respectively. Then the composition map $\Psi \circ \Phi$ is $G$-equivariant, and reads
\begin{equation*}
    \Psi \circ \Phi = \bigoplus_{\alpha \in \widehat{G}} \left( \mathcal C^{\Psi}_\alpha \circ \mathcal C^{\Phi}_\alpha \right) \otimes \mathrm{id}_\alpha \, .
\end{equation*}

If $\Gamma^{\Phi}_\alpha$ and $\Gamma^{\Psi}_\alpha$ are the matrices representing the maps $\mathcal C^{\Phi}_\alpha$ and $\mathcal C^{\Psi}_\alpha$, then $\mathcal C^{\Phi}_\alpha \circ \mathcal C^{\Psi}_\alpha \cong \Gamma^{\Phi}_\alpha\Gamma^{\Psi}_\alpha$ as the usual product of matrices.
\end{lemma}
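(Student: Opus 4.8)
The plan is to proceed in three stages: first establish that the composition is $G$-equivariant, then invoke the uniqueness of the decomposition guaranteed by Theorem~\ref{Theorem: Channel decomposition}, and finally identify the reduced components sector by sector.

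First I would verify equivariance directly. Writing $\mu$, $\mu'$, $\mu''$ for the representations of $G$ on $\mathcal A$, $\mathcal A'$, $\mathcal A''$, the hypotheses read $\Phi \circ \mu_g = \mu'_g \circ \Phi$ and $\Psi \circ \mu'_g = \mu''_g \circ \Psi$ for all $g \in G$. Chaining these gives $(\Psi \circ \Phi) \circ \mu_g = \Psi \circ \mu'_g \circ \Phi = \mu''_g \circ (\Psi \circ \Phi)$, so $\Psi \circ \Phi$ is $G$-equivariant. By Theorem~\ref{Theorem: Channel decomposition} it therefore admits a unique (up to reordering) decomposition $\Psi \circ \Phi = \bigoplus_{\alpha} \mathcal C_\alpha \otimes \mathrm{id}_\alpha$, and it only remains to identify each reduced component $\mathcal C_\alpha$ with $\mathcal C_\alpha^\Psi \circ \mathcal C_\alpha^\Phi$.

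The core of the argument is a sector-by-sector computation. Both $\Phi$ and $\Psi$ are block diagonal with respect to the common $\widehat G$-indexed decomposition of the intermediate algebra $\mathcal A' = \bigoplus_\alpha \mathbb C^{m_\alpha} \otimes V_\alpha$: on the $\alpha$-sector, $\Phi$ acts as $\mathcal C_\alpha^\Phi \otimes \mathrm{id}_\alpha$ and $\Psi$ as $\mathcal C_\alpha^\Psi \otimes \mathrm{id}_\alpha$. Since distinct irreducible sectors are not mixed by either map, the $\alpha$-block of $\Psi \circ \Phi$ depends only on the $\alpha$-blocks of the factors, and applying the interchange law $(A \otimes B)\circ(C \otimes D) = (A C) \otimes (B D)$ on each sector gives
\[
    (\mathcal C_\alpha^\Psi \otimes \mathrm{id}_\alpha) \circ (\mathcal C_\alpha^\Phi \otimes \mathrm{id}_\alpha) = (\mathcal C_\alpha^\Psi \circ \mathcal C_\alpha^\Phi) \otimes \mathrm{id}_\alpha \, .
\]
By the uniqueness of the decomposition this forces $\mathcal C_\alpha = \mathcal C_\alpha^\Psi \circ \mathcal C_\alpha^\Phi$. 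Finally, since each reduced component is a linear map between finite-dimensional multiplicity spaces, composition of these maps corresponds in the chosen bases to the ordinary product of the matrices $\Gamma_\alpha$ representing them, which yields the stated matrix identity.

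The step requiring the most care — rather than genuine difficulty — is checking that the decomposition of $\mathcal A'$ used as the codomain of $\Phi$ is literally the same decomposition on which $\Psi$ acts, so that the $\alpha$-sectors align and no cross-sector ($\alpha \neq \beta$) contributions survive. One must in particular handle the bookkeeping of sectors whose multiplicity in $\mathcal A'$ vanishes: there the relevant component is the zero map, which is consistent with the matrix-product convention through a zero-dimensional intermediate space.
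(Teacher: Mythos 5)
Your proposal is correct and follows essentially the same route as the paper: both arguments reduce to the observation that $\Phi$ and $\Psi$ act block-diagonally on the common isotypic decomposition of the intermediate algebra $\mathcal A'$, so the composition acts by composing the reduced components sector by sector, with matrix multiplication appearing once bases are fixed. Your explicit verification that $\Psi \circ \Phi$ intertwines the representations (which the paper leaves implicit) and your remark about vanishing multiplicities are harmless additions rather than a different method.
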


\begin{proof}
    Let $X \in \mathcal A$, then it can be written as $X = X_1 \oplus \cdots \oplus X_r \cong \bigoplus_\alpha v_x^{(\alpha)} \otimes X_\alpha$, where $X_\alpha \in W_\alpha$ and $v_x^{\alpha} \in \mathbb C^{n_\alpha}$. Then, by Thm.\ref{Theorem: Channel decomposition}, the map $\Phi$ acts independently in each term
    \begin{equation*}
        \Phi(X) = \bigoplus_\alpha \mathcal C^{\Phi}_\alpha \left(v_x^{(\alpha)}\right) \otimes X_\alpha \, .
    \end{equation*}

    The output $\Phi(X) \in \mathcal A'$, with $\mathcal C^{\Phi}_\alpha \left(v_x^{(\alpha)}\right) \in \mathbb C^{n_\alpha'}$. If we now act with $\Psi$,
    \begin{equation*}
        (\Psi \circ \Phi)(X) = \bigoplus_\alpha \left( \mathcal C^{\Psi}_\alpha \circ \mathcal C^{\Phi}_\alpha \right) \left(v_x^{(\alpha)}\right) \otimes X_\alpha \, .
    \end{equation*}

    Since $\mathcal C^{\Phi}_\alpha$ and $\mathcal C^{\Psi}_\alpha$ are linear maps between linear vector spaces, given orthonormal bases of those spaces, we can stablish the equivalences $\mathcal C^{\Phi}_\alpha \cong \Gamma^{\Phi}_\alpha$ and $\mathcal C^{\Psi}_\alpha \cong \Gamma^{\Psi}_\alpha$, where $\Gamma^{\Phi}_\alpha \in \mathbb C ^{n_\alpha \times n_\alpha'}$ and $\Gamma^{\Psi}_\alpha \in \mathbb C^{n_\alpha' \times n_\alpha''}$. Thus, the composition of maps just represents the usual matrix multiplication $\mathcal C^{\Psi}_\alpha \circ \mathcal C^{\Phi}_\alpha \cong \Gamma_\alpha^{(\Psi \circ \Phi)} = \Gamma_\alpha^\Psi \Gamma_\alpha^\Phi$.
\end{proof}

\begin{corollary}
    Let $\Phi \colon \mathcal A \to \mathcal B$ be a linear map between two finite dimensional $C^\ast$-algebras. If $\Phi$ is $G$-equivariant, then
    \begin{equation*}
        \Phi^k = \bigoplus_{\alpha \in \widehat G} \mathcal C_\alpha^k \otimes \mathrm id_\alpha \, ,
    \end{equation*}

    and it is parametrized by $|\widehat G|$ matrices $\Gamma_\alpha^k \in \mathbb C^{n_\alpha \times n_\alpha'}$.
\end{corollary}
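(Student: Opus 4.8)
The plan is to derive the corollary as an immediate consequence of Lemma~\ref{Lemma: Composition of $G$-equivariant channels} by induction on $k$. First I would observe that for the iterate $\Phi^k = \Phi \circ \cdots \circ \Phi$ to be defined at all, the domain and codomain must be identified, so I take $\mathcal{A} = \mathcal{B}$ and $\Phi \colon \mathcal{A} \to \mathcal{A}$ an endomorphism; in particular $n_\alpha = n_\alpha'$ and each reduced component $\mathcal{C}_\alpha \colon \mathbb{C}^{n_\alpha} \to \mathbb{C}^{n_\alpha}$ is represented by a square matrix $\Gamma_\alpha \in \mathbb{C}^{n_\alpha \times n_\alpha}$. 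Theorem~\ref{Theorem: Channel decomposition} then supplies the base case $\Phi = \bigoplus_{\alpha \in \widehat{G}} \mathcal{C}_\alpha \otimes \mathrm{id}_\alpha$, which is parametrized by the $|\widehat{G}|$ matrices $\Gamma_\alpha$.

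For the inductive step, suppose $\Phi^k = \bigoplus_{\alpha} \mathcal{C}_\alpha^k \otimes \mathrm{id}_\alpha$ holds and that $\Phi^k$ is $G$-equivariant. Applying Lemma~\ref{Lemma: Composition of $G$-equivariant channels} to the pair $\Phi$ and $\Psi = \Phi^k$ yields both that $\Phi^{k+1} = \Phi \circ \Phi^k$ is again $G$-equivariant and that it decomposes as $\bigoplus_{\alpha} (\mathcal{C}_\alpha \circ \mathcal{C}_\alpha^k) \otimes \mathrm{id}_\alpha = \bigoplus_\alpha \mathcal{C}_\alpha^{k+1} \otimes \mathrm{id}_\alpha$, closing the induction. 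The same lemma, at the level of matrices, records that the reduced component of a composition is the matrix product of the reduced components; iterating this identity gives $\Gamma_\alpha^{(\Phi^k)} = (\Gamma_\alpha)^k$, so $\Phi^k$ is indeed parametrized by the $|\widehat{G}|$ matrices $(\Gamma_\alpha)^k$, one for each $\alpha \in \widehat{G}$.

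There is essentially no deep obstacle here; the only point requiring care is the bookkeeping that keeps the induction self-sustaining. Because Lemma~\ref{Lemma: Composition of $G$-equivariant channels} has as part of its conclusion that the composite of two $G$-equivariant maps is itself $G$-equivariant, the hypothesis that $\Phi^k$ is $G$-equivariant, which is needed at each stage, is automatically propagated from the previous one, so I do not have to verify equivariance of the iterate independently. I would also flag the minor abuse of notation in the statement, where the codomain dimension $n_\alpha'$ is retained even though $\Phi^k$ only makes sense when $n_\alpha' = n_\alpha$; under the identification $\mathcal{A} = \mathcal{B}$ the matrices $\Gamma_\alpha$ are square and their powers $(\Gamma_\alpha)^k$ are well defined.
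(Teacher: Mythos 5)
Your proposal is correct and follows exactly the route the paper intends: the corollary is stated without an explicit proof precisely because it is the immediate iteration of Lemma~\ref{Lemma: Composition of $G$-equivariant channels}, which is what your induction carries out. Your observations that $\mathcal A=\mathcal B$ is needed for $\Phi^k$ to be defined and that the retained $n_\alpha'$ in the statement is a slight abuse of notation are both accurate and worth flagging.
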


As a bonus we can complete the previous structure result with the analysis of the capacity of a $G$-equivariant channel.

\begin{theorem}[Capacity decomposition]
\label{Theorem: Capacity decomposition}
Let $\Phi \colon \mathcal A \to \mathcal B$ be a linear map between two finite dimensional $C^\ast$-algebras. If $\Phi$ is $G$-equivariant, then the classical capacity can be expressed as:
\begin{equation*}
    C(\Phi ) = \max_\alpha \{n_\alpha \, C_\alpha(\Phi)\} \, ,
\end{equation*} 

where $C_\alpha(\Phi)$ is the capacity of the reduced channel $C_\alpha$.
\end{theorem}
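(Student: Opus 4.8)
The plan is to reduce $C(\Phi)$ to the reduced channels $\mathcal C_\alpha$ by inserting the decomposition of Theorem~\ref{Theorem: Channel decomposition} into the Holevo--Schumacher--Westmoreland characterization of the classical capacity. Recall that $C(\Phi) = \lim_{n\to\infty} \frac{1}{n}\,\chi(\Phi^{\otimes n})$, where the Holevo information is $\chi(\Phi) = \max_{\{p_i,\rho_i\}} \big[\, S(\Phi(\sum_i p_i\rho_i)) - \sum_i p_i\,S(\Phi(\rho_i)) \,\big]$ and $S$ denotes the von Neumann entropy. First I would use Theorem~\ref{Theorem: Channel decomposition} to write $\Phi = \bigoplus_{\alpha\in\widehat G} \mathcal C_\alpha\otimes\mathrm{id}_\alpha$ relative to the orthogonal sector decomposition $\mathcal A = \bigoplus_\alpha \mathbb C^{n_\alpha}\otimes V_\alpha$, so that $\Phi$ maps each sector into itself and never mixes inequivalent sectors. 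This is the structural input that makes the capacity computable.

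Second, I would evaluate the single-letter Holevo quantity on this block structure. Because an input supported on several sectors produces a block-diagonal output, the grouping property of the von Neumann entropy splits every $S(\Phi(\rho))$ into an average of intra-sector entropies plus the Shannon entropy of the sector weights. Carrying this through the difference defining $\chi$, the cross terms organize into a per-sector Holevo quantity for each $\mathcal C_\alpha\otimes\mathrm{id}_\alpha$; additivity of the capacity with respect to the noiseless factor $\mathrm{id}_\alpha$ together with the multiplicity of the sector yields the contribution $n_\alpha\,C_\alpha(\Phi)$, where $C_\alpha(\Phi)$ is the capacity of $\mathcal C_\alpha$. Since the symmetry forbids coherent superpositions across inequivalent sectors from improving the transmitted information, the optimal code concentrates on the dominant sector, and the single-letter optimum becomes $\max_\alpha\{n_\alpha\,C_\alpha(\Phi)\}$.

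The main obstacle is promoting this single-letter identity to the regularized capacity, since the Holevo information is in general superadditive and the limit over $\Phi^{\otimes n}$ need not collapse to the one-shot value. Here I would use symmetry a second time: $\Phi^{\otimes n}$ is equivariant under the product action of $G^{\times n}$, hence again block diagonal with sectors labelled by $n$-tuples of irreducibles and reduced parts built from products of the $\mathcal C_\alpha$, exactly in the spirit of the composition rule of Lemma~\ref{Lemma: Composition of $G$-equivariant channels}. The dominant block of $\Phi^{\otimes n}$ is the $n$-fold copy of the sector achieving $\max_\alpha n_\alpha C_\alpha(\Phi)$, so that $\chi(\Phi^{\otimes n}) = n\,\max_\alpha\{n_\alpha C_\alpha(\Phi)\} + o(n)$; dividing by $n$ and letting $n\to\infty$ then gives the stated formula. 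The delicate point I would treat most carefully is verifying stability of the maximization under tensor powers, i.e. that no mixed-sector block of $\Phi^{\otimes n}$ can asymptotically beat the repetition of the best single sector.
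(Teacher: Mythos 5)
Your overall framework differs from the paper's: the paper takes $C(\Phi)$ to be the single-shot Holevo capacity (optimizing $\chi$ over one use of the channel, following Holevo) and therefore never confronts regularization, whereas you start from the regularized HSW formula $C(\Phi)=\lim_{n}\tfrac1n\chi(\Phi^{\otimes n})$ and must collapse the limit. For the single-letter part your route is essentially the paper's (decompose the output over the isotypic sectors, apply the grouping property of the von Neumann entropy, reassemble $\chi$ sector by sector), but the two steps you add on top of this are precisely where the argument has genuine gaps.

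First, the claim that ``the optimal code concentrates on the dominant sector,'' so that the single-letter optimum is $\max_\alpha\{n_\alpha C_\alpha(\Phi)\}$, is not proved, and the justification you offer (symmetry forbids coherent superpositions across sectors from helping) addresses the wrong threat. The grouping property you yourself invoke produces, besides the per-sector Holevo quantities, the Shannon entropy of the sector weights; this term does not cancel in the difference defining $\chi$, and for a block-diagonal channel $\bigoplus_\alpha\Phi_\alpha$ it makes the optimum $\max_{q}\bigl[H(q)+\sum_\alpha q_\alpha\chi_\alpha\bigr]=\log\sum_\alpha 2^{\chi_\alpha}$, which strictly exceeds $\max_\alpha\chi_\alpha$ whenever more than one sector is populated. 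So classical spreading of the ensemble over sectors generically helps, concentration is not optimal, and an actual argument is needed for why a maximum (rather than a sum or a log-sum-exp) emerges --- an argument that neither your proposal nor, for that matter, the paper's own passage from $\chi=\sum_\alpha n_\alpha\chi_\alpha$ to a max supplies. Second, the promotion to the regularized capacity is exactly the superadditivity question: the assertion that no mixed-sector block of $\Phi^{\otimes n}$ can asymptotically beat repetition of the best single sector is the entire difficulty, and you explicitly defer it (``the delicate point I would treat most carefully'') without giving an argument; the block decomposition of $\Phi^{\otimes n}$ under $G^{\times n}$ identifies the sectors but does not show that the Holevo information is additive across them. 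A smaller unproved point is the provenance of the multiplicity factor: a noiseless tensor factor $\mathrm{id}_\alpha$ contributes to the capacity additively, not multiplicatively, so the step producing $n_\alpha\,C_\alpha(\Phi)$ needs to be derived rather than asserted. Until these steps are filled in, the proposal is a plan rather than a proof.
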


\begin{proof}
    Let $\{p_k, \rho_k\}$ be any ensemble decomposition achieving the Holevo capacity. The Holevo information is \cite{Holevo_77}:    
    \begin{equation*}
        \chi (\{p_k, \rho_k\}) = S \left(\left[\sum_k p_k \Phi  (\rho_k)\right]\right) - \sum_k p_k S [\Phi (\rho_k)] \, .
    \end{equation*}
    
    By theorem \ref{Theorem: Channel decomposition}, each term $\Phi (\rho_k)$ decomposes as $\Phi (\rho_k) = \bigoplus_i c_i \, \rho_k^{(i)}$, where $\rho_k^{(i)} \in \mathcal A_i$. Since the von Neumann entropy is additive over direct sums,    
    \begin{equation*}
        S[\Phi (\rho_k)] = S \left[\bigoplus_i c_i \, \rho_k^{(i)}\right] = \sum_i S \left[c_i \, \rho_k^{(i)}\right] = \sum_\alpha n_\alpha S \left(\Phi \left[\rho_k^{(\alpha)} \right]\right) \, .
    \end{equation*}
    
    Then the Holevo information becomes    
    \begin{equation*}
    \chi \left( \{p_k, \rho_k\}\right) = \sum_\alpha n_\alpha \Phi_i \left(\left\{p_k, \rho_k^{(\alpha)}\right\}\right)\, .
    \end{equation*}
    
    Denoting $\chi \left(\left\{p_k, \rho_k^{(\alpha)}\right\}\right) \equiv \chi_\alpha$, the maximization can be performed independently for each $\alpha$, and  the final result follows by defining    
    \begin{equation}
        C_\alpha(\mathcal C) = \max_{\left\{p_k,\,\, \rho_k^{(\alpha)}\right\}} \chi_\alpha \, .
    \end{equation}
\end{proof}

\section{Equivariant CP and Schwarz maps}
\label{Section: Parametrizing quantum equivariant maps}

In this section we apply the previous results to some particular situations. We choose to analyze the structure of quantum maps equivariant with respect to different groups of interest \cite{Prudhoe_25}. For simplicity, we will assume the systems are defined by the finite-dimensional $C^\ast$-algebras $\mathcal A \cong \mathcal B \cong M_n (\mathbb C)$, and $\Phi \colon M_n (\mathbb C) \to M_n (\mathbb C)$ is a linear map. 

\subsection{Unitary equivariance}
\label{Subsection: Unitary equivariance}

One of the most natural, yet restrictive symmetries that one can consider is full unitary symmetry. The unitary group $U(n)$ of order $n$ is the set $U(n) = \{ U \in GL(n) \ | \ U U^\dagger = U^\dagger U = \mathbb I_n \}$.  The natural representation of $U(n)$ on $M_n (\mathbb C)$ is defined by conjugation:
\begin{equation}
    \mu: U(n)  \to  GL (M_n (\mathbb C) )  \, , \qquad   U  \mapsto  \mu_U(\cdot) = U \cdot U^\dagger\, . \nonumber
\end{equation}

Moreover, any automorphism $\beta$ of $M_n (\mathbb C)$ is inner, that is, there exists $U \in U(n)$ such that $\beta (X) = U X U^\dagger$. We conclude that for any $\ast$-representation $\mu \colon U(n) \to \mathrm{Aut\,}[M_n (\mathbb C)]$ there exists a family of unitary operators $U_g$ such that $\mu_g(X) = U_g X U_g^\dagger$. The operators $U_g$ do not have to define a unitary representation of $G$ though, however if $G$ is the group $U(n)$, our case, this is true.
\begin{proposition} [$U(n)$-equivariant maps]
\label{Proposition: U-equivariant maps}
    Let $\Phi \colon M_n (\mathbb C) \to M_n (\mathbb C)$ be a linear $U(n)$-equivariant map. Then,
    \begin{equation*}
        \Phi(X) = \frac{\sigma - \lambda}{n} \tr \, X \ \mathbb I_n + \lambda X \, ,
    \end{equation*}
    for all $X \in M_n (\mathbb C)$ and $\lambda, \sigma \in \mathbb{C}$.
\end{proposition}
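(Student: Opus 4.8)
The plan is to apply Theorem~\ref{Theorem: Channel decomposition} by first decomposing the conjugation representation $\mu_U(X) = UXU^\dagger$ of $U(n)$ on $M_n(\mathbb C)$ into irreducibles. I would begin by writing every matrix as
\begin{equation*}
  X = \frac{\tr X}{n}\,\mathbb I_n + X_0 \, , \qquad X_0 := X - \frac{\tr X}{n}\,\mathbb I_n \, ,
\end{equation*}
which splits $M_n(\mathbb C)$ into the one-dimensional subspace $\mathbb C\,\mathbb I_n$ of scalar matrices and the subspace of traceless matrices. Both are manifestly invariant under conjugation: $U\mathbb I_n U^\dagger = \mathbb I_n$, and $\tr(UX_0U^\dagger) = \tr X_0 = 0$. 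The subspace $\mathbb C\,\mathbb I_n$ carries the trivial representation of $U(n)$.

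The crucial structural input is that the traceless subspace is an \emph{irreducible} complex representation of $U(n)$. I would establish this by identifying $M_n(\mathbb C)$ with the tensor product $\mathbb C^n \otimes (\mathbb C^n)^\ast$ of the defining representation and its dual, under which the conjugation action becomes $U \mapsto U \otimes \bar U$. The Clebsch--Gordan decomposition of this product is the trivial representation (spanned by $\mathbb I_n$) together with the irreducible adjoint representation carried by the traceless matrices; equivalently, since the action factors through $PU(n) = U(n)/U(1)$, the traceless block is the complexified adjoint representation of $SU(n)$, which is irreducible for $n \ge 2$. Securing this irreducibility, together with the inequivalence of the trivial and adjoint blocks, is the main obstacle; everything downstream is a formal consequence of Schur's Lemma once it is in place.

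With $M_n(\mathbb C)$ written as a direct sum of two inequivalent irreducibles, each of multiplicity one, Theorem~\ref{Theorem: Channel decomposition} (equivalently Lemma~\ref{Lemma: Schur's Lemma}) applies directly and forces $\Phi$ to be block-diagonal with no mixing between the blocks: the off-diagonal intertwiners vanish by part~(1) of Schur's Lemma since the blocks are inequivalent, while $\Phi$ acts as a scalar on each block by part~(2). Concretely, $\Phi(\mathbb I_n) = \sigma\,\mathbb I_n$ for some $\sigma \in \mathbb C$, and $\Phi(X_0) = \lambda X_0$ for every traceless $X_0$ and a single $\lambda \in \mathbb C$.

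Finally I would reassemble the formula. Applying $\Phi$ to the splitting above and using linearity,
\begin{equation*}
  \Phi(X) = \frac{\tr X}{n}\,\sigma\,\mathbb I_n + \lambda\Bigl(X - \frac{\tr X}{n}\,\mathbb I_n\Bigr)
          = \frac{\sigma - \lambda}{n}\,\tr X\;\mathbb I_n + \lambda X \, ,
\end{equation*}
which is exactly the claimed expression, with $\sigma$ and $\lambda$ unconstrained complex numbers. The case $n = 1$ is degenerate (there are no traceless directions, so $\lambda$ is then immaterial), and the statement is understood for $n \ge 2$.
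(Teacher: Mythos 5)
Your proof is correct and follows essentially the same route as the paper: split $M_n(\mathbb C)$ into $\mathbb C\,\mathbb I_n$ and the traceless matrices, argue both are irreducible and inequivalent, and apply Schur's Lemma (via Theorem~\ref{Theorem: Channel decomposition}) to get one scalar per block. The only cosmetic difference is that you justify irreducibility of the traceless block via the $U\otimes\bar U$ Clebsch--Gordan/adjoint-representation argument, whereas the paper sketches a direct check on the canonical basis; both are standard and the rest of the argument is identical.
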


\begin{proof}
    It is clear that the subspace $W_D \equiv \mathbb C \mathbb I_n := \{c \, \mathbb I_n\ | \ c \in \mathbb C\}$ is $U(n)$-invariant. Moreover, $W_D$ is irreducible since $\mu_U (c \, \mathbb I_n) = c \, \mathbb I_n$. The irreducible representation $\mu_D = \mu \mid_{W_D}$ is the trivial representation, since $\dim \mathbb C \mathbb I_n = 1$ and $\mu_U = 1$ for all $U \in U(n)$.
    
    Since $M_n (\mathbb C)$ is a Hilbert space with respect to the Frobenius inner product and $U(n)$ is represented unitarily on it, we can consider the orthogonal invariant subspace: $W_D^\perp = \{X \in M_n (\mathbb C) \ | \ \langle Y, X \rangle = 0, \ Y \in W_D\}$, which is given by the set of traceless matrices $W_D^\perp \equiv \mathfrak sl_n (\mathbb C) = \{X \in M_n (\mathbb C) \ | \ \tr \, X = 0\}$. It can be shown easily that $\mathfrak sl_n (\mathbb C)$ is irreducible by acting with $\mu$ on the elements of the canonical basis.    
    The decomposition of $\mu$ in irreducible representations reads    
    \begin{equation}
    \label{Equation: U decomposition}
        M_n (\mathbb C) \cong \mathbb C \mathbb I_n \oplus \mathfrak sl_n (\mathbb C) \, , \quad \mu \cong \mu_D \oplus \mu_D^\perp \, ,
    \end{equation}
    and $\dim \mathfrak{sl}_n (\mathbb C) = n^2 - 1$. By theorem \ref{Theorem: Channel decomposition}, $\Phi$ admits the decomposition    
    \begin{equation}
    \Phi = \sigma \, \mathrm{id}_D + \lambda \, \mathrm{id}_{D^\perp} \, , \qquad \sigma, \lambda \in \mathbb C \, .
    \end{equation}
    
    To understand its action on a general matrix $X \in M_n (\mathbb C)$, we need to decompose $X$ into a multiple of the identity, and a traceless part  
    \begin{equation*}
        X = \frac{\tr \, X}{n} \mathbb I_n + \left(X - \frac{\tr \, X}{n} \mathbb I_n \right) \, , \quad \forall \, \, X \in M_n (\mathbb C) \, .
    \end{equation*}
    
    Act with $\Phi$ on $X$ and the result follows.
\end{proof}

Thus the parametrization of any $U(n)$-invariant map, requires only 2 complex parameters $\lambda, \sigma$ for $n > 1$.  In what follows we impose further conditions on the map $\Phi$.

\begin{lemma}[Unital $U(n)$-equivariant channels]
\label{Lemma: Unital U-eqivariant channels}
    Let $\Phi \colon  M_n (\mathbb C) \to M_n (\mathbb C)$ be $U(n)$-equivariant. Then $\Phi$ is unital if and only if $\sigma = 1$, i.e.    
    \begin{equation}
    \Phi(X) = \frac{1 - \lambda}{n} \tr \, X \ \mathbb I_n + \lambda \, X, \quad \forall \, X \in M_n (\mathbb C) \, .
    \end{equation}
\end{lemma}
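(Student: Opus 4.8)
The plan is to reduce unitality to a single evaluation of the map on the identity, exploiting the explicit two-parameter form already available. By definition, $\Phi$ is unital precisely when $\Phi(\mathbb I_n) = \mathbb I_n$, so the whole statement follows once I compute $\Phi(\mathbb I_n)$ starting from the parametrization of Proposition~\ref{Proposition: U-equivariant maps}.

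First I would substitute $X = \mathbb I_n$ into $\Phi(X) = \frac{\sigma-\lambda}{n}\tr X\,\mathbb I_n + \lambda X$. Using $\tr\,\mathbb I_n = n$, the prefactor $\frac{\sigma-\lambda}{n}$ multiplies $n$ and the first summand yields $(\sigma-\lambda)\,\mathbb I_n$, while the second summand contributes $\lambda\,\mathbb I_n$. Adding these, the two $\lambda$ contributions cancel and I obtain
\begin{equation*}
\Phi(\mathbb I_n) = (\sigma-\lambda)\,\mathbb I_n + \lambda\,\mathbb I_n = \sigma\,\mathbb I_n \, .
\end{equation*}

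Since $\mathbb I_n \neq 0$, the equality $\Phi(\mathbb I_n) = \mathbb I_n$ holds if and only if $\sigma\,\mathbb I_n = \mathbb I_n$, that is, if and only if $\sigma = 1$; this settles both implications of the equivalence at once. Substituting $\sigma = 1$ back into the parametrization then produces the claimed expression. I do not anticipate any genuine obstacle here: the result is essentially a one-line consequence of Proposition~\ref{Proposition: U-equivariant maps}. The only structural point worth emphasizing is that the traceless sector $\mathfrak{sl}_n(\mathbb C)$, which is scaled by $\lambda$, makes no contribution to $\Phi(\mathbb I_n)$ because $\mathbb I_n$ lies entirely in the trivial subspace $W_D$; consequently unitality constrains only the scalar $\sigma$ acting on that subspace and leaves $\lambda$ completely free, which is exactly why the lemma isolates the single condition $\sigma = 1$.
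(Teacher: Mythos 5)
Your proposal is correct and follows essentially the same route as the paper: both reduce unitality to the single evaluation $\Phi(\mathbb I_n) = \sigma\,\mathbb I_n$ and conclude $\sigma = 1$. The only cosmetic difference is that you compute this from the explicit two-parameter formula while the paper reads it off the decomposition $\Phi = \sigma\,\mathrm{id}_D + \lambda\,\mathrm{id}_{D^\perp}$ directly.
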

\begin{proof}
    $\Phi$ is unital if $\Phi(\mathbb I_n) = \sigma \, \mathbb I_n = \mathbb I_n$ which occurs if and only if $\sigma = 1$.
\end{proof}

Note that if $n = 1$, then $\Phi (X) = \sigma \, X$, for $\sigma \in \mathbb C$. Thus, by lemma \ref{Lemma: Unital U-eqivariant channels}, the map is unital if and only if $\Phi$ is the identity. The identity map is positive, Schwarz and CP; and thus the problem is tibial. In what follows we will assume without further mentioning it that $n > 1$.

\begin{lemma} [Hermiticity preserving $U(n)$-equivariant maps]
\label{Lemma: Hermiticity preserving U-equivariant maps}
    Let $\Phi \colon M_n (\mathbb C) \to M_n (\mathbb C)$ be $U(n)$-equivariant. Then $\Phi$ is Hermiticity preserving if and only if $\sigma, \lambda \in \mathbb R$.
\end{lemma}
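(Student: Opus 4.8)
The plan is to start from the explicit parametrization obtained in Proposition~\ref{Proposition: U-equivariant maps}, namely
\begin{equation*}
\Phi(X) = \frac{\sigma - \lambda}{n} \tr X \, \mathbb{I}_n + \lambda X,
\end{equation*}
and to compare $\Phi(X)$ with $\Phi(X)^\dagger$ on Hermitian inputs. Recall that $\Phi$ is Hermiticity preserving precisely when $X = X^\dagger$ implies $\Phi(X) = \Phi(X)^\dagger$. Taking adjoints in the formula above and using $\mathbb{I}_n^\dagger = \mathbb{I}_n$ gives
\begin{equation*}
\Phi(X)^\dagger = \overline{\left(\frac{\sigma - \lambda}{n}\right)} \, \overline{\tr X} \, \mathbb{I}_n + \bar{\lambda} \, X^\dagger,
\end{equation*}
so the whole argument reduces to comparing the scalar coefficient of $\mathbb{I}_n$ and the coefficient of $X$ in these two expressions, exploiting that for Hermitian $X$ one has $\tr X \in \mathbb{R}$ and $X^\dagger = X$.

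For the sufficiency direction I would assume $\sigma, \lambda \in \mathbb{R}$. Then $\tfrac{\sigma - \lambda}{n}$ is real, and for any Hermitian $X$ the identities $\overline{\tr X} = \tr X$ and $X^\dagger = X$ make $\Phi(X)^\dagger$ collapse immediately back to $\Phi(X)$; this is a one-line direct computation requiring no further input.

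For the necessity direction I would feed $\Phi$ two carefully chosen Hermitian matrices in order to isolate the two parameters. First, choosing a nonzero traceless Hermitian matrix $X$ (which exists precisely because we have assumed $n > 1$) kills the $\mathbb{I}_n$ term and forces $\lambda X = \bar{\lambda} X$, hence $\lambda \in \mathbb{R}$. Second, evaluating on $X = \mathbb{I}_n$ yields $\Phi(\mathbb{I}_n) = \sigma \, \mathbb{I}_n$, and Hermiticity preservation then demands $\sigma = \bar{\sigma}$, i.e. $\sigma \in \mathbb{R}$. The only mild subtlety, and the closest thing to an obstacle, is ensuring that the two test inputs genuinely decouple the two coefficients; this is exactly why one probes $M_n(\mathbb{C})$ along the orthogonal decomposition $\mathbb{C}\,\mathbb{I}_n \oplus \mathfrak{sl}_n(\mathbb{C})$ already established in Proposition~\ref{Proposition: U-equivariant maps}. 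Because the trace part and the traceless part are tested independently, no cross-terms can conspire to relax either reality condition, and the equivalence follows.
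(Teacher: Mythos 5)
Your proof is correct and follows essentially the same route as the paper: both start from the parametrization in Proposition~\ref{Proposition: U-equivariant maps} and compare $\Phi(X)^\dagger$ with $\Phi(X^\dagger)$, the only cosmetic difference being that the paper carries out one general computation for arbitrary $X$ and reads off the coefficient conditions, while you test the two summands of the decomposition $\mathbb{C}\,\mathbb{I}_n \oplus \mathfrak{sl}_n(\mathbb{C})$ separately (a traceless Hermitian matrix and the identity). Your use of the equivalent characterization ``$X=X^\dagger$ implies $\Phi(X)=\Phi(X)^\dagger$'' in place of the paper's ``$\Phi(X^\dagger)=\Phi(X)^\dagger$ for all $X$'' is harmless since the two agree for linear maps.
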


\begin{proof}
    $\Phi$ is Hermiticity preserving if $\Phi\left(X^\dagger\right) = \Phi(X)^\dagger$. Then    
    \begin{equation*}
        \Phi \left(X^\dagger\right) - \Phi(X)^\dagger = \frac{(\sigma - \bar \sigma) - (\lambda  - \bar \lambda)}{n} \tr \, X^\dagger \, \mathbb I_n + (\lambda - \bar \lambda) \, X^\dagger = 0 \, ,
    \end{equation*}   
    which occurs for all $X$ if and only if $\lambda - \bar \lambda = 0$ and $(\sigma - \bar \sigma) - (\lambda  - \bar \lambda ) = \sigma - \bar \sigma = 0$, that is, $\sigma, \lambda \in \mathbb R$.
\end{proof}

We can now characterize completely $U(n)$-equivariant Schwarz maps $\Phi \colon  M_n (\mathbb C) \to M_n (\mathbb C)$.

\begin{theorem} [Schwarz $U(n)$-equivariant maps]
\label{Theorem: Schwarz U-equivariant maps}
    Let $\Phi \colon  M_n (\mathbb C) \to M_n (\mathbb C)$ be a $U (n)$-equivariant unital map. Then $\Phi$ is Schwarz if and only if $\lambda \in \left[ -\frac{1}{n}, 1 \right]$.
\end{theorem}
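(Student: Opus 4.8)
The plan is to start from the explicit parametrization already in hand: by Proposition \ref{Proposition: U-equivariant maps} together with Lemma \ref{Lemma: Unital U-eqivariant channels}, every unital $U(n)$-equivariant map is $\Phi(X) = \frac{1-\lambda}{n}\tr X\,\mathbb{I}_n + \lambda X$. Since the Schwarz class lies inside the positive maps ($\mathsf{S}\subset\mathsf{P}$), a Schwarz map is in particular positive, hence Hermiticity preserving, so Lemma \ref{Lemma: Hermiticity preserving U-equivariant maps} forces $\lambda\in\mathbb{R}$. I would record this reduction first, so that the interval $[-\tfrac1n,1]$ is meaningful, and then work throughout with real $\lambda$.

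The core of the argument is a direct evaluation of $M_\Phi(X)=\Phi(X^\dagger X)-\Phi(X)^\dagger\Phi(X)$. After expanding both terms with the parametrization, the decisive move is to decompose $X=\frac{\tr X}{n}\mathbb{I}_n+X_0$ into its scalar and traceless parts (so $\tr X_0=0$) and substitute. I expect that, upon collecting terms, the contributions proportional to $|\tr X|^2\mathbb{I}_n$, to $X_0$, and to $X_0^\dagger$ all cancel, leaving the clean form
\[
M_\Phi(X)=(1-\lambda)\left(\frac{\|X_0\|^2}{n}\,\mathbb{I}_n+\lambda\,X_0^\dagger X_0\right), \qquad \|X_0\|^2=\tr\!\big(X_0^\dagger X_0\big).
\]
This cancellation is the step I expect to be the main obstacle: it is pure bookkeeping, but one must carefully track the identity-, linear-, and quadratic-in-$X_0$ pieces to confirm that everything except these two surviving terms vanishes. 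Once established, this identity reduces the theorem to an elementary positivity question.

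With the clean form in hand the analysis is short. As $X$ ranges over $M_n(\mathbb{C})$, its traceless part $X_0$ ranges over all of $\mathfrak{sl}_n(\mathbb{C})$; and the prefactor $1-\lambda$ is nonnegative precisely when $\lambda\le 1$ (with $\lambda=1$ giving the identity map and $M_\Phi\equiv 0$), while for $\lambda>1$ the bracket is strictly positive so $M_\Phi$ is strictly negative and $\Phi$ fails to be Schwarz. For $\lambda<1$, positivity of $M_\Phi$ is equivalent to $\frac{\|X_0\|^2}{n}\mathbb{I}_n+\lambda X_0^\dagger X_0\ge 0$ for all traceless $X_0$. Diagonalizing $X_0^\dagger X_0$ with eigenvalues $s_1^2,\dots,s_n^2\ge 0$, the eigenvalues of this operator are $\frac1n\sum_j s_j^2+\lambda s_i^2$; for $\lambda\ge 0$ they are manifestly nonnegative, and for $\lambda<0$ the binding one sits at the largest singular value $s_{\max}^2$. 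Since $\sum_j s_j^2\ge s_{\max}^2$ with equality exactly for rank-one $X_0$, the condition becomes $\frac1n+\lambda\ge 0$.

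To close the necessity direction with a single witness I would take $X=E_{12}$, which is traceless and rank one with $X_0^\dagger X_0=E_{22}$ and $\|X_0\|^2=1$, so that $M_\Phi(E_{12})=(1-\lambda)\big(\tfrac1n\mathbb{I}_n+\lambda E_{22}\big)$ carries the eigenvalue $(1-\lambda)\big(\lambda+\tfrac1n\big)$, which is negative exactly when $\lambda>1$ or $\lambda<-\tfrac1n$. Combining this with the sufficiency computation yields $\Phi$ Schwarz $\iff\lambda\in[-\tfrac1n,1]$.
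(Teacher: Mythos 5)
Your proposal is correct and follows essentially the same route as the paper: the same reduction to the identity $M_\Phi(X)=(1-\lambda)\bigl(\tfrac{\|X_0\|_F^2}{n}\mathbb I_n+\lambda X_0^\dagger X_0\bigr)$ for the traceless part $X_0$, the same singular-value bound $s_{\max}^2\le \|X_0\|_F^2$ for sufficiency, and the matrix-unit witness $E_{12}$ (the paper's $i\neq j$ case) for necessity. Your only organizational differences are handling $\lambda\in\mathbb R$ up front via $\mathsf S\subset\mathsf P$ rather than via Hermiticity of $M_\Phi(E_{ij})$, and omitting the paper's redundant $i=j$ test, neither of which changes the argument.
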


\begin{proof}
    To get necessary conditions, we evaluate the map at the elements $X = E_{ij}$ of the standard basis of $M_n(\mathbb{C})$. Then, test the positive semidefiniteness of $M_\Phi(X)$.

    \begin{itemize}
        \item If $i \neq j$: $\Phi(X) = \lambda E_{ij}$, $\Phi(X)^\dagger = \bar \lambda E_{ji}$ and $\Phi \left(X^\dagger X\right) = \frac{1 - \lambda}{n} \mathbb I_n + \lambda E_{jj}$. Then,        
        \begin{equation*}
            M_\Phi (E_{ij}) = \Phi \left(E_{ij}^\dagger E_{ij}\right) -  \Phi (E_{ij})^\dagger \Phi (E_{ij}) = \frac{1 - \lambda}{n} \mathbb I_n + (\lambda - |\lambda|^2) E_{jj} \geq 0 \, .
        \end{equation*}
        
        The matrix $M_\Phi (E_{ij})$ is diagonal, and thus has eigenvalues $\frac{1 - \lambda}{n}$ with multiplicity $n - 1$ and $\frac{1 - \lambda}{n} + (\lambda - |\lambda|^2)$ with multiplicity $1$. Notice that, by definition $M_\Phi (E_{ij}) \geq 0$ requires $M_\Phi (E_{ij})$ to be Hermitian. This is equivalent to $\lambda \in \mathbb R$, and the equation translates to:
        \begin{equation*}
            M_\Phi (E_{ij}) = \frac{1 - \lambda}{n} \mathbb I_n + \lambda (1 - \lambda) E_{jj} \geq 0 \, .
        \end{equation*}
        
        Then, $M_\Phi (E_{ij}) \geq 0$ if and only if the eigenvalues are positive, which reads $\frac{1 - \lambda}{n} \geq 0$, and $\frac{1 - \lambda}{n} + \lambda (1 - \lambda) \geq 0$, or equivalently $\lambda \in \left[-\frac{1}{n}, 1\right]$.

        \item If $i = j$: $\Phi(X) = \Phi(X^\dagger X) = \frac{1 - \lambda}{n} \mathbb I_n + \lambda E_{jj}$, and $\Phi(X)^\dagger = \frac{1 - \bar \lambda}{n} \mathbb I_n + \bar \lambda E_{jj}$. Then,
        \begin{equation*}
            M_\Phi (E_{ij}) = \frac{1 - \lambda}{n} \left(1 - \frac{1 - \bar \lambda}{n} \right) \mathbb I_n + \left[ \lambda - \frac{\lambda (1 - \bar \lambda) + \bar \lambda (1 - \lambda)}{n} - |\lambda|^2 \right] E_{jj} \geq 0 \, .
        \end{equation*}

        Again, $M_\Phi (E_{ij}) \geq 0$ requires $M_\Phi (E_{ij}) = M_\Phi (E_{ij})^\dagger$, which implies $\lambda \in \mathbb R$. Simplifying the equation and evaluating the positivity, we get        
        \begin{equation*}
            M_\Phi (E_{ij}) =\frac{1 - \lambda}{n} \left(1 - \frac{1 - \lambda}{n} \right) \mathbb I_n - \lambda (1 - \lambda) \left(1 - \frac{2}{n} \right) E_{jj} \geq 0 \, .
        \end{equation*}
        
        The matrix $M_\Phi (E_{ij}) $ is again diagonal, leading to conditions $\frac{1 - \lambda}{n} \left(1 - \frac{1 - \lambda}{n}\right) \geq 0$ and $\frac{1 - \lambda}{n} \left(1 - \frac{1 - \lambda}{n}\right) + \lambda \frac{1 - \lambda}{n} \left(1 - \frac{2}{n}\right) \geq 0$. This implies $\lambda \in \left[-\frac{1}{n - 1}, 1\right]$.
        
        The intersection between both intervals shows that if $\Phi$ is Schwarz, then $\lambda \in \left[-\frac{1}{n}, 1\right]$.
    \end{itemize}

    To obtain sufficient conditions,  we express $\Phi(X^\dagger X)$ and $\Phi(X)^\dagger \Phi(X)$ as a function of $X$ and $X_\perp \equiv X - \frac{\tr \, X}{n} \mathbb I_n$. It requires some algebraic manipulations to get    
    \begin{equation*}
        M_\Phi (X) = (1 - \lambda) \left[ \left(\frac{||X||_F^2}{n} - \frac{ | \tr \, X |^2}{n^2}\right) \mathbb I_n + \lambda X_\perp^\dagger X_\perp \right] \, ,
    \end{equation*}
    
   where $||X||_F^2  = \langle X, X \rangle = \tr \, X^\dagger X$ is the Frobenius norm, and $||X|| =  \sigma_1(X)$, the largest singular value of $X$, is the spectral norm.
   
   Since $1 - \lambda \geq 0$, then $M_\Phi(X)$ is positive if and only if the matrix $\left(\frac{||X||_F^2}{n} - \frac{|\tr \, X|^2}{n^2} \right) \mathbb I_n + \lambda X_\perp^\dagger X_\perp$ is positive.
   
   Then, we define the diagonalization of the non-negative Hermitian matrix $X_\perp^\dagger X_\perp$ to be $V_\perp D^2 V_\perp$ for some unitary $V_\perp$, and $D = \operatorname{diag} \{\sigma_1, \ldots, \sigma_n\}$. Note that if $A \geq 0$, then $VAV^\dagger \geq 0$ for all $V \in U(n)$. Therefore, conjugating by $V_\perp^\dagger$, the matrix to diagonalize simplifies:   
    \begin{equation*}
        \left(\frac{||X||_F^2}{n} - \frac{|\tr \, X|^2}{n^2}\right) \mathbb I_n + \lambda D^2 \, .
    \end{equation*}
   
  This matrix is diagonal, with $i$-th eigenvalue $\frac{||X||_F^2}{n} - \frac{|\tr \, X|^2}{n} + \lambda \sigma_i^2$. Finally, imposing the condition $\lambda \geq -\frac{1}{n}$:
  \begin{equation*}
      \frac{||X||_F^2}{n} - \frac{|\tr \, X|^2}{n} + \lambda \sigma_i^2 \geq
        \frac{||X||_F^2}{n} - \frac{|\tr \, X|^2}{n^2} - \frac{\sigma_1^2}{n} \geq \frac{||X||^2 - \sigma_1^2}{n} \geq 0\, ,
  \end{equation*}
where we have used the norm inequalities: $||X|| \leq ||X||_F \leq \sqrt{n} ||X||$, and the orthogonality relation $||X||_F^2 = ||X_\perp||_F^2 + \frac{|\tr \, X|^2}{n^2}$; for all $X \in M_n (\mathbb C)$.
\end{proof}

\begin{theorem} [Completely positive $U(n)$-equivariant maps]
\label{Theorem: Completely positive U-equivariant maps}
    Let $\Phi$ be a $U(n)$-equivariant unital map. Then $\Phi$ is completely positive if and only if $\lambda \in \left[-\frac{1}{n^2 - 1}, 1\right]$.
\end{theorem}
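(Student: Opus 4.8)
The plan is to invoke Choi's theorem directly. Since completely positive maps are automatically Hermiticity-preserving, Lemma \ref{Lemma: Hermiticity preserving U-equivariant maps} already forces $\lambda \in \mathbb{R}$ (equivalently, the Choi matrix cannot be positive semidefinite unless it is Hermitian, which again pins $\lambda = \bar\lambda$). It then suffices to test positive semidefiniteness of the Choi matrix $C_\Phi = \sum_{i,j} E_{ij}\otimes\Phi(E_{ij})$, exploiting the remark that $U(n)$-equivariance is equivalent to $U \otimes \bar U$-invariance of $C_\Phi$.

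First I would compute the action on the canonical basis using Lemma \ref{Lemma: Unital U-eqivariant channels}, namely $\Phi(E_{ij}) = \frac{1-\lambda}{n}\delta_{ij}\mathbb{I}_n + \lambda E_{ij}$, and substitute into the definition of $C_\Phi$. The diagonal term yields $\frac{1-\lambda}{n}\sum_i E_{ii}\otimes\mathbb{I}_n = \frac{1-\lambda}{n}\mathbb{I}_{n^2}$, while the off-diagonal part collapses to $\lambda\sum_{i,j}E_{ij}\otimes E_{ij} = \lambda\,|\Omega\rangle\langle\Omega|$, where $|\Omega\rangle = \sum_i |i\rangle\otimes|i\rangle$ is the (unnormalized) maximally entangled vector. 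Thus
\begin{equation*}
C_\Phi = \frac{1-\lambda}{n}\,\mathbb{I}_{n^2} + \lambda\,|\Omega\rangle\langle\Omega| \, .
\end{equation*}

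The key observation is that $|\Omega\rangle\langle\Omega|$ is rank one with single nonzero eigenvalue $\langle\Omega|\Omega\rangle = n$. Hence $C_\Phi$ is diagonalized by the splitting $\mathbb{C}^{n^2} = \mathbb{C}|\Omega\rangle \oplus |\Omega\rangle^\perp$, with eigenvalue $\frac{1-\lambda}{n}+\lambda n$ on the one-dimensional maximally entangled line and $\frac{1-\lambda}{n}$ with multiplicity $n^2-1$ on its complement. Imposing nonnegativity of both eigenvalues gives $\frac{1-\lambda}{n}\geq 0$, i.e. $\lambda\leq 1$, together with $\frac{1-\lambda}{n}+\lambda n\geq 0$, which after clearing the denominator reads $1+\lambda(n^2-1)\geq 0$, i.e. $\lambda\geq -\frac{1}{n^2-1}$. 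Intersecting these yields exactly $\lambda\in\left[-\frac{1}{n^2-1},1\right]$, and by Choi's theorem this condition is both necessary and sufficient for complete positivity.

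There is no serious obstacle here; the entire argument rests on the fact that $U(n)$-equivariance confines the Choi matrix to the two-dimensional commutant spanned by $\mathbb{I}_{n^2}$ and $|\Omega\rangle\langle\Omega|$, after which positivity reduces to a pair of scalar inequalities. The only point requiring minor care is the correct normalization of the maximally entangled eigenvalue ($n$, not $1$), since an error there would shift the threshold. A useful sanity check is that the resulting CP region $\left[-\frac{1}{n^2-1},1\right]$ is strictly contained in the Schwarz region $\left[-\frac{1}{n},1\right]$ of Theorem \ref{Theorem: Schwarz U-equivariant maps}, consistent with $\mathsf{CP}\subset\mathsf{S}$ and confirming the existence of Schwarz but non-CP maps for $\lambda \in \left[-\frac{1}{n},-\frac{1}{n^2-1}\right)$.
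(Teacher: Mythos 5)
Your proof is correct and follows essentially the same route as the paper: both compute the Choi matrix as $\frac{1-\lambda}{n}\mathbb I_{n^2} + \lambda \sum_{i,j}E_{ij}\otimes E_{ij}$, identify the second term as a rank-one operator with nonzero eigenvalue $n$ on the maximally entangled vector, and read off the two scalar positivity conditions. Your observation that $A=\sum_{i,j}E_{ij}\otimes E_{ij}$ is manifestly rank one is a slightly cleaner way of obtaining the spectrum that the paper derives by acting on basis vectors, but the argument is the same.
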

\begin{proof}
    To prove it we use the Choi matrix $C_\Phi$ associated to $\Phi$, that is $C_\Phi = \sum_{i,j} E_{ij} \otimes \Phi (E_{ij})$.
    
    First, notice that according to theorem \ref{Proposition: U-equivariant maps}, if $i \neq j$, then $\Phi(E_{ij}) = \lambda E_{ij}$. While if $i = j$, then $\Phi(E_{jj}) = \frac{1 - \lambda}{n} \mathbb I_n + \lambda E_{jj}$. With that, we can explicitly compute the Choi matrix    
    \begin{eqnarray}
    C_\Phi &=& \sum_i E_{ii} \otimes \left(\frac{1 + \lambda}{n} \mathbb I_n + \lambda E_{ii}\right) + \sum_{\substack{i, j \\ i \neq j}} E_{ij} \otimes \lambda E_{ij} \nonumber \\
    &=& \lambda \sum_{i, j} E_{ij} \otimes E_{ij} + \frac{1 - \lambda}{n} \mathbb I_{n^2} \, . \label{Equation: Choi U-equivariant}
    \end{eqnarray}
    
    The matrix is not diagonal, nevertheless, if $\nu$ are the eigenvalues of the first term $A := \sum_{i, j} E_{ij} \otimes E_{ij} $, then the eigenvalues of $C_\Phi$ are given by $\lambda \nu + \frac{1 - \lambda}{n}$. 
    
    To compute the eigenvalues of $A$, it is useful to see how it acts on the canonical basis $\{e_i \otimes e_j\}_{i, j = 1}^n$ of $M_n (\mathbb C) \otimes M_n (\mathbb C)$. In is not difficult to prove that $A e_k \otimes e_l = \delta_{kl} \sum_i e_i \otimes e_i$. We define    
    \begin{equation}
        v = \sum_k e_k \otimes e_k \, ,
    \end{equation}
    
    and we check:    
    \begin{equation}\label{Equation: Av}
        Av = \sum_{i, j, k} \delta_{jk} e_i \otimes e_i = n v \, .
    \end{equation}
    
    Therefore, the eigenvalues of $A$ are $n$ with multiplicity $1$; and $0$ with multiplicity $n^2 - 1$. To that end, note that    
    \begin{equation}\label{Equation: Aeiej}
        A (e_i \otimes e_j) = 0 \, , \qquad \forall \, i\neq j \, ,
    \end{equation} 
    
    and    
    \begin{equation}\label{Equation: A0}
        A w = 0 \, , \quad  \forall \,  w \in \mathrm{span\,} \{ e_i \otimes e_i \, \mid \, i = 1, \ldots, n \} \, , \ \text{such that } \, \langle w, v \rangle = 0 \, .
    \end{equation}
    
    Thus $0$ is an eigenvalue of $C_\Phi$ with multiplicity $(n^2 - n) + (n -1) = n^2 -1$. 
    
    Positivity of $C_\Phi$ thus translates to $\frac{1 - \lambda}{n} \geq 0$ and $n \lambda + \frac{1 - \lambda}{n} \geq 0$. Then $\Phi$ is CP if and only if $\lambda \in \left[-\frac{1}{n^2 - 1}, 1\right]$.
\end{proof}

After this result, we have identified a region of $U(n)$-equivariant maps that are S but not CP. These lie in the range $\lambda \in \left[-\frac{1}{n },-\frac{1}{n^2 - 1}\right)$, for all dimensions $n \geq 2$. This family proves itself interesting in the PPT$^{2}$ conjecture and witness entanglement (see Fig. \ref{fig:Un} below).

\begin{proposition}[PPT $\iff$ EB for $U(n)$-equivariant channels]
\label{Proposition: PPT iff EB for $U(n)$-equivariant channels}
Let $\,\Phi\colon M_n(\mathbb C)\to M_n(\mathbb C)$ be $U(n)$-equivariant and unital. Then $\Phi$ is PPT if and only if it is entanglement breaking (EB). Moreover, $\Phi$ is PPT and EB if and only if $\lambda \in \left[ -\frac{1}{n-1}, \, \frac{1}{n+1}\right]$.
\end{proposition}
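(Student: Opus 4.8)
The plan is to transfer everything to the Choi matrix and exploit that, for a $U(n)$-equivariant map, $C_\Phi$ is an \emph{isotropic} operator --- invariant under $U\otimes\bar U$ --- for which both the partial transpose and the separability problem admit closed-form answers. I would take as starting point the expression already obtained in (\ref{Equation: Choi U-equivariant}), $C_\Phi=\lambda A+\frac{1-\lambda}{n}\mathbb{I}_{n^2}$ with $A=\sum_{i,j}E_{ij}\otimes E_{ij}=|v\rangle\langle v|$ and $v=\sum_k e_k\otimes e_k$. Two facts recalled in Sect.~\ref{Subsection: Entanglement breaking channels: the PPT$^2$ conjecture} are the load-bearing inputs: $\Phi$ is PPT iff its Choi matrix is positive and has positive partial transpose, while $\Phi$ is EB iff $C_\Phi$ is separable. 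Since a separable operator is automatically positive and remains positive under partial transposition, the implication $\mathrm{EB}\Rightarrow\mathrm{PPT}$ is immediate, and all the work concentrates in the converse and in identifying the admissible interval of $\lambda$.

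First I would nail down the PPT region. The single computation that does it is the partial transpose of $A$: one checks that $A^{T_2}=\mathbb{F}$, the swap operator $\mathbb{F}=\sum_{i,j}E_{ij}\otimes E_{ji}$, whence $C_\Phi^{T_2}=\lambda\,\mathbb{F}+\frac{1-\lambda}{n}\mathbb{I}_{n^2}$. As $\mathbb{F}$ has eigenvalue $+1$ on the symmetric subspace and $-1$ on the antisymmetric one, the eigenvalues of $C_\Phi^{T_2}$ are $\frac{(n-1)\lambda+1}{n}$ and $\frac{1-(n+1)\lambda}{n}$; requiring both to be non-negative gives the upper endpoint $\lambda\le\frac{1}{n+1}$, and combining with the positivity of $C_\Phi$ itself from Theorem~\ref{Theorem: Completely positive U-equivariant maps} fixes the lower endpoint. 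This already pins down the $\lambda$-interval asserted in the statement.

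The substance lies in the converse $\mathrm{PPT}\Rightarrow\mathrm{EB}$, where the symmetry must be used in an essential way. Normalising $C_\Phi$ to an isotropic state $\rho_F$ with entangled fraction $F=\frac{(n^2-1)\lambda+1}{n^2}$, the goal is to show that $\rho_F$ is separable precisely when $F\le \frac1n$, which is exactly the PPT threshold $\lambda\le\frac{1}{n+1}$. I would do this constructively by twirling: because the maximally entangled projector is $U\otimes\bar U$-invariant, averaging a pure product state $|a\rangle\langle a|\otimes|b\rangle\langle b|$ over $U\otimes\bar U$ yields an isotropic, and still separable, operator whose fraction equals $\frac1n|\langle\bar a|b\rangle|^2$. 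Letting $|\langle\bar a|b\rangle|^2$ range over $[0,1]$ realises every $\rho_F$ with $F\in[0,\frac1n]$ as a convex mixture of product states, so every such $\rho_F$ is separable and $\Phi$ is EB. Together with the trivial direction this yields $\mathrm{PPT}\iff\mathrm{EB}$ on the whole interval.

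I expect the main obstacle to be exactly this converse. For a generic bipartite state PPT is strictly weaker than separability, so the equivalence can hold only because the isotropic symmetry collapses the two notions; the delicate point is to produce the product-state decomposition all the way up to the sharp boundary $\lambda=\frac{1}{n+1}$ rather than on some smaller separable ball, which is precisely what the twirl of product states achieves, since it saturates the maximal entangled fraction $\frac1n$. If one prefers to bypass the explicit construction, the same conclusion may be invoked from the known classification of isotropic states; but the twirling argument is the self-contained symmetry proof that fits the spirit of this work.
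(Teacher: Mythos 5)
Your proposal follows essentially the same route as the paper: identify $C_\Phi$ as an isotropic operator via Eq.~(\ref{Equation: Choi U-equivariant}), compute $(\mathrm{id}\otimes T)(C_\Phi)=\frac{1-\lambda}{n}\mathbb I+\lambda F$ and read off the PPT interval from the $\pm1$ eigenvalues of the swap on the symmetric/antisymmetric subspaces, then settle EB through the separability threshold of isotropic states. The one genuine addition is that where the paper simply invokes the known fidelity criterion for separability of isotropic states, you supply a self-contained proof of the hard direction by twirling product states $|a\rangle\langle a|\otimes|b\rangle\langle b|$ over $U\otimes\bar U$, realising every fidelity in $[0,\tfrac1n]$ by a manifestly separable isotropic state; this makes the argument independent of the cited classification and saturates the boundary $\lambda=\tfrac{1}{n+1}$ exactly, which is the point where a generic PPT-versus-separability argument would fail.

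One slip to fix: you attribute the lower endpoint of the interval to positivity of $C_\Phi$ via Theorem~\ref{Theorem: Completely positive U-equivariant maps}, but that condition reads $\lambda\ge-\tfrac{1}{n^2-1}$, which is strictly stronger than the claimed $\lambda\ge-\tfrac{1}{n-1}$ and would yield a different interval. In fact the lower endpoint $-\tfrac{1}{n-1}$ comes from the \emph{symmetric} eigenvalue $\frac{(n-1)\lambda+1}{n}\ge0$ of $C_\Phi^{T_2}$, which you have already computed --- requiring both eigenvalues of the partial transpose to be nonnegative gives both endpoints at once, exactly as in the paper's proof. (There is a real ambiguity lurking here: if one insists that a PPT or EB map be a channel, i.e.\ $C_\Phi\ge0$, the left endpoint would indeed tighten to $-\tfrac{1}{n^2-1}$; the paper glosses over this, and your proof should either adopt the paper's convention or state the intersection explicitly rather than mixing the two.)
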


\begin{proof}
We start by proving the second statement. Because of Eq. (\ref{Equation: Choi U-equivariant}) and (\ref{Equation: Av})-(\ref{Equation: A0}), the Choi matrix $C_\Phi$, is the isotropic state
\begin{equation*}
    C_\Phi = \frac{1 - \lambda}{n}\,\mathbb I_{n^2} + \lambda \, v^\dagger v \, .
\end{equation*}

Partial transpose gives $(\mathrm{id} \otimes T) (C_\Phi) = \frac{1 - \lambda}{n} \, \mathbb I_n + \lambda F$, where $F$ is the swap map. Since $F$ has eigenvalues $+1$ on the symmetric and $-1$ on the antisymmetric subspace, PPT is equivalent to $\frac{1 - \lambda}{n} \pm \lambda \ge 0$, i.e. $\lambda \in \left[- \frac{1}{n-1}, \frac{1}{n+1}\right]$.

For the first statement, separability of isotropic states (hence EB of $\Phi$) holds if and only if the fidelity satisfies $\mathcal F = v^\dagger \rho v \le 1$. Here $\mathcal F = (1 - \lambda) / n^2 + \lambda$, which yields the same interval. Thus PPT $\iff$ EB for $\Phi$ and $\lambda \in \left[-\frac{1}{n-1},\frac{1}{n+1}\right]$.
\end{proof}

\begin{figure}
  \centering
    \includegraphics[scale=0.5]{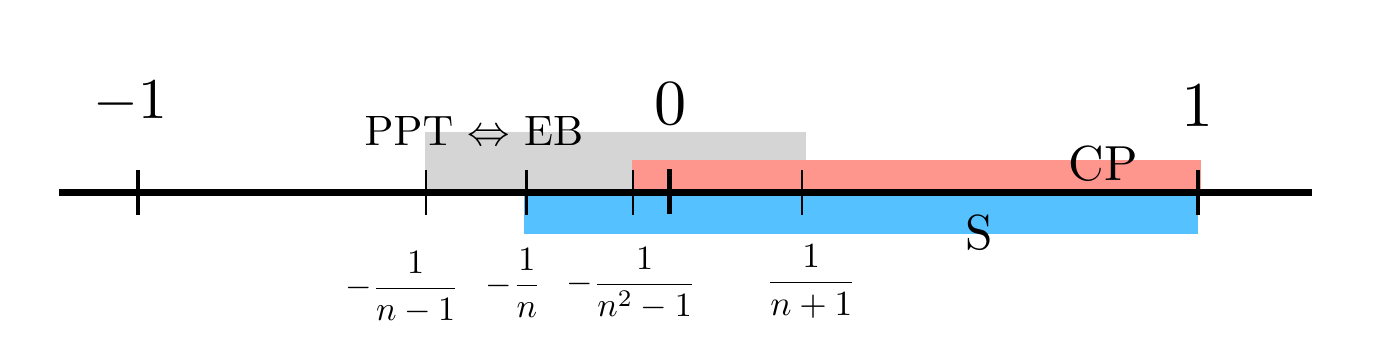}
  \caption{\small{A diagram representing the regions of (unital) $U(n)$-equivariant CP maps (red), (unital) $U(n)$-equivariant Schwarz maps (blue), and  $U(n)$-equivariant PPT maps (grey) which are equivalent to EB maps.} \label{fig:Un}}
\end{figure} 

%%%%%%%%%%
%%%%%%%%%%

\subsection{Diagonal unitary equivariance}
\label{Subsection: Diagonal unitary equivariance}

Instead of unitary equivariance, which is too restrictive, one can relax it to consider just diagonal unitary equivariance. The diagonal unitary group of degree $n$, is defined as $DU(n) \cong U(1) \times \cdots \times U(1) \subset U(n)$ and is the maximal diagonal torus of $U(n)$. Their elements are parametrized as $U(\varphi_1, \ldots, \varphi_n) = \operatorname{diag} \{e^{i\varphi_1}, \ldots, e^{i\varphi_n}\}$, with $\varphi_j \in [0, 2\pi)$, $j = 1, \ldots, n$. We also denote $u_j := e^{i\varphi_j}$. The group $DU(n)$ is Abelian and all its irreducible representations are one-dimensional. The irreducible representations of $DU(n)$ are defined by its characters
\begin{eqnarray*}
    \chi_\mathbf{k} \colon DU(n) &\to& U(1) \\
    U(\varphi_1, \ldots, \varphi_n) &\to& \chi_\mathbf{k} [U(\varphi_1, \ldots, \varphi_n)] = e^{i (k_1 \varphi_1 + \cdots + k_n \varphi_n )} = \prod_{j=1}^n u_j^{k_j}\, ,
\end{eqnarray*}
where $\mathbf{k} = (k_1 , \ldots, k_n) \in \mathbb{Z}^n$ and $\widehat{DU(n)} \cong \mathbb{Z}^n$. The group $DU(n)$ acts on $M_n(\mathbb{C})$ as a subgroup of $U(n)$, that is $\mu_U (X) = U X U^\dagger$, $U \equiv U(\varphi_1, \ldots, \varphi_n) \in DU(n)$.   

\begin{theorem} [$DU(n)$-equivariant maps]
\label{Theorem: DU-equivariant maps}
    Let $\Phi$ be a linear $DU(n)$-equivariant map. Then,
    \begin{equation*}
        [\Phi(X)]_{ij} = \begin{cases} \lambda_{ij} X_{ij} & \text{if} \ i \neq j,\\ \sum_{k} c_{ki} X_{kk} & \text{if} \ i = j.\end{cases}
    \end{equation*}
\end{theorem}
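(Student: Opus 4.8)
The plan is to apply the channel decomposition theorem (Theorem \ref{Theorem: Channel decomposition}) to the conjugation action of $DU(n)$, so the first task is to work out the isotypic decomposition of $M_n(\mathbb{C})$ under this action. Writing $U = U(\varphi_1,\dots,\varphi_n) = \operatorname{diag}\{u_1,\dots,u_n\}$ and using the standard basis $E_{ij}$, I would first compute
$$\mu_U(E_{ij}) = U E_{ij} U^\dagger = u_i \bar{u}_j \, E_{ij},$$
so that each one-dimensional subspace $\mathbb{C}E_{ij}$ is $DU(n)$-invariant and carries the character $\chi_{\mathbf e_i - \mathbf e_j}$, where $\mathbf e_i$ denotes the $i$-th standard vector in $\mathbb{Z}^n \cong \widehat{DU(n)}$. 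This immediately exhibits $M_n(\mathbb{C})$ as a direct sum of $n^2$ one-dimensional irreducibles, one for each ordered pair $(i,j)$.

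Next I would sort these irreducibles by character to read off the multiplicities. The key combinatorial observation is that $\mathbf e_i - \mathbf e_j = \mathbf e_k - \mathbf e_l$ in $\mathbb{Z}^n$, with $i \neq j$ and $k \neq l$, forces $(i,j) = (k,l)$; hence every off-diagonal character $\chi_{\mathbf e_i - \mathbf e_j}$ with $i \neq j$ occurs with multiplicity exactly one and is distinct from the trivial character $\chi_{\mathbf 0}$. The diagonal elements $E_{11},\dots,E_{nn}$, on the other hand, all carry the trivial character $\chi_{\mathbf 0}$, so the trivial isotypic component is the $n$-dimensional diagonal subalgebra $\mathcal D = \operatorname{span}\{E_{ii}\}$ with multiplicity $n$, while every other isotypic component is a single line $\mathbb{C}E_{ij}$.

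With the decomposition in hand, equivariance finishes the argument. By Theorem \ref{Theorem: Channel decomposition} (equivalently, by Schur's Lemma \ref{Lemma: Schur's Lemma} applied character by character), $\Phi$ maps each isotypic component into the component of the same character. For the off-diagonal lines this gives $\Phi(E_{ij}) = \lambda_{ij} E_{ij}$ for some $\lambda_{ij} \in \mathbb{C}$ whenever $i \neq j$, whence $[\Phi(X)]_{ij} = \lambda_{ij} X_{ij}$. On the trivial component, which has multiplicity $n$, the reduced channel $\mathcal C_{\mathbf 0}$ is an arbitrary linear endomorphism of the $n$-dimensional space $\mathcal D$; writing it in the basis $\{E_{kk}\}$ as $\Phi(E_{kk}) = \sum_i c_{ki} E_{ii}$ and using linearity yields $[\Phi(X)]_{ii} = \sum_k c_{ki} X_{kk}$. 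Assembling the two cases gives the stated formula.

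I do not expect a serious obstacle: the form is essentially forced once the characters are identified. The only point requiring genuine care is the multiplicity bookkeeping, namely verifying that the off-diagonal characters are pairwise distinct and disjoint from the trivial one, so that the off-diagonal entries decouple completely into scalar multipliers while all of the freedom in the diagonal block is concentrated in the single $n \times n$ matrix $(c_{ki})$ attached to the trivial representation.
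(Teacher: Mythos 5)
Your proposal is correct and follows essentially the same route as the paper: compute $\mu_U(E_{ij}) = u_i\bar u_j E_{ij}$, identify the one-dimensional irreducibles by their characters $\chi_{\mathbf e_i - \mathbf e_j}$, observe that the off-diagonal characters are pairwise inequivalent while the $n$ diagonal lines all carry the trivial character, and then invoke Theorem \ref{Theorem: Channel decomposition} (Schur's Lemma) to get scalar multipliers off the diagonal and an arbitrary $n\times n$ matrix $(c_{ki})$ on the trivial isotypic component. Your index convention for $c_{ki}$ is in fact the one consistent with the theorem's statement, so nothing further is needed.
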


\begin{proof}
   We easily see that $[\mu_U(X)]_{ij} = u_i \bar{u}_j X_{ij}$ for all $X \in M_n (\mathbb C)$ and $U \equiv U(\varphi_1, \ldots, \varphi_n)$. Then, every one-dimensional subspace $W_{ij} = \operatorname{span}\{E_{ij}\}$ is $DU(n)$-invariant: $\mu_U(E_{ij}) = u_i \bar{u}_j E_{ij}$. Moreover $\operatorname{dim} W_{jk} = 1$, thus the representations $\mu^{(ij)}_U = u_i \bar{u}_j$ are all irreducible.
   
   \begin{itemize}
       \item If $i \neq j$, the irreducible representation $\mu^{(ij)}$ is the character $\chi_{\mathbf{k}_{ij}}$ with vector
       \begin{equation*}
           \mathbf{k}_{ij} = (0, \ldots, 0, \overset{(i)}{1}, 0, \ldots, 0, \overset{(j)}{-1}, 0, \ldots, 0) \, .
       \end{equation*}
       
       Then two irreducible representations $\mu^{(ij)}$ and $\mu^{(kl)}$, with $i \neq j$ and $k \neq l$, are equivalent if and only if $i = k$ and $j = l$.
       
       \item Moreover, if $i = j$, then  we get the irreducible representation $\mu^{(ii)} = 1$, with character is given by the vector $\mathbf{k}_{ii} = (0,\ldots, 0)$, and all representations $\mu^{(ii)}$, $i = 1, \ldots, n$, are equivalent.
   \end{itemize}
   
   The total space $M_n (\mathbb{C})$ is decomposed as a sum of (one-dimensional) irreducible subspaces as:
    \begin{equation}
        M_n (\mathbb C) \cong \bigoplus_{ij} W_{ij}, \qquad \mu \cong \bigoplus_{ij} \mu^{(ij)} \, ;
    \end{equation}  
    and because of theorem \ref{Theorem: Channel decomposition}, any linear $DU(n)$-equivariant map $\Phi$ admits a decomposition:
    \begin{equation}\label{eq:decompositionDU}
    \Phi = \sum_{i\neq j} \lambda_{ij} \, \mathrm{id}_{W_{ij}}  + \mathcal{C} \otimes \mathrm{id}_{\mathbb{C}} \, ,
    \end{equation}    
    where $\lambda_{ij} \in \mathbb C$ for all $i, j$; and $\mathcal{C} \colon W_0 \to W_0$ is a linear map, where $W_0$ is the $n$-dimensional subspace of $M_n(\mathbb{C})$ generated by the matrices $E_{ii}$, $i = 1, \ldots, n$.   Thus, $\mathcal C (E_{ii}) = \sum_{k= 1}^n  c_{ki} E_{kk}$.
    
    Then, applying $\Phi$, given by (\ref{eq:decompositionDU}), to $X$ gives the desired formulas. 
\end{proof}

In contrast to $U(n)$-equivariant maps, $DU(n)$-equivariant maps are parametrized by $\frac{n(n - 1)}{2} + n^2 = \frac{n(3n - 1)}{2}$ complex parameters.

\begin{lemma}[Unital $DU(n)$-equivariant channels]
\label{Theorem: Unital DU-eqivariant channels}
    Let $\Phi$ be $DU(n)$-equivariant. Then $\Phi$ is unital if and only if $\sum_j c_{ij} = 1$ for all $i = 1, \ldots, n$.
\end{lemma}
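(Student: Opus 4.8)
The plan is to test unitality directly against the explicit form of a $DU(n)$-equivariant map supplied by Theorem~\ref{Theorem: DU-equivariant maps}. Since $\Phi$ is unital precisely when $\Phi(\mathbb I_n) = \mathbb I_n$, I would simply evaluate the structural formula at $X = \mathbb I_n$ and compare the result entrywise with the identity matrix.

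First I would dispose of the off-diagonal entries. For $i \neq j$ the formula gives $[\Phi(\mathbb I_n)]_{ij} = \lambda_{ij}\,[\mathbb I_n]_{ij}$, and because $[\mathbb I_n]_{ij} = 0$ whenever $i \neq j$, these entries vanish automatically, independently of the parameters $\lambda_{ij}$. Thus the off-diagonal block of $\Phi(\mathbb I_n)$ already coincides with that of $\mathbb I_n$ and imposes no constraint; only the diagonal can carry information.

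Next I would treat the diagonal entries. Using $[\mathbb I_n]_{kk} = 1$ for every $k$, the diagonal part of the formula collapses to $[\Phi(\mathbb I_n)]_{ii} = \sum_k c_{ki}$, a pure sum of the coefficients of the reduced diagonal channel $\mathcal C$. Matching these against the diagonal entries of $\mathbb I_n$, all equal to $1$, yields exactly the $n$ scalar equations asserting that each such sum equals $1$, which is the claimed unitality condition. The equivalence is then immediate in both directions: if the sums equal $1$ then $\Phi(\mathbb I_n)$ agrees with $\mathbb I_n$ in every entry, while conversely unitality forces precisely these diagonal equations.

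There is essentially no genuine obstacle here; the statement reduces to a single substitution followed by an entrywise comparison. The only point demanding care is index bookkeeping — keeping straight which index of $c_{ki}$ is summed over, equivalently whether the constraint is a row-sum or a column-sum condition on the coefficient matrix $(c_{ki})$ — so that the $n$ resulting equations are written with indices matching the convention fixed in Theorem~\ref{Theorem: DU-equivariant maps}.
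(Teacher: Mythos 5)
Your proof is correct and is essentially the paper's own argument: both simply evaluate the normal form of Theorem~\ref{Theorem: DU-equivariant maps} at $X=\mathbb I_n$ and compare diagonal entries with those of the identity. Your closing caution about index bookkeeping is warranted — the formula of Theorem~\ref{Theorem: DU-equivariant maps} yields $\sum_k c_{ki}=1$ (a sum over the \emph{first} index, i.e.\ a column-sum condition), while the lemma writes $\sum_j c_{ij}=1$; the paper itself is inconsistent on this convention (compare the proof of Theorem~\ref{Theorem: DU-equivariant maps} with that of Theorem~\ref{Theorem: Completely positive DU-equivariant maps}), but the substance — stochasticity of the matrix of $\mathcal C$ — is the same.
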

\begin{proof}
  Note that   $\mathbb I_n = \sum_i E_{ii} $, thus $\Phi(\mathbb I)  = \mathbb I$, is equivalent to $\sum_j c_{ij} = 1$, for all $i$.
\end{proof}

This means that if $\Phi$ is a $DU(n)$-equivariant map, then, the matrix associated to the operator $\mathcal C$ defined by its canonical decomposition (\ref{eq:decompositionDU}) is a stochastic matrix.

\begin{lemma} [Hermiticity preserving $DU(n)$-equivariant maps]
\label{Theorem: Hermiticity preserving DU-equivariant maps}
    Let $\Phi$ be $DU(n)$-equivariant. Then $\Phi$ is Hermiticity preserving if and only if $\lambda_{ij} = \bar{\lambda}_{ji}$ and $c_{ij} \in \mathbb R$ for all $i, j = 1, \ldots, n$.
\end{lemma}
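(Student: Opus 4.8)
The plan is to read the conditions directly off the explicit parametrization established in Theorem~\ref{Theorem: DU-equivariant maps} and to translate the Hermiticity-preservation identity into matching conditions on the coefficients $\lambda_{ij}$ and $c_{ij}$. As in the proof of Lemma~\ref{Lemma: Hermiticity preserving U-equivariant maps}, I would use that $\Phi$ is Hermiticity preserving precisely when $\Phi(X^\dagger)=\Phi(X)^\dagger$ holds for every $X\in M_n(\mathbb C)$. The whole argument then reduces to comparing the two sides entry by entry, using $(X^\dagger)_{ij}=\overline{X_{ji}}$ and $[\Phi(X)^\dagger]_{ij}=\overline{[\Phi(X)]_{ji}}$.

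First I would treat the off-diagonal entries. For $i\neq j$ the parametrization gives $[\Phi(X^\dagger)]_{ij}=\lambda_{ij}\,(X^\dagger)_{ij}=\lambda_{ij}\,\overline{X_{ji}}$, whereas $[\Phi(X)^\dagger]_{ij}=\overline{[\Phi(X)]_{ji}}=\overline{\lambda_{ji}X_{ji}}=\overline{\lambda_{ji}}\,\overline{X_{ji}}$. Demanding equality for all $X$ (it suffices to test $X=E_{ji}$, for which $X_{ji}=1$) forces $\lambda_{ij}=\overline{\lambda_{ji}}$, and since the computation is visibly reversible the off-diagonal condition is both necessary and sufficient.

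Then I would treat the diagonal block. For $i=j$ one has $[\Phi(X^\dagger)]_{ii}=\sum_k c_{ki}\,(X^\dagger)_{kk}=\sum_k c_{ki}\,\overline{X_{kk}}$, while $[\Phi(X)^\dagger]_{ii}=\overline{\sum_k c_{ki}X_{kk}}=\sum_k \overline{c_{ki}}\,\overline{X_{kk}}$. Equating for all $X$, and in particular testing $X=E_{kk}$, isolates each coefficient and yields $c_{ki}=\overline{c_{ki}}$, i.e.\ $c_{ki}\in\mathbb R$; ranging over all $k,i$ gives $c_{ij}\in\mathbb R$ for all $i,j$. Conversely, if both conditions $\lambda_{ij}=\overline{\lambda_{ji}}$ and $c_{ij}\in\mathbb R$ hold, reading the same identities backwards shows $\Phi(X^\dagger)=\Phi(X)^\dagger$, which closes both directions.

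The computation carries no genuine difficulty; the only point requiring care — the closest thing to an obstacle — is the index transposition: the $(i,j)$ entry of $\Phi(X)^\dagger$ is governed by $\lambda_{ji}$ rather than $\lambda_{ij}$, so one must keep the diagonal-versus-off-diagonal bookkeeping straight and remember that conjugating the full diagonal sum $\sum_k c_{ki}X_{kk}$ constrains each $c_{ki}$ individually (via the test matrices $E_{kk}$) rather than merely some aggregate combination.
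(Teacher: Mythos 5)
Your proposal is correct and follows essentially the same route as the paper: both read the conditions off the explicit parametrization from Theorem~\ref{Theorem: DU-equivariant maps} and compare $\Phi(X^\dagger)$ with $\Phi(X)^\dagger$ entry by entry, splitting into the off-diagonal and diagonal cases. Your version is slightly more explicit about testing on the basis matrices $E_{ji}$ and $E_{kk}$, but the argument is the same.
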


\begin{proof}
    $\Phi$ is Hermiticity preserving if $[\Phi(X^\dagger)]_{ij} = [\Phi(X)^\dagger]_{ij}$ for all $X \in M_n (\mathbb C)$ and all $i, j = 1, \ldots, n$.

    \begin{itemize}
        \item If $i \neq j$, then $\lambda_{ij} \bar X_{ji} = \bar \lambda_{ji} \bar X_{ji}, \ \forall \, i,j \iff \lambda_{ij} = \bar{\lambda}_{ji}, \ \forall \, i,j$.
        \item If $i = j$, then $\sum_j c_{ij} \bar{X}_{jj} = \sum_j \bar{c}_{ij} \bar{X}_{jj}, \ \forall \, i,j$. Since this is true for all $X$, then it is true if and only if $c_{ij} = \bar{c}_{ij}$, $c_{ij} \in \mathbb R, \ \forall \, i, j$.  
    \end{itemize}
\end{proof}

\begin{theorem} [Schwarz $DU(n)$-equivariant maps]
\label{Theorem: Schwarz DU-equivariant maps}
    Let $\Phi$ be a $DU(n)$-equivariant unital and Hermicity preserving map. If $\Phi$ is Schwarz, then for all $i, j$ it is satisfied:
$$
 c_{kj} \in [0, 1]\, , \quad   k \neq j \, , \qquad 
 c_{jj} \geq |\lambda_{ij}|^2 \, .
$$
\end{theorem}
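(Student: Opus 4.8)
The plan is to extract these necessary conditions by testing Kadison's inequality $M_\Phi(X) \ge 0$ on the off-diagonal standard basis elements $X = E_{ij}$ with $i \neq j$, exactly in the spirit of the $U(n)$ argument in Theorem \ref{Theorem: Schwarz U-equivariant maps}. Since $\Phi$ is assumed Hermiticity preserving, Lemma \ref{Theorem: Hermiticity preserving DU-equivariant maps} guarantees $c_{ij} \in \mathbb R$ and $\lambda_{ij} = \bar\lambda_{ji}$; together with the fact that $X^\dagger X$ is Hermitian, this ensures $M_\Phi(E_{ij})$ is a genuinely Hermitian (indeed real diagonal) matrix, so that its positivity reduces to a statement about the sign of its diagonal entries.

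First I would assemble the three ingredients of $M_\Phi(E_{ij})$ from the parametrization of Theorem \ref{Theorem: DU-equivariant maps}. Because $E_{ij}$ is purely off-diagonal, its image is $\Phi(E_{ij}) = \lambda_{ij} E_{ij}$, whence $\Phi(E_{ij})^\dagger \Phi(E_{ij}) = |\lambda_{ij}|^2 E_{jj}$. On the other hand $E_{ij}^\dagger E_{ij} = E_{jj}$ is diagonal, so only the diagonal block of $\Phi$ acts and $\Phi(E_{jj}) = \operatorname{diag}(c_{1j}, \ldots, c_{nj})$. Subtracting, $M_\Phi(E_{ij}) = \Phi(E_{jj}) - |\lambda_{ij}|^2 E_{jj}$ is diagonal, with $k$-th entry $c_{kj}$ for $k \neq j$ and $j$-th entry $c_{jj} - |\lambda_{ij}|^2$.

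The positivity of a diagonal Hermitian matrix is equivalent to the non-negativity of its entries, so $M_\Phi(E_{ij}) \ge 0$ immediately forces $c_{kj} \ge 0$ for all $k \neq j$ together with $c_{jj} \ge |\lambda_{ij}|^2$, which are precisely the claimed lower bounds. Letting $j$ range over all indices shows that every off-diagonal coefficient $c_{kj}$ is non-negative, and the inequality $c_{jj} \ge |\lambda_{ij}|^2 \ge 0$ renders the diagonal coefficients non-negative as well, so the full coefficient matrix $(c_{kj})$ is entrywise non-negative.

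Finally, to recover the upper bound $c_{kj} \le 1$ I would invoke unitality: by Lemma \ref{Theorem: Unital DU-eqivariant channels} the coefficient matrix is row-stochastic, $\sum_{l} c_{kl} = 1$, so combining this normalization with the entrywise non-negativity just established bounds every entry by $1$ and yields $c_{kj} \in [0,1]$ for $k \neq j$. I do not expect a serious obstacle here; the only points deserving attention are the observations that evaluating on the off-diagonal $E_{ij}$ alone is enough to generate \emph{both} families of inequalities, and that the upper bound is not itself a Schwarz condition but a consequence of combining the positivity of the $c_{kj}$ with the stochasticity imposed by unitality.
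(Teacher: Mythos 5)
Your derivation of the two lower bounds is exactly the paper's: test Kadison's inequality at the off-diagonal basis elements $X=E_{ij}$, $i\neq j$, observe that $M_\Phi(E_{ij})=\sum_k c_{kj}E_{kk}-|\lambda_{ij}|^2E_{jj}$ is real diagonal by Hermiticity preservation, and read off $c_{kj}\ge 0$ for $k\neq j$ and $c_{jj}\ge|\lambda_{ij}|^2$ from the diagonal entries. Where you diverge is the upper bound $c_{kj}\le 1$: the paper obtains it by a second Schwarz test at the diagonal elements $X=E_{jj}$, which gives $M_\Phi(E_{jj})=\sum_k c_{kj}(1-c_{kj})E_{kk}\ge 0$ and hence $c_{kj}\le 1$ once $c_{kj}\ge 0$ is known, whereas you skip that test and instead combine the entrywise non-negativity already established with the stochasticity $\sum_j c_{ij}=1$ from Lemma \ref{Theorem: Unital DU-eqivariant channels}. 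Both routes are valid (Schwarz maps are unital by definition here, so invoking unitality costs nothing); your version is marginally shorter, while the paper's version has the small advantage of exhibiting the upper bound as itself a consequence of Kadison's inequality evaluated on diagonal inputs, which is the pattern reused later for the symmetric $DU(3)$ family. The only hypothesis worth making explicit in your argument is that $n>1$, so that for each $j$ an index $i\neq j$ exists to run the off-diagonal test; the paper assumes this throughout.
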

\begin{proof}
    Set $X = E_{ij}$.
    \begin{itemize}
        \item If $i \neq j$, $\Phi(X^\dagger X) = \Phi(E_{jj}) = \sum_k c_{kj} E_{kk}$ and $\Phi(X)^\dagger \Phi(X) = |\lambda_{ij}|^2 E_{jj}$. Then $M_\Phi(X) = \sum_k (c_{kj} - \delta_{ij} |\lambda_{ij}|^2) E_{kk}$. This matrix is diagonal with eigenvalues $c_{kj}$ for $k \neq j$; and $c_{jj} - |\lambda_{ij}|^2$ for $k = j$. Thus, $M_\Phi(X) \geq 0$ if and only if $c_{kj} \geq 0$ for $k \neq j$; and $c_{jj} \geq |\lambda_{ij}|^2 \geq 0$ for $k = j$.
        \item If $i = j$, then $\Phi(X)^\dagger \Phi(X) = \sum_k c_{kj}^2 E_{kk}$, thus $M_\Phi(X) = \sum_{k} c_{ki} (1 - c_{ki}) E_{kk}$. This matrix is diagonal with eigenvalues $c_{ki} (1 - c_{ki})$. Since $c_{ki} \geq 0$, then $M_\Phi(X) \geq 0$ if and only if $1 - c_{ki} \geq 0$.
    \end{itemize}
\end{proof}

It can be proven that more conditions on the coefficients exists. For instance, following the same technique used in theorem \ref{Theorem: Schwarz DU-equivariant maps} with $X = E_{ij} + E_{ik}$, one can get the condition:
\begin{equation}\label{eq:mixed}
    \left(c_{jj} + c_{jk} - |\lambda_{ij}|^2\right) \left(c_{jk} + c_{kk} - |\lambda_{ik}|^2\right) \geq |\lambda_{jk} + \lambda_{ji} \lambda_{ik}|^2 \, .
\end{equation}

Instead of following this approach and listing all conditions derived in this form, we will concentrate on particular families of $DU(n)$ equivariant maps and obtain explicit sufficient Schwarz conditions. But before doing that we will provide necessary and sufficient conditions for a $DU(n)$-equivariant map to be $CP$.

\begin{theorem} [Completely positive $DU(n)$-equivariant maps]
\label{Theorem: Completely positive DU-equivariant maps}
    Let $\Phi$ be a $DU(n)$-equivariant unital map. Then $\Phi$ is completely positive if and only if $c_{ij} \geq 0$ for all $i \neq j$ and $\tilde{C}_\Phi \geq 0$ where $\tilde C_\Phi \in M_n (\mathbb C)$ takes the form
    \begin{equation*}
        \tilde{C}_\Phi = \sum_{i = 1}^n c_{ii} E_{ii} + \sum_{\substack{i, j = 1 \\ i \neq j}}^n \lambda_{ij} E_{ij} \geq 0 \, .
    \end{equation*}
\end{theorem}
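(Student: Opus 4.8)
The plan is to apply Choi's theorem directly: $\Phi$ is completely positive if and only if its Choi matrix $C_\Phi=\sum_{k,l}E_{kl}\otimes\Phi(E_{kl})$ is positive semidefinite. So the first step is to evaluate $\Phi$ on the standard basis using Theorem~\ref{Theorem: DU-equivariant maps}. Plugging $X=E_{kl}$ into the componentwise formula, for $k\neq l$ only the off-diagonal rule contributes and one gets $\Phi(E_{kl})=\lambda_{kl}E_{kl}$, while for $k=l$ only the diagonal rule survives, yielding $\Phi(E_{kk})=\sum_i c_{ki}E_{ii}$. Assembling these gives
\begin{equation*}
C_\Phi=\sum_{k,i}c_{ki}\,E_{kk}\otimes E_{ii}+\sum_{k\neq l}\lambda_{kl}\,E_{kl}\otimes E_{kl}\,.
\end{equation*}

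The decisive observation, and the real content of the argument, is that in the product basis $\{e_a\otimes e_b\}$ of $\mathbb{C}^n\otimes\mathbb{C}^n$ this operator block-diagonalizes along the splitting $\mathbb{C}^n\otimes\mathbb{C}^n=D\oplus D^\perp$, where $D=\mathrm{span}\{e_k\otimes e_k\}$ and $D^\perp=\mathrm{span}\{e_a\otimes e_b:a\neq b\}$. I would verify invariance by noting that the second sum maps $e_l\otimes e_l\mapsto\sum_{k\neq l}\lambda_{kl}\,e_k\otimes e_k\in D$ and annihilates every $e_a\otimes e_b$ with $a\neq b$, whereas the first sum is diagonal in this basis. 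Reading off the two blocks then finishes the proof: on $D^\perp$ the operator is diagonal with eigenvalue $c_{ab}$ at $e_a\otimes e_b$, and on $D$ its matrix elements $\langle e_k\otimes e_k,\,C_\Phi\,e_l\otimes e_l\rangle$ equal $c_{kk}$ on the diagonal and $\lambda_{kl}$ off it, i.e. exactly $\tilde C_\Phi$ under the identification $e_k\otimes e_k\leftrightarrow e_k$. Since a direct sum of Hermitian operators is positive semidefinite if and only if each summand is, Choi's theorem gives $\Phi\in\mathsf{CP}$ precisely when $c_{ab}\geq 0$ for all $a\neq b$ and $\tilde C_\Phi\geq 0$, which is the claim.

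I do not expect a genuine obstacle here: the argument is essentially bookkeeping once the Choi matrix is written down. The only points demanding care are confirming the invariance of $D$ and $D^\perp$ and matching each $\lambda_{kl}$ to the correct entry of $\tilde C_\Phi$ --- in particular tracking the input versus output index so that the restriction to $D$ reproduces $\tilde C_\Phi$ itself rather than its transpose (the Hermiticity $\lambda_{kl}=\bar\lambda_{lk}$ forced by $\tilde C_\Phi\geq 0$ being consistent with Lemma~\ref{Theorem: Hermiticity preserving DU-equivariant maps}). A basis-free alternative would exploit that $C_\Phi$ is $(\mu\otimes\bar\mu)$-invariant, which forces it to be supported on $D$ together with the individual lines $\mathbb{C}\,e_a\otimes e_b$; but the explicit substitution above is the most transparent route, and note that unitality of $\Phi$ is not actually needed for this characterization.
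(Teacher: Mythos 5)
Your proposal is correct and follows essentially the same route as the paper: compute the Choi matrix $C_\Phi=\sum_{k,i}c_{ki}E_{kk}\otimes E_{ii}+\sum_{k\neq l}\lambda_{kl}E_{kl}\otimes E_{kl}$, observe it block-diagonalizes into the diagonal subspace $D=\mathrm{span}\{e_k\otimes e_k\}$ (where the block is $\tilde C_\Phi$) and its complement (where the eigenvalues are $c_{ab}$, $a\neq b$), and apply Choi's theorem. Your version is in fact slightly more careful than the paper's, which asserts the block structure without verifying the invariance of $D$ and $D^\perp$; your remarks about index bookkeeping and the dispensability of unitality are also accurate.
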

\begin{proof}
    To prove it we use the Choi matrix. If $i \neq j$, $\Phi(E_{ij}) = \lambda_{ij} E_{ij}$, while if $i = j$, $\Phi(E_{ii}) = \sum_j c_{ij} E_{j}$. Thus, allows to write the Choi matrix as
    \begin{gather*}
        C_\Phi = \sum_i E_{ii} \otimes \left(\sum_j c_{ij} E_{jj}\right) + \sum_{\substack{i, j \\ i \neq j}} E_{ij} \otimes \lambda_{ij} E_{ij} \\
        = \sum_{i, j} c_{ij} E_{ii} \otimes E_{jj} + \sum_{\substack{i, j \\ i \neq j}} \lambda_{ij} E_{ij} \otimes E_{ij} \, .
    \end{gather*}
    
    Then we have the first term, which is diagonal, and the second one which has scattered terms with no diagonal elements. If the map is Hermiticity preserving, then $C_\Phi$ would be Hermitian.
    
    This matrix can be diagonalized by blocks, leaving $n^2 - n$ eigenvalues equal to $c_{ij}$ with $i \neq j$, plus the $n$ eigenvalues of the $n \times n$ block
    \begin{equation*}
        \sum_{i = 1}^n c_{ii} E_{ii} + \sum_{\substack{i, j = 1 \\ i \neq j}}^n \lambda_{ij} E_{ij} := \tilde{C}_\Phi \, .
    \end{equation*}
\end{proof}

Particular cases of the eigenvalues could be found by fixing the values $c_{ii}$ or $\lambda_{ij}$.

\subsubsection{Qubits: Recovering the $DU(2)$ classification}

We now specialise the general form of a $DU(n)$-equivariant map to $n = 2$. Apply theorem \ref{Theorem: DU-equivariant maps} to the particular case, and obtain
\begin{equation}\label{eq:PhiX2}
\Phi (X) = \left(\begin{array}{cc} c_{11} X_{11} + c_{12} X_{22} & \lambda X_{12} \\ \bar{\lambda} X_{21} & c_{21} X_{11} + c_{22} X_{22}   \end{array} \right) \, ,
\end{equation}
where the map being unital and Hermiticity preserving translates to the $2\times 2$ matrix $\Gamma = (c_{ij})$, $i,j \in \{1, 2\}$ to be a real and stochastic: $\sum_i c_{ij} = 1$ and $\lambda_{12} = \lambda = \bar \lambda_{21}$.
This coincides exactly with Chru\'sci\'nski's form for diagonal-unitary (and orthogonal) covariance in $M_2(\mathbb{C})$ \cite{Chruscinski_24_1, Ch24}.

\begin{proposition}
Let $n = 2$, and thus $\Phi$ be the $DU(2)$-equivariant, unital, Hermiticity-preserving map given above by Eq. (\ref{eq:PhiX2}). Then:

\begin{itemize}
\item $\Phi$ is Schwarz if and only if 
\begin{equation}\label{eq:DU2S}
0 \leq c_{12}, c_{21} \leq 1 \, , \quad c_{11}, c_{22} \geq |\lambda|^2 \, ;
\end{equation}

where $c_{11} = 1 - c_{21}$, and $c_{22} = 1 - c_{12}$.

\item $\Phi$ is CP if and only if,
\begin{equation}\label{eq:DU2CP}
c_{12}, c_{21} \geq 0 \, , \quad c_{11} c_{22} \geq |\lambda|^2 \, .
\end{equation}
\end{itemize}
\end{proposition}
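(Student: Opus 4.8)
The plan is to handle the two equivalences by very different means: the CP criterion will follow by directly specialising the general classification, whereas the Schwarz criterion needs genuine work only in its sufficiency half. For CP I will specialise Theorem~\ref{Theorem: Completely positive DU-equivariant maps} to $n=2$: the off-diagonal conditions collapse to $c_{12},c_{21}\geq 0$, and the reduced Choi block becomes the $2\times 2$ Hermitian matrix $\tilde C_\Phi=\left(\begin{smallmatrix} c_{11} & \lambda\\ \bar\lambda & c_{22}\end{smallmatrix}\right)$, whose positive semidefiniteness is equivalent to $c_{11},c_{22}\geq 0$ together with $c_{11}c_{22}\geq|\lambda|^2$; this is exactly \eqref{eq:DU2CP}. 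For the necessity of the Schwarz conditions I will simply invoke Theorem~\ref{Theorem: Schwarz DU-equivariant maps} at $n=2$: testing $X=E_{ij}$ already forces $c_{12},c_{21}\in[0,1]$ and $c_{11},c_{22}\geq|\lambda|^2$, i.e.\ \eqref{eq:DU2S}.

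The substance of the proof is therefore to show that \eqref{eq:DU2S} is \emph{sufficient}, i.e.\ that $M_\Phi(X)\geq 0$ for every $X\in M_2(\mathbb C)$, not just for the matrix units. A direct computation of the entries of the Hermitian matrix $M_\Phi(X)$, using the unital Hermiticity-preserving form (so that $c_{21}=1-c_{11}$, $c_{12}=1-c_{22}$ and $\lambda_{21}=\bar\lambda$), shows that they depend on $X$ only through $s:=X_{11}-X_{22}$, $X_{12}$ and $X_{21}$ — a fact also explained by the translation invariance $M_\Phi(X+z\,\mathbb I_2)=M_\Phi(X)$, valid for all $z\in\mathbb C$ because $\Phi$ is unital and Hermiticity preserving. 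The diagonal entries come out as manifest sums of squares,
\begin{align*}
[M_\Phi]_{11}&=c_{11}c_{21}\,|s|^2+c_{21}\,|X_{12}|^2+(c_{11}-|\lambda|^2)\,|X_{21}|^2,\\
[M_\Phi]_{22}&=c_{12}c_{22}\,|s|^2+(c_{22}-|\lambda|^2)\,|X_{12}|^2+c_{12}\,|X_{21}|^2,
\end{align*}
so both are non-negative exactly when $c_{11},c_{22}\in[|\lambda|^2,1]$, which is precisely \eqref{eq:DU2S}.

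It then remains to verify $\det M_\Phi(X)=[M_\Phi]_{11}[M_\Phi]_{22}-|[M_\Phi]_{12}|^2\geq 0$, and this is where I expect the main difficulty. The off-diagonal entry is $[M_\Phi]_{12}=\lambda\big(c_{21}\,\bar s\,X_{12}-c_{12}\,s\,\overline{X_{21}}\big)$; since the two diagonal entries depend only on the moduli $|s|,|X_{12}|,|X_{21}|$, it suffices to treat the worst case over the free arguments of $X_{12}$ and $X_{21}$, in which the two terms align and $|[M_\Phi]_{12}|^2$ attains $|\lambda|^2|s|^2\big(c_{21}|X_{12}|+c_{12}|X_{21}|\big)^2$. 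The problem thus reduces to the scalar inequality $[M_\Phi]_{11}[M_\Phi]_{22}\geq|\lambda|^2|s|^2\big(c_{21}|X_{12}|+c_{12}|X_{21}|\big)^2$ in the three non-negative variables $|s|^2,|X_{12}|^2,|X_{21}|^2$. My plan is to prove it by extracting the perfect square $\big(\sqrt{c_{11}c_{22}c_{12}c_{21}}\,|s|^2-\sqrt{c_{BC}}\,|X_{12}||X_{21}|\big)^2$, where $c_{BC}:=c_{12}c_{21}+(c_{11}-|\lambda|^2)(c_{22}-|\lambda|^2)$ is the coefficient of $|X_{12}|^2|X_{21}|^2$ in the left-hand product, and then checking that all remaining coefficients are non-negative. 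The decisive elementary estimates, all consequences of $c_{11},c_{22}\geq|\lambda|^2$ and $c_{12},c_{21}\in[0,1]$, are: the coefficient of $|s|^2|X_{12}|^2$ equals $c_{21}\big(c_{11}c_{22}+c_{12}c_{22}-|\lambda|^2\big)\geq c_{21}c_{12}\,(c_{22}-|\lambda|^2)\geq 0$ (with the symmetric bound for $|s|^2|X_{21}|^2$), and the residual coefficient of $|s|^2|X_{12}||X_{21}|$ equals $2\big(\sqrt{c_{11}c_{22}c_{12}c_{21}\,c_{BC}}-|\lambda|^2 c_{12}c_{21}\big)\geq 0$, because $c_{11}c_{22}\geq|\lambda|^4$ and $c_{BC}\geq c_{12}c_{21}$. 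The extremal case, where the extracted square vanishes, is $c_{11}=c_{22}=|\lambda|^2$, for which the residual collapses to $(1-|\lambda|^2)^2\big(|\lambda|^2|s|^2-|X_{12}||X_{21}|\big)^2$, confirming that the bound is sharp. Assembling the two non-negative diagonal entries with the non-negative determinant gives $M_\Phi(X)\geq 0$, completing the sufficiency and hence the Schwarz equivalence.
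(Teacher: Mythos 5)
Your proposal is correct, and on the Schwarz half it is genuinely more complete than the paper's own argument. For CP and for the necessity of \eqref{eq:DU2S} you do exactly what the paper does: specialise Theorems \ref{Theorem: Completely positive DU-equivariant maps} and \ref{Theorem: Schwarz DU-equivariant maps} to $n=2$. The difference is in the sufficiency of \eqref{eq:DU2S}: Theorem \ref{Theorem: Schwarz DU-equivariant maps} is stated only as an implication (matrix-unit tests), so the paper's proof establishes just the ``only if'' direction and implicitly defers the converse to the cited qubit classification of Chru\'sci\'nski--Bhattacharya, whereas you prove it directly. I checked your computation and it holds: with the normalisation $c_{11}+c_{21}=c_{12}+c_{22}=1$ (i.e.\ reading the diagonal action as $[\Phi(X)]_{11}=c_{11}X_{11}+c_{21}X_{22}$, $[\Phi(X)]_{22}=c_{12}X_{11}+c_{22}X_{22}$, the convention of Theorem \ref{Theorem: DU-equivariant maps} — note Eq.~\eqref{eq:PhiX2} as printed swaps $c_{12}$ and $c_{21}$, and only this reading makes the map unital and your formulas correct), the variance identity $c_{11}|X_{11}|^2+c_{21}|X_{22}|^2-|c_{11}X_{11}+c_{21}X_{22}|^2=c_{11}c_{21}|s|^2$ yields your diagonal entries, the off-diagonal entry is indeed $\lambda\bigl(c_{21}\bar s X_{12}-c_{12}s\overline{X_{21}}\bigr)$, the reduction to aligned phases is legitimate because the diagonal depends only on moduli, and every coefficient in your square-extraction is non-negative under \eqref{eq:DU2S}: the $|s|^2|X_{12}|^2$ coefficient simplifies to $c_{21}(c_{11}c_{22}+c_{12}c_{22}-|\lambda|^2)$ and your lower bound reduces to $c_{11}\ge|\lambda|^2$, while $\sqrt{c_{11}c_{22}c_{12}c_{21}c_{BC}}\ge|\lambda|^2c_{12}c_{21}$ follows from $c_{11}c_{22}\ge|\lambda|^4$ and $c_{BC}\ge c_{12}c_{21}$. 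What your route buys is a self-contained if-and-only-if (and a sharpness certificate at $c_{11}=c_{22}=|\lambda|^2$); what the paper's buys is brevity at the cost of outsourcing the hard half. A minor point on the CP part: positivity of $\tilde C_\Phi$ also requires $c_{11},c_{22}\ge0$, which you correctly retain but \eqref{eq:DU2CP} as printed omits.
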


\begin{proof}
\begin{itemize}
    \item For Schwarz, application of theorem \ref{Theorem: Schwarz DU-equivariant maps}, necessary conditions given by the four scalar inequalities (\ref{eq:DU2S}) above are obtained.
    \item For complete positivity, theorem \ref{Theorem: Completely positive DU-equivariant maps} applies, leaving conditions (\ref{eq:DU2CP}). More precisely, computing the Choi matrix one gets
    \begin{equation*}
        C_\Phi =  \left(\begin{array}{cc} c_{11} & \lambda  \\ \bar{\lambda}  & c_{22}   \end{array} \right) \oplus \mathrm{diag\,}(c_{12}, c_{21}) \, , 
    \end{equation*}
    
    whose positivity gives the stated conditions (\ref{eq:DU2CP}).
\end{itemize}
\end{proof}

Hence the proposed general framework reproduces exactly the qubit classification in \cite{Chruscinski_24_1, Ch24}. The Schwarz but non-CP region corresponds to the open strip $|\lambda|^2 < c_{11}, c_{12}$, but $c_{11}c_{22} < |\lambda|^2 $.

%%%%%%%%%%
%%%%%%%%%%

\subsubsection{Qutrits: a symmetric $DU(3)$ family}

For $n =3$, the general parameter space is large (9 real parameters $c_{ij}$ and 3 complex parameters $\lambda_{ij}$), so we restrict to the permutation-symmetric subclass:
\begin{equation}
    \lambda_{ij} = \lambda \ (i \neq j) \, , \quad \Gamma = p \mathbb I_3 + \frac{1-p}{2} (1_3 - \mathbb I_3) \, ,
\end{equation}

where $1_3$ is the $3 \times 3$-matrix whose entries are all ones. 
Each column of $\Gamma$ has a diagonal entry $p$ and equal off-diagonal entries $(1-p)/2$. Thus, the parameters $p \in [0,1]$ and $\lambda \in \mathbb C$ completely determine the map. Imposing these conditions on Thm. \ref{Theorem: DU-equivariant maps}, we get
\begin{equation}\label{eq:PhiDU3}
        [\Phi(X)]_{ij} = \begin{cases} \lambda X_{ij} & \text{if} \ i \neq j,\\ p X_{ii} + \frac{1-p}{2} (X_{jj} + X_{kk}) & \{ i,j,k\} = \{ 1,2,3\} \, .\end{cases}
   \end{equation}
   
This class remains $DU(3)$-equivariant, unital and Hermiticity-preserving.

\begin{proposition}[Symmetric $DU(3)$ family: exact Schwarz and CP regions]\label{prop:DU3-symmetric-corrected}
Consider the permutation-symmetric $DU(3)$-equivariant, unital, Hermiticity-preserving map given by Eq. (\ref{eq:PhiDU3}).
Then:
\begin{enumerate}
\item $\Phi$ is Schwarz if and only if
\begin{equation}
    0 \le p \le 1, \quad |\lambda|^2 \le \frac{1}{2} \min \{p, \, 1 - p\} \, ,
\end{equation}

or equivalently, $p \ge 2 |\lambda|^2$ and $1 - p \ge 2 |\lambda|^2$.

\item $\Phi$ is completely positive if and only if
\begin{equation*}
    |\lambda| \le \min\{ p, \, 1-p\}
\end{equation*}

Hence the CP region is strictly contained in the Schwarz region.
\end{enumerate}
\end{proposition}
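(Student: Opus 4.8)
The two assertions call for the two complementary criteria already at our disposal: Theorem~\ref{Theorem: Completely positive DU-equivariant maps} (the Choi-matrix test) for complete positivity, and Theorem~\ref{Theorem: Schwarz DU-equivariant maps} together with the mixed inequality~(\ref{eq:mixed}) for the Schwarz property. The plan is to specialise each of these to the two-parameter family of Eq.~(\ref{eq:PhiDU3}). For complete positivity I would first note that here $c_{ij} = (1-p)/2$ for $i \neq j$, so the off-diagonal requirement $c_{ij} \ge 0$ of Theorem~\ref{Theorem: Completely positive DU-equivariant maps} reduces to $p \le 1$, and the only remaining requirement is $\tilde C_\Phi \ge 0$ with $\tilde C_\Phi = p\,\mathbb I_3 + \lambda(1_3 - \mathbb I_3)$. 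Writing this as $(p-\lambda)\,\mathbb I_3 + \lambda\,1_3$ and using that $1_3$ has spectrum $\{3,0,0\}$, its eigenvalues are $p + 2\lambda$ on $\mathrm{span}(1,1,1)$ and $p - \lambda$ with multiplicity two on the orthogonal complement; imposing nonnegativity of both, together with $p \le 1$, pins down the complete-positivity region.

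For the Schwarz property I would argue in two stages, necessity first. Exactly as in the proof of Theorem~\ref{Theorem: Schwarz DU-equivariant maps}, evaluating $M_\Phi$ on the matrix units $E_{ij}$ yields the scalar constraints $c_{kj} \in [0,1]$ and $c_{jj} \ge |\lambda|^2$, which after substituting $c_{jj} = p$, $c_{jk} = (1-p)/2$ become $0 \le p \le 1$ and $p \ge |\lambda|^2$. The sum $E_{ij} + E_{ik}$ then feeds into the $2\times 2$ positivity certificate~(\ref{eq:mixed}); specialising the same parameters collapses it into a further inequality relating $p$, $1-p$ and $|\lambda|^2$. Since the single-unit tests only control the diagonal of $M_\Phi(X)$, the sharper factor already appears here from the off-diagonal coupling, and this is precisely why the Schwarz bound is strictly stronger than the naive $|\lambda|^2 \le p$.

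For sufficiency I would exploit equivariance. Because $\Phi$ is unital and $DU(3)$-equivariant with the adjoint action on both sides, one has $M_\Phi(UXU^\dagger) = U\,M_\Phi(X)\,U^\dagger$ for every $U \in DU(3)$, and the permutation symmetry built into Eq.~(\ref{eq:PhiDU3}) extends this covariance to the permutation matrices. Hence $M_\Phi(X) \ge 0$ need only be verified on one representative of each orbit: I would use the diagonal phases to normalise the off-diagonal part of $X$, split $X$ into its diagonal and off-diagonal components, write $M_\Phi(X)$ explicitly, and then seek to exhibit it as a sum of manifestly positive-semidefinite pieces precisely when the Schwarz inequalities hold. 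The strict inclusion $\mathsf{CP} \subset \mathsf S$ would finally follow by a direct comparison of the two parameter regions, checking that on the CP boundary the Schwarz constraints are satisfied with slack.

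The main obstacle I anticipate is this sufficiency half of the Schwarz claim. The matrix-unit tests deliver only necessary conditions, and the genuine difficulty is to certify $M_\Phi(X) \ge 0$ simultaneously for \emph{all} $X$. The symmetry reduction trims the problem to a canonical orbit representative, but one must still handle the interaction between the diagonal and off-diagonal sectors of $X$ inside $M_\Phi(X)$, where the positive contribution $\Phi(X^\dagger X)$ from the stochastic block competes with the quadratic term $\Phi(X)^\dagger\Phi(X)$ scaled by $|\lambda|^2$; it is in controlling this competition that the sharp bound must be extracted rather than the weaker diagonal estimate.
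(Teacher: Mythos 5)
Your overall plan (specialise Theorem~\ref{Theorem: Completely positive DU-equivariant maps} for CP and Theorem~\ref{Theorem: Schwarz DU-equivariant maps} plus Eq.~(\ref{eq:mixed}) for Schwarz) is the natural one, but it does not reach the stated constants, and this is where the gap lies. For the Schwarz part, the single matrix-unit tests give only $0\le p\le 1$ and $p\ge|\lambda|^2$, and the specialisation of Eq.~(\ref{eq:mixed}) to $c_{jj}=c_{kk}=p$, $c_{jk}=\tfrac{1-p}{2}$, $\lambda_{ij}=\lambda$ reads $\bigl(\tfrac{1+p}{2}-|\lambda|^2\bigr)^2\ge|\lambda|^2\,|1+\lambda|^2$, which is not the claimed bound $p\ge 2|\lambda|^2$ and does not reduce to it. The paper extracts the factor $2$ from a different family of test matrices, namely two-term superpositions supported in a single \emph{column}, $X=\alpha E_{ij}+\beta E_{kj}$, optimised over $\alpha,\beta$ (with the extremal case $\alpha=\beta$), and then closes the ``only if'' direction by arguing that in dimension $3$ these column and row tests exhaust all off-diagonal supports. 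Your necessity argument uses the row-type test behind Eq.~(\ref{eq:mixed}) and therefore never produces the constant $2$; and your sufficiency half is explicitly only a programme (``seek to exhibit $M_\Phi(X)$ as a sum of manifestly positive pieces''), which you yourself flag as the main obstacle. As written, neither direction of the ``if and only if'' in part (1) is established.

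For the CP part your diagonalisation of $\tilde C_\Phi=p\,\mathbb I_3+\lambda(1_3-\mathbb I_3)$ is correct: the eigenvalues are $p+2\lambda$ (once) and $p-\lambda$ (twice), together with the off-diagonal Choi eigenvalues $\tfrac{1-p}{2}\ge 0$. But note that for real $\lambda$ (forced by Hermiticity preservation) this yields the region $p\le 1$, $-p/2\le\lambda\le p$, which is \emph{not} the stated $|\lambda|\le\min\{p,1-p\}$: for instance $p=0.9$, $\lambda=0.5$ passes all your eigenvalue tests yet violates the stated bound. You assert that your inequalities ``pin down'' the claimed region without performing this comparison, so the proof of part (2) as stated is not completed by your computation; the paper instead reads its bounds off $2\times 2$ blocks of the Choi matrix, a different (and strictly weaker) test than the full $3\times 3$ block you diagonalise. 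You should either locate the source of this discrepancy or record that your (correct) application of Theorem~\ref{Theorem: Completely positive DU-equivariant maps} gives a CP region different from the one claimed.
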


\begin{proof}
Fix a column $j$ and take $X = \alpha E_{ij} + \beta E_{kj}$ with $i \neq k \neq j$. Then $X^\dagger X=(|\alpha|^2+|\beta|^2)E_{jj}$ and $\Phi(X)=\lambda\alpha\,E_{ij}+\lambda\beta\,E_{kj}$. Thus the diagonal of $M_\Phi(X)=\Phi(X^\dagger X)-\Phi(X)^\dagger\Phi(X)$ is
\begin{eqnarray*}
(j, j)&:& p (|\alpha|^2 + |\beta|^2) - |\lambda|^2 |\alpha + \beta|^2 \, ,\\
(i, i)&:& \frac{1-p}{2} (|\alpha|^2 + |\beta|^2) - |\lambda|^2 |\alpha|^2 \, ,\\
(k, k)&:& \frac{1-p}{2} (|\alpha|^2 + |\beta|^2) - |\lambda|^2 |\beta|^2 \, .
\end{eqnarray*}

Positivity of $M_\Phi(X)$ for all $\alpha, \beta \in \mathbb C$ is equivalent to the non-negativity of these entries for all choices of phases.
Using $|\alpha + \beta|^2 \le 2 (|\alpha|^2 + |\beta|^2)$, we obtain the sharp bounds
\begin{equation*}
    p\ge 2|\lambda|^2 \, , \qquad \frac{1-p}{2}\ge|\lambda|^2 \, ,
\end{equation*}

as given in the theorem. They are sufficient (and necessary): equality is attained for $\alpha = \beta$ and for $\beta = 0$ (or $\alpha = 0$), respectively. By symmetry, the same conditions hold for row superpositions. In dimension $3$ these tests exhaust all off-diagonal supports (each column or
row contains exactly two off-diagonal elements), so the inequalities are both
necessary and sufficient for the Schwarz property.

The CP criterion follows from the positivity of the $2 \times 2$ blocks of the
Choi matrix (theorem \ref{Theorem: Completely positive DU-equivariant maps}), which yields $p, 1-p \ge |\lambda|$. Comparing the two regions gives $|\lambda| \le \min\{p, 1-p\}$ for CP and $|\lambda|^2 \le \tfrac{1}{2} \min\{p, 1-p\}$ for Schwarz. 
\end{proof}

Thus, a substantial portion of $DU(3)$-equivariant maps are Schwarz but not CP.

\begin{figure}
    \centering
    \includegraphics[scale=0.75]{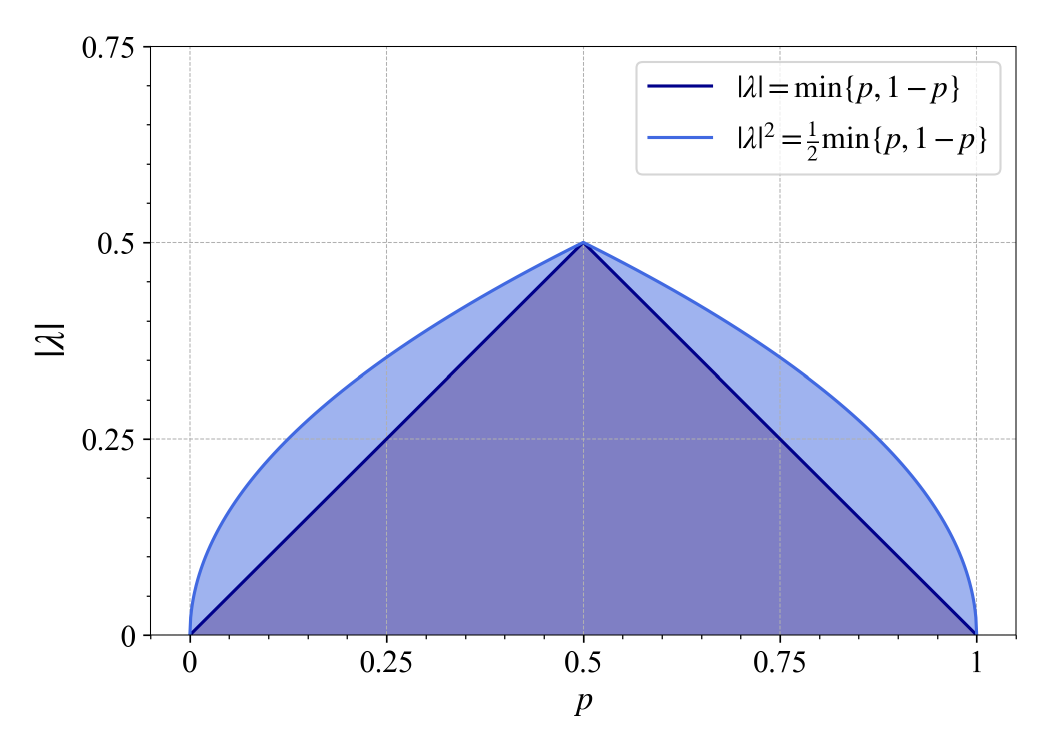}
    \caption{\small{Symmetric \(DU(3)\) family. Light: Schwarz region \( |\lambda| \le \sqrt{\tfrac12\,\min\{p,1-p\}} \). Dark: CP region \( |\lambda| \le \min\{p,1-p\} \).}}
    \label{fig:DU(3)}
\end{figure} 

%%%%%%%%%%
%%%%%%%%%%

%%%. Deleted section. Wrong argument with minors and Sylvester criterium

%%%%%%%%%%%%
%%%%%%%%%%%%

\subsubsection{The PPT$^2$ conjecture for the $DU(2)$ and symmetric $DU(3)$ famileis}

Now we will apply the previous results to assess the PPT$^2$ conjecture directly in the case of $DU(2)$- and $DU(3)$-equivariant maps. Before that, we establish a general result concerning the ``squaring'' of the parameters of the maps.

\begin{lemma}[Composition law under $DU(n)$-equivariance]\label{lem:composition-laws}
Let $\Phi,\Psi: M_n(\mathbb C)\to M_n(\mathbb C)$ be $DU(n)$-equivariant, unital, Hermiticity-preserving, with
\begin{equation*}
[\Phi(X)]_{ij}=\begin{cases}\lambda^{\Phi}_{ij}X_{ij},& i\neq j,\\[1pt]\sum_k c^{\Phi}_{kj}\,X_{kk},& i=j,\end{cases}
\qquad
[\Psi(X)]_{ij}=\begin{cases}\lambda^{\Psi}_{ij}X_{ij},& i\neq j,\\[1pt]\sum_k c^{\Psi}_{kj}\,X_{kk},& i=j,\end{cases}
\end{equation*}

where $\lambda^{\Phi}_{ij}=\overline{\lambda^{\Phi}_{ji}}$, $\lambda^{\Psi}_{ij}=\overline{\lambda^{\Psi}_{ji}}$ and $\Gamma^\Phi=(c^{\Phi}_{kj})$, $\Gamma^\Psi=(c^{\Psi}_{kj})$ are column-stochastic.
Then $\Phi\circ\Psi$ is $DU(n)$-equivariant and
\begin{equation*}
\lambda^{\Phi\circ\Psi}_{ij}=\lambda^{\Phi}_{ij}\lambda^{\Psi}_{ij}\ (i\neq j)\,,\qquad
C^{\Phi\circ\Psi}=C^\Phi\,C^\Psi \,.
\end{equation*}
In particular, $\Phi^k$ has parameters $\big((\lambda_{ij})^k,\ (C)^k\big)$.
\end{lemma}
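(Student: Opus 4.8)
The plan is to deduce the composition law directly from the general structure results already established, namely Theorem \ref{Theorem: DU-equivariant maps} together with Lemma \ref{Lemma: Composition of $G$-equivariant channels}, and then to read off the two types of reduced components separately. Since both $\Phi$ and $\Psi$ are $DU(n)$-equivariant maps on $M_n(\mathbb C)$, they share the same decomposition of $M_n(\mathbb C)$ into $DU(n)$-irreducibles described in Theorem \ref{Theorem: DU-equivariant maps}: the $n(n-1)$ one-dimensional off-diagonal blocks $W_{ij}$ ($i\neq j$), each of multiplicity one and carrying a distinct character $\chi_{\mathbf k_{ij}}$, and the trivial representation occurring with multiplicity $n$ on the diagonal subspace $W_0=\mathrm{span}\{E_{ii}\}$. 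By Lemma \ref{Lemma: Composition of $G$-equivariant channels} the composite $\Phi\circ\Psi$ is again $DU(n)$-equivariant, and its reduced components are the compositions of the reduced components of $\Phi$ and $\Psi$ block by block; in particular no mixing between the diagonal and off-diagonal sectors can occur, since this is forbidden a priori by the irreducible decomposition.

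First I would treat the off-diagonal blocks. On $W_{ij}$ with $i\neq j$ the reduced component of an equivariant map is the scalar by which it acts on $E_{ij}$, and because these blocks have multiplicity one the composition of reduced components is ordinary multiplication. Concretely, $[\Psi(X)]_{ij}=\lambda^{\Psi}_{ij}X_{ij}$ and then $[\Phi(\Psi(X))]_{ij}=\lambda^{\Phi}_{ij}[\Psi(X)]_{ij}=\lambda^{\Phi}_{ij}\lambda^{\Psi}_{ij}X_{ij}$, giving $\lambda^{\Phi\circ\Psi}_{ij}=\lambda^{\Phi}_{ij}\lambda^{\Psi}_{ij}$ for $i\neq j$.

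Next I would treat the diagonal block. On the multiplicity-$n$ trivial isotypic component $W_0$ the reduced component is the $n\times n$ matrix induced by $(c_{kj})$ acting on the vector of diagonal entries, and the matrix-product clause of Lemma \ref{Lemma: Composition of $G$-equivariant channels} identifies the composite reduced component with the product of the two matrices. Substituting $[\Psi(X)]_{kk}=\sum_m c^{\Psi}_{mk}X_{mm}$ into $[\Phi(\Psi(X))]_{jj}=\sum_k c^{\Phi}_{kj}[\Psi(X)]_{kk}$ yields $[\Phi(\Psi(X))]_{jj}=\sum_m\big(\sum_k c^{\Phi}_{kj}c^{\Psi}_{mk}\big)X_{mm}$, which is precisely the statement $C^{\Phi\circ\Psi}=C^{\Phi}C^{\Psi}$. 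The final claim for $\Phi^k$ then follows by setting $\Psi=\Phi$ and iterating, so that the off-diagonal scalars become $(\lambda_{ij})^k$ and the diagonal matrix becomes $C^k$.

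The only genuinely delicate point I expect is the index bookkeeping: the matrix entering the product $C^{\Phi}C^{\Psi}$ is the linear operator induced by $(c_{kj})$ on $W_0\cong\mathbb C^n$, and one must carefully match both the order of multiplication and the transpose relating $C$ to $\Gamma=(c_{kj})$ to the order $\Phi\circ\Psi$ in which the maps act; the computation above fixes this convention unambiguously. Everything else is routine, since the block structure — diagonal entries depending only on diagonal entries, and each off-diagonal slot being invariant — is guaranteed by Theorem \ref{Theorem: DU-equivariant maps}, so the two sectors evolve independently under composition.
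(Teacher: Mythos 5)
Your proof is correct and follows essentially the same route as the paper's: a direct entrywise computation on the two isotypic sectors (scalar multiplication on each off-diagonal block, matrix composition on the diagonal block), backed by the general composition Lemma \ref{Lemma: Composition of $G$-equivariant channels}. The one caveat is the index issue you already flag: with the lemma's stated convention $[\Phi(X)]_{jj}=\sum_k c^{\Phi}_{kj}X_{kk}$, your composite coefficient $\sum_k c^{\Phi}_{kj}c^{\Psi}_{mk}$ equals $(C^{\Psi}C^{\Phi})_{mj}$ rather than $(C^{\Phi}C^{\Psi})_{mj}$ under the standard matrix-product convention — a transpose/order ambiguity that the paper's own proof does not resolve consistently either, and which is immaterial in the only case used downstream, namely $\Psi=\Phi$.
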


\begin{proof}
By direct application of Lemma \ref{Lemma: Composition of $G$-equivariant channels}. But in particular:
    \begin{itemize}
        \item For $i\neq j$, $\Psi(E_{ij})=\lambda^{\Psi}_{ij}E_{ij}$ and then $\Phi(\Psi(E_{ij}))=\lambda^{\Psi}_{ij}\lambda^{\Phi}_{ij}E_{ij}$.
        \item For $i=j$, $\Psi(E_{jj})=\sum_k c^{\Psi}_{kj}E_{kk}$ and $\Phi(\Psi(E_{jj}))=\sum_k c^{\Psi}_{kj}\sum_\ell c^{\Phi}_{\ell k}E_{\ell\ell}
=\sum_\ell (C^\Phi C^\Psi)_{\ell j}E_{\ell\ell}$.
    \end{itemize}
    Column-stochasticity is preserved by matrix multiplication, so unitality is preserved. Hermiticity preservation and $DU(n)$-equivariance are stable under composition.
\end{proof}

\begin{corollary}[Squaring law]\label{cor:DUsquaring}
For any $DU(n)$-equivariant unital, Hermiticity-preserving channel $\Phi$,
\[
\Phi^2:\quad \lambda\ \longmapsto\ \lambda^2\, ,
\]
entrywise on off-diagonals in $DU(n)$, and the classical mixing part evolves as $C\mapsto C^2$ in the $DU(n)$ case.
\end{corollary}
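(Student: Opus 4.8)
The plan is to obtain the squaring law as the diagonal specialization of the composition formula in Lemma~\ref{lem:composition-laws}. Since a single channel $\Phi$ serves as both the inner and outer factor of $\Phi^2 = \Phi\circ\Phi$, I would simply set $\Psi = \Phi$ in that lemma and read off the resulting parameters; no independent computation is required.

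Concretely, first I would invoke Lemma~\ref{lem:composition-laws} with $\Psi = \Phi$, which guarantees that $\Phi^2$ is again $DU(n)$-equivariant, unital, and Hermiticity-preserving, and hence admits the canonical parametrization by an off-diagonal family $(\lambda_{ij})$ together with a column-stochastic mixing matrix $C$. Next, the off-diagonal composition rule $\lambda^{\Phi\circ\Psi}_{ij} = \lambda^{\Phi}_{ij}\lambda^{\Psi}_{ij}$ collapses to $\lambda^{\Phi^2}_{ij} = (\lambda_{ij})^2$ for $i \neq j$, while the diagonal rule $C^{\Phi\circ\Psi} = C^{\Phi} C^{\Psi}$ collapses to $C^{\Phi^2} = C^2$. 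This is precisely the asserted map $\lambda \mapsto \lambda^2$ entrywise on off-diagonals, together with $C \mapsto C^2$ on the classical mixing part, so the corollary follows at once.

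I do not anticipate a genuine obstacle here, as the statement is a direct corollary of an already-established lemma. The only point deserving a line of justification is structural closure: one must know that $\Phi^2$ still lies in the $DU(n)$-equivariant, unital, Hermiticity-preserving class, so that the parameters $(\lambda_{ij}, C)$ are well defined for it in the first place. This closure is already recorded inside Lemma~\ref{lem:composition-laws} (column-stochasticity is stable under matrix multiplication, and both equivariance and Hermiticity preservation persist under composition), so nothing beyond the substitution $\Psi = \Phi$ is needed.
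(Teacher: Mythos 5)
Your proposal is correct and coincides with the paper's intended argument: the corollary is stated without a separate proof precisely because it is the specialization $\Psi=\Phi$ of Lemma~\ref{lem:composition-laws}, whose final sentence already records the power law $\Phi^k\mapsto\bigl((\lambda_{ij})^k,\,C^k\bigr)$. Your added remark on closure of the class under composition is exactly the point the lemma's proof handles via stability of column-stochasticity, Hermiticity preservation, and equivariance.
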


Now, we can prove that statement about the PPT$^2$ conjecture for diagonal-unitary equivariant maps. 

\begin{proposition}[PPT$^2$ in diagonal-unitary symmetry: $DU(2)$ and a $DU(3)$ slice]\label{prop:DU-PPT2}
Consider a $DU(n)$-equivariant, unital, Hermiticity-preserving channel with normal form given by Thm. (\ref{Theorem: DU-equivariant maps}). Then:
\begin{enumerate}
\item $DU(2)$ qubits. If $\Phi$ is PPT, then $\Phi^2$ is EB.
\item Symmetric $DU(3)$ family. On the two-parameter subclass
$\lambda_{ij}=\lambda$ $(i\ne j)$ and $C=pI+\frac{1-p}{2}(\mathbf 1-I)$ with $p\in[0,1]$,
if $\Phi$ is PPT, then $\Phi^2$ is EB.
\end{enumerate}
\end{proposition}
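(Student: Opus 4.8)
The plan is to pass everything to the Choi picture and exploit the squaring law of Corollary~\ref{cor:DUsquaring}. For a $DU(n)$-equivariant $\Phi$, write $\mathcal D=\mathrm{span}\{e_i\otimes e_i\}$ and $\mathcal O=\mathrm{span}\{e_i\otimes e_j:i\neq j\}$. Then $C_\Phi$ is block-diagonal with respect to $\mathcal D\oplus\mathcal O$: it equals $\tilde C_\Phi$ on $\mathcal D$ and $\sum_{i\neq j}c_{ij}\,|ij\rangle\langle ij|$ on $\mathcal O$, while its partial transpose $(\mathrm{id}\otimes T)(C_\Phi)$ splits into the scalars $c_{ii}$ together with the $2\times2$ blocks $\bigl(\begin{smallmatrix}c_{ij}&\lambda_{ij}\\\bar\lambda_{ij}&c_{ji}\end{smallmatrix}\bigr)$ on each pair $\{e_i\otimes e_j,e_j\otimes e_i\}$. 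Hence $\Phi$ is PPT exactly when $c_{ij}\ge0$ for all $i,j$, $\tilde C_\Phi\ge0$ (Theorem~\ref{Theorem: Completely positive DU-equivariant maps}), and $c_{ij}c_{ji}\ge|\lambda_{ij}|^2$ for all $i\neq j$. By Corollary~\ref{cor:DUsquaring} the map $\Phi^2$ lies in the same family with data $(\lambda_{ij}^2,\;C^2)$, where $C^2$ is again column-stochastic with non-negative entries, so it suffices to control these squared parameters.

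In the $DU(2)$ case the Choi matrix lives in $\mathbb C^2\otimes\mathbb C^2$, where Peres--Horodecki gives separability $\iff$ PPT; it is therefore enough to show that $\Phi^2$ is itself PPT. With $C=(c_{ij})$, $c_{11}+c_{21}=c_{12}+c_{22}=1$, a direct computation gives $(C^2)_{11}=c_{11}^2+c_{12}c_{21}$, $(C^2)_{22}=c_{22}^2+c_{12}c_{21}$ and $(C^2)_{12}(C^2)_{21}=c_{12}c_{21}(c_{11}+c_{22})^2$. The PPT hypotheses $c_{11}c_{22}\ge|\lambda|^2$ and $c_{12}c_{21}\ge|\lambda|^2$ then force $(C^2)_{11}(C^2)_{22}\ge(c_{11}c_{22})^2\ge|\lambda|^4$ and, via $(c_{11}+c_{22})^2\ge4c_{11}c_{22}\ge4|\lambda|^2$, also $(C^2)_{12}(C^2)_{21}\ge4|\lambda|^4\ge|\lambda|^4$. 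These are precisely the two PPT inequalities for $\Phi^2$ (whose off-diagonal datum is $\lambda^2$), so $\Phi^2$ is PPT and hence EB.

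For the symmetric $DU(3)$ slice the squaring law keeps $\Phi^2$ inside the two-parameter family with $\lambda'=\lambda^2\ge0$ and $p'=\tfrac12(3p^2-2p+1)\in[\tfrac13,1]$. The genuine difficulty is that in $\mathbb C^3\otimes\mathbb C^3$ PPT no longer implies separability, so the $DU(2)$ shortcut fails and I must certify EB directly. My plan is a symmetry-adapted separable decomposition. Set $\Delta=\sum_i E_{ii}\otimes E_{ii}$, $P_{\mathcal O}=\sum_{i\neq j}E_{ii}\otimes E_{jj}$, and $S=\int|a\rangle\langle a|\otimes|\bar a\rangle\langle\bar a|\,\tfrac{d\theta}{(2\pi)^3}$ with $a=(e^{i\theta_1},e^{i\theta_2},e^{i\theta_3})$; each is manifestly separable, and a phase-average computation yields $S=P_{\mathcal O}+|v\rangle\langle v|$ with $v=\sum_i e_i\otimes e_i$. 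The symmetric Choi matrix then factors as $C_\Phi=(p-\lambda)\,\Delta+\bigl(\tfrac{1-p}{2}-\lambda\bigr)P_{\mathcal O}+\lambda\,S$, which is a non-negative combination of separable operators — hence separable, hence EB — whenever $0\le\lambda\le\min\{p,\tfrac{1-p}{2}\}$. Since $\Phi^2$ carries the non-negative off-diagonal datum $\lambda'=\lambda^2$, this criterion applies to it directly: the three coefficients $p'-\lambda^2$, $\tfrac{1-p'}{2}-\lambda^2$, $\lambda^2$ are non-negative because the PPT bounds on $\Phi$ — namely $|\lambda|\le p$ (from $\tilde C_\Phi\ge0$) and $|\lambda|\le\tfrac{1-p}{2}$ (from the $2\times2$ partial-transpose blocks) — give $\lambda^2\le p^2\le p'$ (using $(p-1)^2\ge0$) and $\lambda^2\le\tfrac{(1-p)^2}{4}\le\tfrac{1-p'}{2}$ (using $4p\ge0$). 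Therefore $C_{\Phi^2}$ is separable and $\Phi^2$ is EB.

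The main obstacle is exactly this $n=3$ separability certification: lacking the Peres--Horodecki equivalence one must produce separable generators by hand, and the whole argument rests on the (slightly delicate) phase-average identity $S=P_{\mathcal O}+|v\rangle\langle v|$ and on the fact that the positivity cone of the generators $\{\Delta,P_{\mathcal O},S\}$ coincides with the PPT cone for $\lambda\ge0$. I would verify that phase average carefully and confirm that restricting to $\lambda\ge0$ — which is automatic after squaring — removes any residual entanglement obstruction, so that separability and PPT genuinely coincide on the relevant parameter region.
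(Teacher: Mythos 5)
Your proof is correct, and it takes a genuinely different — and more self-contained — route than the paper's. The paper's argument invokes a ``diagonal dominance'' sufficient EB criterion ($c_{jj}\ge\sum_{i\ne j}|\lambda_{ij}|^2$) attributed to an earlier section where it is never actually established, and its $DU(3)$ inequality chain contains a slip (it asserts $p^2+(1-p)^2\le p^2+\tfrac12(1-p)^2$, which is false); it also writes the partial-transpose blocks with $c_{ii},c_{jj}$ where the correct entries are $c_{ij},c_{ji}$, a point you get right. You instead argue: for $DU(2)$, that the squared parameters $(\lambda^2, C^2)$ still satisfy both the Choi-positivity and partial-transpose inequalities (via $(C^2)_{11}(C^2)_{22}\ge(c_{11}c_{22})^2$ and $(C^2)_{12}(C^2)_{21}=c_{12}c_{21}(c_{11}+c_{22})^2\ge4|\lambda|^4$), so $C_{\Phi^2}$ is a PPT state on $\mathbb C^2\otimes\mathbb C^2$ and Peres--Horodecki gives separability; and for the symmetric $DU(3)$ slice, an explicit separable decomposition $C_\Phi=(p-\lambda)\Delta+\bigl(\tfrac{1-p}{2}-\lambda\bigr)P_{\mathcal O}+\lambda S$ whose phase-average identity $S=P_{\mathcal O}+|v\rangle\langle v|$ I have checked and which is valid, together with the elementary estimates $\lambda^2\le p^2\le p'$ and $\lambda^2\le\tfrac{(1-p)^2}{4}\le\tfrac{1-p'}{2}$. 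What your approach buys is an unconditional separability certificate in the $n=3$ case (where PPT no longer implies separability) rather than an appeal to an unproven EB test; the only cosmetic excess is your closing worry about the ``positivity cone coinciding with the PPT cone'', which is not needed — you only require that the squared PPT parameters land inside the separable cone you constructed, and you have shown exactly that.
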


\begin{proof}
\emph{Block-PPT test.} For $DU(n)$, $(T\otimes\mathrm{id})(C_\Phi)$ decomposes into $2\times2$ blocks
$\begin{pmatrix}c_{ii}&\lambda_{ij}\\ \lambda_{ji}&c_{jj}\end{pmatrix}$ (plus diagonal pieces),
so PPT is implied by $c_{ij}\ge0$ and $c_{ii}c_{jj}\ge|\lambda_{ij}|^2$ for all $i\ne j$.

\emph{Composition rule.} The class is closed under composition and
$\Phi:\ (\lambda_{ij},C)\mapsto \Phi^2:\ (\lambda_{ij}^2,\ C^2)$.

\emph{EB criterion (diagonal dominance).} From Sect. 5.2 we have the sufficient EB test:
for each column $j$ (resp.\ row $i$), $c_{jj}\ge\sum_{i\ne j}|\lambda_{ij}|^2$ (resp.\ $c_{ii}\ge\sum_{j\ne i}|\lambda_{ij}|^2$).

\emph{(a) $DU(2)$.} Write $C=\begin{pmatrix}c_{11}&c_{12}\\ c_{21}&c_{22}\end{pmatrix}$ and $\lambda=\lambda_{12}$. PPT implies $c_{12},c_{21}\ge0$ and $c_{11}c_{22}\ge |\lambda|^2$.
After squaring: $\lambda\mapsto\lambda^2$, $C\mapsto C^2$ with $(C^2)_{11}\ge c_{11}^2$, $(C^2)_{22}\ge c_{22}^2$.
Thus $(C^2)_{11}\ge c_{11}^2\ge|\lambda|^2\ge|\lambda^2|^2$ and similarly for index 22, so the EB inequalities hold for $\Phi^2$.

\emph{(b) Symmetric $DU(3)$.} PPT reduces to $|\lambda|\le \min\{p,1-p\}$. After one squaring,
$|\lambda|\mapsto |\lambda|^2$ and $p\mapsto p':=(C^2)_{jj}=p^2+\tfrac12(1-p)^2$.
The EB test now requires $p'\ge 2|\lambda|^4$ (two off-diagonals per column), which holds because
$|\lambda|^4\le \min\{p^2,(1-p)^2\}$ and for $p\in[0,1]$ we have
$2\min\{p^2,(1-p)^2\}\le p^2+(1-p)^2 \le p^2+\tfrac12(1-p)^2=p'$ whenever $\min\{p,1-p\}\le \tfrac12$,
which is guaranteed by the PPT bound $|\lambda|\le \min\{p,1-p\}\le\tfrac12$.
Hence the EB inequalities hold for $\Phi^2$. 
\end{proof}

%%%%%%%%%%
%%%%%%%%%%

\subsection{Product unitary equivariance}
\label{Subsection: product unitary equivariance}

When imposing equivariance with respect to the unitary group, we got two complex parameters. When we relaxed the symmetry conditions, the number of parameters increased since more maps are available. Thus, in this last section, we chose an intermediate situation of considerable interest, the group $U(n_1) \times U(n_2)$ with $n_1 + n_2 = n$. Every $U \in U(n_1) \times U(n_2)$ takes the form $(U_1, U_2)$ where $U_{1, 2} \in U(n_{1, 2})$. This situation will correspond to a composite system where each one of its subsystems has a $U(n)$-symmetry.

Then it is easy to check that the representation $\mu_1 \otimes \mu_2$ is an irreducible representation of $G_1 \times G_2$ in the Hilbert space $\mathcal{H}_1 \otimes \mathcal{H}_2$ if and only if $\mu_1$ is an irreducible representation of $G_1$ on $\mathcal{H}_1$ and $\mu_2$ is an irreducible representation of $G_2$ on $\mathcal{H}_2$ \cite{Ibort_20}.   Then, using the decomposition (\ref{Equation: U decomposition}) of irreducible representations under the action of $U(n)$, the irreducible representations of the tensor product representation of $U(n_1) \times U(n_2)$ on $M_{n_1} (\mathbb C) \otimes M_{n_2} (\mathbb C)$ are given by:
\begin{eqnarray}
    M_{n_1} (\mathbb C) \otimes M_{n_2} (\mathbb C) &\cong&  [\mathbb C \mathbb I_{n_1} \oplus \mathfrak sl_{n_1}] \otimes [\mathbb C \mathbb I_{n_2} \oplus \mathfrak sl_{n_2}]  \nonumber \\  &\cong& [\mathbb C (\mathbb I_{n_1} \otimes \mathbb I_{n_2})] \oplus [\mathbb C \mathbb I_{n_1} \otimes \mathfrak sl_{n_2}]  \oplus\nonumber \\ &&  \oplus [\mathfrak sl_{n_1} \otimes \mathbb C \mathbb I_{n_2}] \oplus [\mathfrak sl_{n_1} \otimes \mathfrak sl_{n_2}]. \label{eq:tensorprod}
\end{eqnarray}

We denote the terms in the right hand side of  Eq. (\ref{eq:tensorprod}) by $W_{00}$, $W_{01}$, $W_{10}$, $W_{11}$, with dimensions $1$, $n_2^2 - 1$, $n_1^2 - 1$, and $(n_1^2 - 1)(n_2^2 - 1)$ respectively. They are all non-equivalent provided that $n_1 \neq n_2$. To write any $X \in M_{n_1} (\mathbb C) \otimes M_{n_2} (\mathbb C)$ in this decomposition, first decompose it as $\sum_i A_i \otimes B_i$. Now, write each term as a traceless and multiple of the identity part,
\begin{eqnarray*}
    A_i &=& \frac{\tr A_i}{n_1} \mathbb I_{n_1} + \left( A_i - \frac{\tr A_i}{n_1} \mathbb I_{n_1} \right) := \frac{\tr A_i}{n_1} \mathbb I_{n_1} + A_i^\perp \, , \\
    B_i &=& \frac{\tr B_i}{n_1} \mathbb I_{n_1} + \left( B_i - \frac{\tr B_i}{n_1} \mathbb I_{n_1} \right) := \frac{\tr B_i}{n_1} \mathbb I_{n_1} + B_i^\perp \, .
\end{eqnarray*}

Finally operate to get $X = X_{00} + X_{01} + X_{10} + X_{11}$ with:
\begin{eqnarray}
    X_{00} &=& \frac{\sum_i \tr \, A_i \tr \, B_i}{n_1 n_2} \mathbb I_{n_1} \otimes \mathbb I_{n_2} \in W_{00} \, . \label{eq:X00} \\
    X_{01} &=& \frac{1}{n_1} \mathbb I_{n_1} \otimes \sum_i \tr \, (A_i) B_i^\perp \in W_{01} \, .\label{eq:X01}  \\
    X_{10} &=& \sum_i \tr \, (B_i) A_i^\perp \otimes \frac{1}{n_2} \mathbb I_{n_2} \in W_{10} \, .  \label{eq:X10} \\
    X_{11} &=& \sum_i  A_i^\perp \otimes B_i^\perp  \in W_{11} \, . \label{eq:X11} 
\end{eqnarray}

\begin{theorem} [$U (n_1) \times U (n_2)$-equivariant maps]
\label{Theorem: U1xU2-equivariant maps}
    Let $\Phi$ be a linear $U (n_1) \times U (n_2)$-equivariant map. Then, $\Phi(X) = \sum_{a, b = 0}^1 \lambda_{ab} X_{ab}$, with $X_{ab}$ as in Eqs. (\ref{eq:X00}-\ref{eq:X11}).
\end{theorem}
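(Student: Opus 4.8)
The plan is to apply the general channel decomposition theorem (Theorem~\ref{Theorem: Channel decomposition}) to the explicit irreducible decomposition of the carrier space already recorded in Eq.~(\ref{eq:tensorprod}). The whole argument mirrors the proof of Proposition~\ref{Proposition: U-equivariant maps}, only now the carrier space splits into four irreducible blocks instead of two.

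First I would establish that the four summands $W_{00}$, $W_{01}$, $W_{10}$, $W_{11}$ appearing in Eq.~(\ref{eq:tensorprod}) are irreducible and pairwise inequivalent representations of $U(n_1) \times U(n_2)$, each occurring with multiplicity one. Irreducibility is immediate from the quoted fact that $\mu_1 \otimes \mu_2$ is irreducible for $U(n_1) \times U(n_2)$ exactly when each factor is irreducible, combined with the $U(n)$-decomposition (\ref{Equation: U decomposition}), in which $\mathbb C \mathbb I_n$ (trivial) and $\mathfrak{sl}_n$ (adjoint) are the two irreducible pieces. Pairwise inequivalence follows because a tensor product $\mu_1 \otimes \mu_2$ is determined up to equivalence by the classes of its factors: the four blocks are distinguished by whether the first (respectively second) factor acts trivially or by the adjoint representation, and the trivial and adjoint representations of $U(n)$ are inequivalent (they have different dimensions $1$ and $n^2-1$ for $n>1$). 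When $n_1 \neq n_2$ the dimensions recorded after Eq.~(\ref{eq:tensorprod}) already separate all four, but the factor-by-factor criterion keeps them inequivalent in general.

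With this in hand, Theorem~\ref{Theorem: Channel decomposition} applies directly. Since every irreducible type occurs with multiplicity one in both the domain and the codomain $M_{n_1}(\mathbb C) \otimes M_{n_2}(\mathbb C)$, each reduced component is a linear map $\mathbb C \to \mathbb C$, i.e.\ a scalar $\lambda_{ab} \in \mathbb C$; equivalently, Schur's Lemma (Lemma~\ref{Lemma: Schur's Lemma}) forces $\Phi$ to annihilate every intertwiner between inequivalent blocks and to act as a scalar on each block, so that $\Phi|_{W_{ab}} = \lambda_{ab}\, \mathrm{id}_{W_{ab}}$.

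Finally I would invoke the explicit (Frobenius-orthogonal) projections of an arbitrary $X$ onto the four blocks, already computed as $X = X_{00} + X_{01} + X_{10} + X_{11}$ with $X_{ab} \in W_{ab}$ in Eqs.~(\ref{eq:X00})--(\ref{eq:X11}). By linearity $\Phi(X) = \sum_{a,b=0}^1 \Phi(X_{ab}) = \sum_{a,b=0}^1 \lambda_{ab} X_{ab}$, which is the asserted formula. I do not expect a genuine obstacle: the only real content is the verification of irreducibility and pairwise inequivalence of the four blocks, after which the statement is a direct specialisation of the structure theorem. The mildest care is needed in the degenerate case $n_1 = n_2$, where equal dimensions no longer separate $W_{01}$ from $W_{10}$ and one must appeal to the factorwise equivalence criterion rather than a dimension count.
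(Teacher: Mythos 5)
Your proposal is correct and follows essentially the same route as the paper, whose proof is simply ``apply Schur's lemma to the decomposition (\ref{eq:tensorprod})''; you merely spell out the irreducibility and pairwise-inequivalence checks that the paper leaves implicit. Your remark on the case $n_1 = n_2$ is a welcome refinement, since the paper only asserts inequivalence of the four blocks ``provided that $n_1 \neq n_2$'', whereas the factorwise criterion you invoke shows the conclusion holds in general.
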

\begin{proof}
    Apply Schur's lemma (lemma \ref{Lemma: Schur's Lemma}) to the decomposition given in (\ref{eq:tensorprod}).
\end{proof}

Thus, we have four parameters $\lambda_{00}, \lambda_{01}, \lambda_{10}, \lambda_{11}$, two more than in the unitary case.
\begin{lemma}[Unital $U (n_1) \times U (n_2)$-equivariant channels]
\label{Theorem: Unital U1xU2-eqivariant channels}
    Let $\Phi$ be $U (n_1) \times U (n_2)$-equivariant. Then it is unital if and only if $\lambda_{00} = 1$.
\end{lemma}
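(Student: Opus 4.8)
The plan is to exploit the structural result of Theorem \ref{Theorem: U1xU2-equivariant maps}, namely that $\Phi$ acts as multiplication by the scalar $\lambda_{ab}$ on each irreducible component $W_{ab}$. Since unitality is the single condition $\Phi(\mathbb I_{n_1}\otimes\mathbb I_{n_2}) = \mathbb I_{n_1}\otimes\mathbb I_{n_2}$, the whole problem reduces to identifying in which of the four blocks $W_{00},W_{01},W_{10},W_{11}$ the identity operator sits.

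First I would decompose $\mathbb I_{n_1}\otimes\mathbb I_{n_2}$ according to Eqs. (\ref{eq:X00})--(\ref{eq:X11}), using the trivial one-term tensor representation $A_1=\mathbb I_{n_1}$, $B_1=\mathbb I_{n_2}$. The traceless parts then vanish, $A_1^\perp=\mathbb I_{n_1}-\tfrac{\tr\,\mathbb I_{n_1}}{n_1}\mathbb I_{n_1}=0$ and likewise $B_1^\perp=0$, so feeding this into (\ref{eq:X00})--(\ref{eq:X11}) yields $X_{01}=X_{10}=X_{11}=0$ and $X_{00}=\mathbb I_{n_1}\otimes\mathbb I_{n_2}$. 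In representation-theoretic terms this simply records that $\mathbb I_{n_1}\otimes\mathbb I_{n_2}$ spans the trivial subrepresentation $W_{00}=\mathbb C(\mathbb I_{n_1}\otimes\mathbb I_{n_2})$ and has no component along the other blocks.

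With this decomposition in hand the conclusion is immediate: $\Phi(\mathbb I_{n_1}\otimes\mathbb I_{n_2})=\lambda_{00}\,\mathbb I_{n_1}\otimes\mathbb I_{n_2}$, so $\Phi$ is unital if and only if $\lambda_{00}=1$. The argument runs exactly parallel to Lemma \ref{Lemma: Unital U-eqivariant channels} and Lemma \ref{Theorem: Unital DU-eqivariant channels}, where unitality is likewise pinned down by the scalar attached to the trivial representation.

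I do not anticipate any genuine obstacle here. The only point deserving a line of care is that the component $X_{00}$ is intrinsically defined --- it is the orthogonal projection onto the trivial isotypic subspace --- so its value does not depend on the auxiliary representation $X=\sum_i A_i\otimes B_i$ chosen to evaluate the formulas; I would note this briefly to justify using the convenient single-term decomposition.
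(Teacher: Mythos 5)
Your proposal is correct and follows the same route as the paper: the paper's proof simply states $\Phi(\mathbb I_{n_1}\otimes\mathbb I_{n_2})=\lambda_{00}\,\mathbb I_{n_1}\otimes\mathbb I_{n_2}$ and reads off $\lambda_{00}=1$, while you additionally justify that the identity lies entirely in $W_{00}$ by checking the vanishing of the traceless parts in Eqs.~(\ref{eq:X00})--(\ref{eq:X11}). That extra verification is a welcome but inessential elaboration of the same argument.
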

\begin{proof}
    \begin{equation*}
        \Phi(\mathbb I_{n_1} \otimes \mathbb I_{n_2}) = \lambda_{00} \, \mathbb I_{n_1} \otimes \mathbb I_{n_2} = \mathbb I_{n_1} \otimes \mathbb I_{n_2} \iff \lambda_{00} = 1 \, .
    \end{equation*}
\end{proof}

\begin{lemma} [Hermiticity preserving $U (n_1) \times U (n_2)$ - equivariant maps]
\label{Theorem: Hermiticity preserving U1xU2-equivariant maps}
    Let $\Phi$ be $U (n_1) \times U (n_2)$-equivariant. Then it is Hermiticity preserving if and only if $\lambda_{ij} \in \mathbb R$ for all $i, j = 0, 1$.
\end{lemma}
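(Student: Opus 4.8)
The plan is to reuse the template of the two preceding Hermiticity-preservation lemmas (Lemmas \ref{Lemma: Hermiticity preserving U-equivariant maps} and \ref{Theorem: Hermiticity preserving DU-equivariant maps}), now applied to the normal form $\Phi(X) = \sum_{a,b=0}^1 \lambda_{ab} X_{ab}$ supplied by Theorem \ref{Theorem: U1xU2-equivariant maps}. The structural fact that makes everything work is that Hermitian conjugation $X \mapsto X^\dagger$ leaves each isotypic component $W_{ab}$ invariant, so that the orthogonal decomposition $X = X_{00} + X_{01} + X_{10} + X_{11}$ transforms blockwise under $\dagger$.

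First I would check that $\dagger$ stabilises each $W_{ab}$ appearing in \eqref{eq:tensorprod}. Conjugating a multiple of $\mathbb I_{n_1} \otimes \mathbb I_{n_2}$ returns such a multiple, so $W_{00}$ is stable; and if $Y \in \mathfrak{sl}_{n_k}$ then $\tr(Y^\dagger) = \overline{\tr Y} = 0$, whence $Y^\dagger \in \mathfrak{sl}_{n_k}$, so that $W_{01}$, $W_{10}$ and $W_{11}$ are stable as well. Since the four subspaces are complementary and $\dagger$ preserves each of them, uniqueness of the decomposition yields the blockwise rule $(X^\dagger)_{ab} = (X_{ab})^\dagger$ for all $a,b$.

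With this rule the two sides to be compared become
\begin{equation*}
\Phi(X^\dagger) = \sum_{a,b} \lambda_{ab}\,(X^\dagger)_{ab} = \sum_{a,b} \lambda_{ab}\,(X_{ab})^\dagger, \qquad \Phi(X)^\dagger = \sum_{a,b} \overline{\lambda_{ab}}\,(X_{ab})^\dagger,
\end{equation*}
so that $\Phi(X^\dagger) - \Phi(X)^\dagger = \sum_{a,b}(\lambda_{ab} - \overline{\lambda_{ab}})(X_{ab})^\dagger$. Hermiticity preservation requires this to vanish for every $X$. Taking $X$ to be an arbitrary nonzero element of a single block $W_{ab}$ isolates the corresponding summand, and since $X \neq 0$ forces $X^\dagger \neq 0$, one concludes $\lambda_{ab} = \overline{\lambda_{ab}}$, i.e. $\lambda_{ab} \in \mathbb R$. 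The converse is immediate: when every $\lambda_{ab}$ is real, the blockwise rule gives $\Phi(X^\dagger) = \sum_{a,b}\lambda_{ab}(X_{ab})^\dagger = \Phi(X)^\dagger$.

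I do not expect any real obstacle. The only point deserving care is the blockwise identity $(X^\dagger)_{ab} = (X_{ab})^\dagger$, which is precisely the invariance of each isotypic block under conjugation. This is cleaner than the $DU(n)$ situation of Lemma \ref{Theorem: Hermiticity preserving DU-equivariant maps}, where $\dagger$ sends $W_{ij}$ to $W_{ji}$ and thereby couples $\lambda_{ij}$ to $\overline{\lambda_{ji}}$: here the four blocks are each individually stable, so the four conditions decouple cleanly into $\lambda_{ab} \in \mathbb R$.
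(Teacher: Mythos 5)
Your proof is correct and follows essentially the same route as the paper: both compute $\Phi(X^\dagger)-\Phi(X)^\dagger=\sum_{a,b}(\lambda_{ab}-\overline{\lambda_{ab}})(X_{ab})^\dagger$ and use the independence of the four blocks to force each coefficient to vanish. You merely make explicit two points the paper leaves implicit, namely that $\dagger$ stabilises each isotypic component (so $(X^\dagger)_{ab}=(X_{ab})^\dagger$) and that testing on a single block isolates each $\lambda_{ab}$.
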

\begin{proof}
    \begin{gather*}
        \Phi\left(X^\dagger\right) - \Phi(X)^\dagger = \sum_{i, j = 0}^1 \left(\lambda_{ij} - \bar{\lambda}_{ij}\right) X_{ij}^\dagger = 0 \\
        \iff \lambda_{ij} = \bar{\lambda}_{ij} \, , \quad \forall \, i, j \iff \lambda_{ij} \in \mathbb R\, , \quad \forall \, i, j \, .
    \end{gather*}
    
    Where we have used the orthogonality of the subspaces.
\end{proof}

For the next two theorems, we will benefit from exploring how the map acts on the canonical basis. To do so, we denote:
\begin{eqnarray}
a &\equiv& \frac{1 - \lambda_{01} - \lambda_{10} + \lambda_{11}}{n_1 n_2} \, , \quad b \equiv \frac{\lambda_{01} - \lambda_{11}}{n_1} \, , \nonumber \\ c &\equiv& \frac{\lambda_{10} - \lambda_{11}}{n_2}\, , \quad  d \equiv \lambda_{11} \label{eq:abcd} \, .
\end{eqnarray}

Consider $\{E_{ij}\}$ and $\{F_{kl}\}$ to be the canonical basis of $M_{n_1} (\mathbb C)$ and $M_{n_2} (\mathbb C)$ respectively.  Then we get:

\begin{enumerate}
\item  $i \neq j$, $j \neq l$: $\Phi(E_{ij} \otimes F_{kl}) = d E_{ij} \otimes F_{kl}$.

\item $i = j$, $j \neq l$: $\Phi(E_{ii} \otimes F_{kl}) = b \mathbb I_{n_1} \otimes F_{kl} + d E_{ij} \otimes F_{kl}$.

\item $i \neq j$, $j = l$: $\Phi(E_{ij} \otimes F_{kk}) = c E_{ij} \otimes \mathbb I_{n_2} + d E_{ij} \otimes F_{kl}$.

\item $i = j$, $j = l$: $\Phi(E_{ij} \otimes F_{kk}) = a \mathbb I_{n_1 n_2} + b \mathbb I_{n_1} \otimes F_{kk} + c E_{ii} \otimes \mathbb I_{n_2} + d E_{ii} \otimes F_{kk}$.
\end{enumerate}

First, we are going to consider necessary conditions for Schwarz maps. To do so, we evaluate at $X = E_{ij} \otimes F_{kl}$.

\begin{enumerate}
    \item $i \neq j$, $j \neq l$: $M_\Phi = a \mathbb I_{n_1 n_2} + b \mathbb I_{n_1} \otimes F_{ll} + c E_{jj} \otimes \mathbb I_{n_2} + d E_{jj} \otimes F_{ll}$. This operator is diagonal. To compute its eigenvalues, we consider its action in the canonical basis. $M_\Phi e_p \otimes e_q = a e_p \otimes e_q + \delta_{ql} b e_p \otimes e_l + \delta_{pj} c e_j \otimes e_q + \delta_{pj} \delta_{ql} d (1 - d) e_j \otimes e_l$. Then, the conditions are: 
\begin{equation}\label{eq:conditions1}
a \geq 0\, ,  \quad a + c \geq 0 \, ,  \quad a + b \geq 0 \, , \quad a + b + c + d (1 - d) \geq 0\, .
\end{equation}

We follow a similar procedure with the remaining cases.

\item  $i = j$, $j \neq l$: $M_\Phi = a \mathbb I_{n_1 n_2} + b (1 - b) \mathbb I_{n_1} \otimes F_{ll} + c E_{jj} \otimes \mathbb I_{n_2} + d (1 - 2 b - d) E_{jj} \otimes F_{ll}$. The new non-trivial conditions are: 
\begin{equation}\label{eq:conditions2}
a + b (1 - b) \geq 0\, , \quad a + b (1 - b) + c + d (1 - 2b - d) \geq 0 \, .
\end{equation}

\item $i \neq j$, $j = l$: $M_\Phi = a \mathbb I_{n_1 n_2} + b \mathbb I_{n_1} \otimes F_{ll} + c (1 - c) E_{jj} \otimes \mathbb I_{n_2} + d (1 - 2 c - d) E_{jj} \otimes F_{ll}$. The new non-trivial conditions are: 
\begin{equation}\label{eq:conditions3}
a + c (1 - c) \geq 0 \, \quad  a + b + c (1 - c) + d (1 - 2c - d) \geq 0\, .
\end{equation}

\item $i = j$, $j = l$: $\Phi(X^\dagger X) = \Phi(X)$, then the eigenvalues of $M_\Phi = \Phi(X) (1 - \Phi(X))$ are $\nu (1 - \nu)$ where $\nu$ are the eigenvalues of $\Phi(X)$. The new conditions are: 
\begin{eqnarray}
a (1 - a) \geq 0\, ,&&   \quad (a + c) [1 - (a + c)] \geq 0 \, , \label{eq:conditions4}   \\ (a + b) [1 - (a + b)] \geq 0 \, , &&\quad (a + b + c + d) [1 - (a + b + c + d)] \geq 0\, .\nonumber
\end{eqnarray}
\end{enumerate}

\begin{theorem} [Schwarz $U (n_1) \times U (n_2)$-equivariant maps]
\label{Theorem: Schwarz U1xU2-equivariant maps}
    Let $\Phi$ be a $U (n_1) \times U (n_2)$-equivariant unital map. If $\Phi$ is Schwarz, then conditions given by (\ref{eq:conditions1}-\ref{eq:conditions4}) are satisfied.
\end{theorem}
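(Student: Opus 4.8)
The plan is to obtain the inequalities as \emph{necessary} conditions by feeding the Schwarz inequality its simplest available probes, the product basis elements $X = E_{ij} \otimes F_{kl}$, and reading off the diagonal entries of the resulting operators. Since $\Phi$ is Schwarz, $M_\Phi(X) = \Phi(X^\dagger X) - \Phi(X)^\dagger \Phi(X)$ must be positive semidefinite for \emph{every} $X$; restricting to this finite family of test matrices can therefore only yield necessary conditions, which is exactly what the theorem asserts. Before starting I would note that a Schwarz map is in particular positive (take $X = X^\dagger$ and use $A = (\sqrt A)^2$ for $A \geq 0$), hence Hermiticity-preserving, so by Lemma \ref{Theorem: Hermiticity preserving U1xU2-equivariant maps} the coefficients $\lambda_{ab}$, and with them $a,b,c,d$ from Eq. (\ref{eq:abcd}), are real; this legitimizes treating the $M_\Phi(X)$ as Hermitian operators with real eigenvalues.

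First I would record the universal fact that, for any product basis element, $X^\dagger X = E_{jj} \otimes F_{ll}$, a diagonal--diagonal element on which $\Phi$ acts through the fourth entry of the action table (the $a,b,c,d$ expansion). This fixes $\Phi(X^\dagger X)$ once and for all. The second ingredient, $\Phi(X)$, is then read off from whichever of the four cases applies, according to whether the first tensor factor is diagonal ($i = j$) and whether the second is diagonal ($k = l$); these four alternatives are precisely what generate the four labelled condition sets.

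Next, in each case I would subtract $\Phi(X)^\dagger \Phi(X)$ from $\Phi(X^\dagger X)$ and verify that the difference is again diagonal in the product basis $\{e_p \otimes e_q\}$. Positivity then reduces to non-negativity of each diagonal entry, which I would extract by evaluating $M_\Phi(X)$ on the vectors $e_p \otimes e_q$ and tracking which Kronecker deltas survive. The fully off-diagonal case yields (\ref{eq:conditions1}); the mixed cases $i = j,\ k \neq l$ and $i \neq j,\ k = l$ yield the new inequalities (\ref{eq:conditions2}) and (\ref{eq:conditions3}); and the diagonal--diagonal case, where $X^\dagger X = X$ forces $M_\Phi(X) = \Phi(X) - \Phi(X)^2$ with eigenvalues $\nu(1 - \nu)$ for each eigenvalue $\nu$ of $\Phi(X)$, yields (\ref{eq:conditions4}).

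The only genuinely delicate step is the bookkeeping in computing $\Phi(X)^\dagger \Phi(X)$ in the cases where $\Phi(X)$ carries several terms (for instance $b\,\mathbb{I}_{n_1}\otimes F_{ll} + d\,E_{jj}\otimes F_{ll}$), since one must confirm that cross products such as $(\mathbb{I}_{n_1}\otimes F_{ll})(E_{jj}\otimes F_{ll})$ recombine into diagonal contributions and that the composite coefficients $b(1-b)$, $d(1-2b-d)$, and their analogues emerge correctly. Because these products are already tabulated in the action table preceding the statement, assembling them is routine; the substantive content is simply that the four diagonal/off-diagonal alternatives for $X$ exhaust the elementary product probes and thereby produce the complete list (\ref{eq:conditions1})--(\ref{eq:conditions4}).
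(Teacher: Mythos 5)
Your proposal matches the paper's argument: the paper derives conditions (\ref{eq:conditions1})--(\ref{eq:conditions4}) in exactly this way, by evaluating $M_\Phi$ at the product basis elements $X = E_{ij}\otimes F_{kl}$, splitting into the four diagonal/off-diagonal cases, and reading off non-negativity of the diagonal entries (with the $i=j,\,k=l$ case handled via the eigenvalues $\nu(1-\nu)$ of $\Phi(X)-\Phi(X)^2$). Your preliminary remark that a Schwarz map is positive, hence Hermiticity-preserving, so that $a,b,c,d$ are real, is a small but legitimate addition the paper leaves implicit.
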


Now we compute the Choi matrix. To do so, we decompose the sum $\sum_{i, j, k, l} E_{ij} \otimes F_{kl} \otimes \Phi(E_{ij} \otimes F_{kl})$ into four terms mediated by the same four conditions used above. After simplification, one gets:
\begin{eqnarray}
C_\Phi &=& a \mathbb I_{(n_1 n_2)^2} + b \sum_{k, l} (\mathbb I_{n_1} \otimes F_{kl})^{\otimes \ 2} + c \sum_{i, j} (E_{ij} \otimes \mathbb I_{n_2})^{\otimes 2} \nonumber \\ &+& d \sum_{i, j, k, l} (E_{ij} \otimes F_{kl})^{\otimes 2} \, .
\end{eqnarray}

We have no exact expression for its eigenvalues, but it is possible to compute their exact value for small dimension cases $(n_1, n_2) = (2, 2), (2, 3), (3, 2), (3, 3)$.

\begin{lemma}\label{lem:eigChoi23}
For $n_1 = 2,3$; $n_2 = 2,3$, the eigenvalues of the Choi matrix $C_\Phi$ of a unital, hermiticity-preserving $U(n_1) \otimes U(n_2)$-equivariant map are given by:
\begin{equation}\label{eq:eigChoi23a}
\frac{1}{n_1 n_2} [1 + (n_2^2 - 1) \lambda_{01} + (n_1^2 - 1) \lambda_{10} + (n_1^2 - 1)(n_2^2 - 1) \lambda_{11}] \, , 
\end{equation}
with multiplicity 1.
\begin{equation}\label{eq:eigChoi23b}
\frac{1}{n_1 n_2} [1 + (n_2^2 - 1) \lambda_{01} - \lambda_{10} - (n_2^2 - 1) \lambda_{11}]\, ,
\end{equation} 
with multiplicity $n_1^2 - 1$.
\begin{equation}\label{eq:eigChoi23c}
\frac{1}{n_1 n_2} [1 - \lambda_{01} + (n_1^2 - 1) \lambda_{10} - (n_1^2 - 1) \lambda_{11}]\, ,
\end{equation}
 with multiplicity $n_2^2 - 1$.
\begin{equation}\label{eq:eigChoi23d}
\frac{1}{n_1 n_2} [1 - \lambda_{01} - \lambda_{10} + \lambda_{11}] \, , 
\end{equation}
with multiplicity $(n_1^2 - 1)(n_2^2 - 1)$.
\end{lemma}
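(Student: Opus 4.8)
The plan is to exploit the fact that, after a harmless reordering of tensor factors, the Choi matrix $C_\Phi$ becomes a polynomial in two commuting operators whose spectra are already known from the unitary case. First I would rewrite $C_\Phi$, which naturally lives on $(\mathbb C^{n_1}\otimes\mathbb C^{n_2})\otimes(\mathbb C^{n_1}\otimes\mathbb C^{n_2})$, by conjugating with the unitary tensor-swap $S$ that interchanges the two middle factors, thereby grouping the two $\mathbb C^{n_1}$ copies together and the two $\mathbb C^{n_2}$ copies together. Setting $A_1=\sum_{i,j}E_{ij}\otimes E_{ij}$ on $\mathbb C^{n_1}\otimes\mathbb C^{n_1}$ and $A_2=\sum_{k,l}F_{kl}\otimes F_{kl}$ on $\mathbb C^{n_2}\otimes\mathbb C^{n_2}$, the four building blocks of $C_\Phi$ turn into products: $\sum_{i,j,k,l}(E_{ij}\otimes F_{kl})^{\otimes 2}\mapsto A_1\otimes A_2$, $\sum_{k,l}(\mathbb I_{n_1}\otimes F_{kl})^{\otimes 2}\mapsto \mathbb I_{n_1^2}\otimes A_2$, and $\sum_{i,j}(E_{ij}\otimes\mathbb I_{n_2})^{\otimes 2}\mapsto A_1\otimes\mathbb I_{n_2^2}$. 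Hence
\[
S\,C_\Phi\,S^{-1}=a\,\mathbb I_{n_1^2}\otimes\mathbb I_{n_2^2}+b\,\mathbb I_{n_1^2}\otimes A_2+c\,A_1\otimes\mathbb I_{n_2^2}+d\,A_1\otimes A_2,
\]
with $a,b,c,d$ as in \eqref{eq:abcd}; since $S$ is unitary this has the same spectrum as $C_\Phi$.

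Next, because $A_1$ and $A_2$ act on disjoint tensor factors they commute and can be simultaneously diagonalized, so $C_\Phi$ is simply the scalar function $a+c\,\alpha+b\,\beta+d\,\alpha\beta$ evaluated on the joint eigenvalues $(\alpha,\beta)$ of $(A_1,A_2)$. The spectra of $A_1,A_2$ were already computed in the proof of Theorem \ref{Theorem: Completely positive U-equivariant maps} (Eqs. \eqref{Equation: Av}--\eqref{Equation: A0}): each $A_m$ has eigenvalue $n_m$ with multiplicity $1$ (eigenvector $\sum_i e_i\otimes e_i$) and eigenvalue $0$ with multiplicity $n_m^2-1$. Consequently the joint eigenvalues are $(n_1,n_2)$, $(n_1,0)$, $(0,n_2)$ and $(0,0)$, with multiplicities $1$, $n_2^2-1$, $n_1^2-1$ and $(n_1^2-1)(n_2^2-1)$, which indeed sum to $(n_1 n_2)^2$.

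Finally I would substitute each pair into $a+c\,\alpha+b\,\beta+d\,\alpha\beta$ and insert the definitions \eqref{eq:abcd}. The pair $(n_1,n_2)$ gives $a+b n_2+c n_1+d\,n_1 n_2$, and collecting the coefficients of $1,\lambda_{01},\lambda_{10},\lambda_{11}$ reproduces \eqref{eq:eigChoi23a}; the pair $(0,0)$ gives $a$, i.e. \eqref{eq:eigChoi23d}; and the mixed pairs $(0,n_2)$ and $(n_1,0)$ yield \eqref{eq:eigChoi23b} and \eqref{eq:eigChoi23c} respectively, with the multiplicities just recorded.

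The only point requiring care is the bookkeeping of the tensor-factor permutation $S$ and the final collection of coefficients; there is no genuine obstacle, as everything reduces to the already-established spectrum of $A_m$. I would remark that the argument never uses the hypothesis $n_1,n_2\in\{2,3\}$, so the eigenvalue formulas in fact hold for all $n_1,n_2$; the small-dimension restriction is only relevant later, when these eigenvalues are fed into the CP and PPT positivity analysis.
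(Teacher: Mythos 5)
Your proof is correct, and it is actually stronger than what the paper establishes. The paper states this lemma without proof: it says only that the eigenvalues ``can be computed'' for the small-dimension cases $(n_1,n_2)\in\{2,3\}^2$, and in a later Remark it explicitly leaves the general formula as a conjecture. Your argument — conjugating $C_\Phi$ by the swap of the two middle tensor factors so that it becomes $a\,\mathbb I + b\,\mathbb I\otimes A_2 + c\,A_1\otimes\mathbb I + d\,A_1\otimes A_2$ with $A_m$ the rank-one operators $v_m v_m^\dagger$ of spectrum $\{n_m, 0\}$ already diagonalized in Theorem \ref{Theorem: Completely positive U-equivariant maps} — is a clean structural computation that is valid for all $n_1,n_2$, and the substitution of the joint eigenvalues $(n_1,n_2),(n_1,0),(0,n_2),(0,0)$ into $a+c\alpha+b\beta+d\alpha\beta$ does reproduce Eqs. (\ref{eq:eigChoi23a})--(\ref{eq:eigChoi23d}) with the correct multiplicities (I checked the algebra; note the bookkeeping that $(0,n_2)$ carries multiplicity $n_1^2-1$ and pairs with (\ref{eq:eigChoi23b}), which you got right). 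In other words, your proof not only covers the stated lemma but also settles the paper's own conjecture in the concluding Remark of Section \ref{Subsection: product unitary equivariance}; the restriction to $n_1,n_2\in\{2,3\}$ is indeed irrelevant to the spectral computation, exactly as you observe.
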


In the following paragraphs we summarise the main results for $n_1 = 2$, $n_2 = 2,3$.

\begin{proposition}[Tensor-unitary equivariance: $(n_1,n_2)=(2,2)$]\label{prop:U2U2}
Let $\Phi$ be a $U(2)\otimes U(2)$-equivariant, unital, Hermiticity-preserving map.
Then:
\begin{enumerate}
\item $\Phi$ is completely positive if and only if the following four linear inequalities hold:
\begin{eqnarray}
1+3\lambda_{01}+3\lambda_{10}+9\lambda_{11}\ & \ge & 0 \, ,\\
1+3\lambda_{01}-\lambda_{10}-3\lambda_{11}\ & \ge & 0 \, ,\\
1-\lambda_{01}+3\lambda_{10}-3\lambda_{11}\ & \ge & 0 \, ,\\
1-\lambda_{01}-\lambda_{10}+\lambda_{11}\ & \ge & 0 \, .
\end{eqnarray}

\item If $\Phi$ is Schwarz, then the following specialized constraints from (\ref{eq:conditions1}-\ref{eq:conditions4}), hold:
\[
\begin{aligned}
1-\lambda_{01}-\lambda_{10}+\lambda_{11}\ &\ge& 0 \, ,\\
1-\lambda_{01}+\lambda_{10}-\lambda_{11}\ &\ge& 0 \, ,\\
1+\lambda_{01}-\lambda_{10}-\lambda_{11}\ &\ge& 0 \, ,\\
1+\lambda_{01}+\lambda_{10}+\lambda_{11}-4\lambda_{11}^{2}\ &\ge& 0 \, ,
\end{aligned}
\qquad
\begin{aligned}
\lambda_{01}^{2}-2\lambda_{01}\lambda_{11}-\lambda_{01}+\lambda_{10}+\lambda_{11}^{2}+\lambda_{11}-1\ &\le& 0 \, ,\\
\lambda_{01}^{2}+2\lambda_{01}\lambda_{11}-\lambda_{01}-\lambda_{10}+\lambda_{11}^{2}-\lambda_{11}-1\ &\le& 0 \, ,\\
\lambda_{10}^{2}-2\lambda_{10}\lambda_{11}-\lambda_{10}+\lambda_{01}+\lambda_{11}^{2}+\lambda_{11}-1\ &\le& 0 \, ,\\
\lambda_{10}^{2}+2\lambda_{10}\lambda_{11}-\lambda_{10}-\lambda_{01}+\lambda_{11}^{2}-\lambda_{11}-1\ &\le& 0 \, ,
\end{aligned}
\]
\[
\begin{aligned}
\lambda_{01}+\lambda_{10}-\lambda_{11}\ &\in [-3,1]\, , & \quad
\lambda_{01}-\lambda_{10}+\lambda_{11}\ &\in [-3,1]\, , &\\
\lambda_{01}+\lambda_{10}+\lambda_{11}\ &\in [-1,3]\, , &\quad
-\lambda_{01}+\lambda_{10}+\lambda_{11}\ &\in  [-1,3] \, . &
\end{aligned}
\]
\end{enumerate}
\end{proposition}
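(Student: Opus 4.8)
The two parts have a rather different character, so I would treat them separately.

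For the completely positive characterization, the work is essentially already done. By Choi's theorem $\Phi$ is CP if and only if its Choi matrix $C_\Phi$ is positive semidefinite, and since $\Phi$ is Hermiticity preserving (Lemma \ref{Theorem: Hermiticity preserving U1xU2-equivariant maps}) the matrix $C_\Phi$ is Hermitian, so positivity is equivalent to the non-negativity of its spectrum. The plan is therefore simply to specialise Lemma \ref{lem:eigChoi23} to $n_1=n_2=2$: the four eigenvalue families become $\tfrac14[\,1+3\lambda_{01}+3\lambda_{10}+9\lambda_{11}\,]$, $\tfrac14[\,1+3\lambda_{01}-\lambda_{10}-3\lambda_{11}\,]$, $\tfrac14[\,1-\lambda_{01}+3\lambda_{10}-3\lambda_{11}\,]$ and $\tfrac14[\,1-\lambda_{01}-\lambda_{10}+\lambda_{11}\,]$, with multiplicities $1,3,3,9$. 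Discarding the common positive factor $\tfrac14$ and requiring each to be non-negative yields exactly the four linear inequalities stated, which settles part (1).

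For the Schwarz necessary conditions, the strategy is to specialise the general basis tests (\ref{eq:conditions1})--(\ref{eq:conditions4}) --- themselves obtained by imposing $M_\Phi(X)\ge 0$ (cf.\ (\ref{eq:MPhi})) on $X=E_{ij}\otimes F_{kl}$ --- to $n_1=n_2=2$. Substituting the abbreviations (\ref{eq:abcd}) at $n_1=n_2=2$, namely $a=\tfrac14(1-\lambda_{01}-\lambda_{10}+\lambda_{11})$, $b=\tfrac12(\lambda_{01}-\lambda_{11})$, $c=\tfrac12(\lambda_{10}-\lambda_{11})$ and $d=\lambda_{11}$, I would evaluate the four inequalities of (\ref{eq:conditions1}) to produce the four purely linear constraints in the first block; the two inequalities of (\ref{eq:conditions2}) to produce the first two quadratic constraints in the middle block (where the cross terms $\lambda_{01}\lambda_{11}$ arise from $b(1-b)$ together with $d(1-2b-d)$); and, by the $\lambda_{01}\leftrightarrow\lambda_{10}$ symmetry that interchanges the two tensor factors and hence swaps $b$ and $c$, the two inequalities of (\ref{eq:conditions3}) to produce the remaining two quadratic constraints with essentially no extra work. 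Finally the four factored conditions of (\ref{eq:conditions4}) read $\nu(1-\nu)\ge 0$ with $\nu\in\{a,\,a+c,\,a+b,\,a+b+c+d\}$, i.e.\ each of these sums lies in $[0,1]$; solving the resulting two-sided bounds gives the interval memberships in the last block.

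The computations are routine, but the bookkeeping in the middle block is the main place where care is required: the products $b(1-b)$, $c(1-c)$ and $d(1-2b-d)$, $d(1-2c-d)$ must be expanded and regrouped so that the quadratic forms in $\lambda_{01},\lambda_{10},\lambda_{11}$ collapse to the displayed expressions, and one must track signs carefully when converting each $a+b(1-b)+\cdots\ge 0$ into the stated $\le 0$ inequalities, and likewise when turning $\nu\in[0,1]$ into the signed interval constraints. I would also flag explicitly that these conditions are asserted only to be \emph{necessary}: establishing sufficiency would require verifying $M_\Phi(X)\ge 0$ for \emph{all} $X$ rather than just the basis elements $E_{ij}\otimes F_{kl}$, and this is precisely the step left open for general $U(n_1)\otimes U(n_2)$ symmetry and resolved only in the more rigid families such as the $U(n)$-equivariant and symmetric $DU(3)$ cases treated earlier.
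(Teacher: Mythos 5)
Your proposal is correct and follows exactly the paper's own (much terser) proof: part (1) by specializing the Choi eigenvalues of Lemma~\ref{lem:eigChoi23} to $(n_1,n_2)=(2,2)$ with multiplicities $1,3,3,9$, and part (2) by substituting $a=\tfrac14(1-\lambda_{01}-\lambda_{10}+\lambda_{11})$, $b=\tfrac12(\lambda_{01}-\lambda_{11})$, $c=\tfrac12(\lambda_{10}-\lambda_{11})$, $d=\lambda_{11}$ into the necessary conditions (\ref{eq:conditions1})--(\ref{eq:conditions4}). One remark worth recording: carrying out your final step for $a+b\in[0,1]$ yields $-\lambda_{01}+\lambda_{10}+\lambda_{11}\in[-3,1]$ rather than the $[-1,3]$ printed in the statement, so the fourth interval constraint appears to contain a sign typo (the other three intervals and all remaining inequalities check out).
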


\begin{proof}
The CP part follows by imposing nonnegativity of the four Choi eigenvalue families
given in Lemma \ref{lem:eigChoi23} with $(n_1^2-1,n_2^2-1,n_1n_2)=(3,3,4)$.  
The Schwarz implications are the $(2,2)$ specialization of the necessary inequalities
(\ref{eq:conditions1}-\ref{eq:conditions4}). 
\end{proof}

\begin{proposition}[Tensor-unitary equivariance: $(n_1,n_2)=(2,3)$]\label{prop:U2U3}
Let $\Phi$ be a $U(2)\otimes U(3)$-equivariant, unital, Hermiticity-preserving map.
Then:
\begin{enumerate}
\item $\Phi$ is completely positive if and only if the following four linear inequalities hold:
\[
\begin{aligned}
1+8\lambda_{01}+3\lambda_{10}+24\lambda_{11}\ &\ge 0 \, ,&\\
1+8\lambda_{01}-\lambda_{10}-8\lambda_{11}\ &\ge 0 \, ,&\\
1-\lambda_{01}+3\lambda_{10}-3\lambda_{11}\ &\ge 0 \, ,&\\
1-\lambda_{01}-\lambda_{10}+\lambda_{11}\ &\ge 0 \, .&
\end{aligned}
\]
\item If $\Phi$ is Schwarz, then the following specialized constraints from (\ref{eq:conditions1}-\ref{eq:conditions4}) hold:
\[
\begin{aligned}
1-\lambda_{01}-\lambda_{10}+\lambda_{11}\ &\ge 0 \, ,&\\
1-\lambda_{01}+\lambda_{10}-\lambda_{11}\ &\ge 0 \, ,&\\
1+2\lambda_{01}-\lambda_{10}-2\lambda_{11}\ &\ge 0 \, ,&\\
1+2\lambda_{01}+\lambda_{10}+2\lambda_{11}-6\lambda_{11}^{2}\ &\ge 0 \, ,&
\end{aligned}
\]
\[
\begin{aligned}
\lambda_{01}^{2}-2\lambda_{01}\lambda_{11}-\lambda_{01}+\lambda_{10}+\lambda_{11}^{2}+\lambda_{11}-1\ & \le 0 \, &\\
\lambda_{10}^{2}-\tfrac{2}{3}\lambda_{10}\lambda_{11}-\lambda_{10}+\tfrac{2}{3}\lambda_{01}
+\tfrac13\lambda_{11}^{2}+\tfrac23\lambda_{11}-\tfrac13\ & \le 0\, ,&
\end{aligned}
\]
together with their sign-reflected counterparts obtained as in (\ref{eq:conditions2}-\ref{eq:conditions3})  and the range constraints
\[
\begin{aligned}
\lambda_{01}+\lambda_{10}-\lambda_{11}\ & \in [-5,1] \, , &\quad
\lambda_{01}-\lambda_{10}+\lambda_{11}\ & \in [-5,1] \, , &\\
2\lambda_{01}+\lambda_{10}+2\lambda_{11}\ & \in [-1,5] \, , &\quad
-2\lambda_{01}+\lambda_{10}+2\lambda_{11}\ & \in [-1,5] \, . &
\end{aligned}
\]
\end{enumerate}
\end{proposition}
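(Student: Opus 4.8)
The plan is to prove the two items independently, following the template already used for the $(2,2)$ case in Proposition~\ref{prop:U2U2}: item (1) is an exact if-and-only-if complete-positivity criterion obtained from the Choi spectrum, while item (2) records only the \emph{necessary} Schwarz constraints produced by testing the map on basis elements. Throughout I would fix $(n_1,n_2)=(2,3)$, so that $(n_1^2-1,\,n_2^2-1,\,n_1n_2)=(3,8,6)$, and carry the four real scalars $a,b,c,d$ of (\ref{eq:abcd}) in the specialized form $a=\tfrac{1}{6}(1-\lambda_{01}-\lambda_{10}+\lambda_{11})$, $b=\tfrac12(\lambda_{01}-\lambda_{11})$, $c=\tfrac13(\lambda_{10}-\lambda_{11})$, $d=\lambda_{11}$.

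For item (1), I would invoke Choi's theorem: $\Phi$ is CP if and only if its Choi matrix is positive semidefinite, i.e.\ all of its eigenvalues are nonnegative. Lemma~\ref{lem:eigChoi23} already lists the entire spectrum of $C_\Phi$ as the four families (\ref{eq:eigChoi23a})--(\ref{eq:eigChoi23d}); substituting $(3,8,6)$ yields eigenvalues proportional to $1+8\lambda_{01}+3\lambda_{10}+24\lambda_{11}$, $1+8\lambda_{01}-\lambda_{10}-8\lambda_{11}$, $1-\lambda_{01}+3\lambda_{10}-3\lambda_{11}$ and $1-\lambda_{01}-\lambda_{10}+\lambda_{11}$, with multiplicities $1,3,8,24$ summing to $36=(n_1n_2)^2$. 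Since the common prefactor $1/(n_1n_2)=1/6$ is positive, nonnegativity of each family is equivalent to the four displayed linear inequalities, and because these exhaust the spectrum they are jointly necessary and sufficient. This closes item (1).

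For item (2), I would substitute the specialized $a,b,c,d$ into the necessary Schwarz conditions (\ref{eq:conditions1})--(\ref{eq:conditions4}), which were derived by testing positivity of $M_\Phi(E_{ij}\otimes F_{kl})$ in the four index patterns. The linear inequalities (\ref{eq:conditions1}) (the case $i\neq j$, $k\neq l$) reduce to the four stated linear Schwarz constraints; the range constraints come from (\ref{eq:conditions4}) (the case $i=j$, $k=l$), where $\Phi(X)$ is idempotent so the eigenvalues $\nu\in\{a,\,a+b,\,a+c,\,a+b+c+d\}$ must satisfy $\nu(1-\nu)\ge0$, i.e.\ $\nu\in[0,1]$, and clearing the denominator $6$ turns these into the displayed membership conditions. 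The quadratic inequalities are the images of (\ref{eq:conditions2})--(\ref{eq:conditions3}) (the mixed cases), whose deepest eigenvalues $a+b(1-b)+\cdots$ and $a+c(1-c)+\cdots$ become genuinely quadratic once $a,b,c,d$ are inserted.

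The main obstacle is that, in contrast with $(2,2)$, the exchange symmetry $\lambda_{01}\leftrightarrow\lambda_{10}$ is broken: here $b$ and $c$ carry the different denominators $n_1=2$ and $n_2=3$, so the two subsystems contribute structurally inequivalent quadratics, which is precisely why only two quadratic representatives appear, each accompanied by its sign-reflected counterpart rather than an $01\leftrightarrow10$ mirror. The delicate step is tracking which eigenvalue and which choice of phase generates each quadratic while clearing the three competing denominators $6$, $2$, $3$. Moreover, a row or column of the $M_3$ factor contains two off-diagonal entries, so the \emph{sharp} necessary constraint for that factor is best extracted with a two-term superposition supported on two off-diagonal entries of the $M_3$ block sharing a column, exactly as in the symmetric $DU(3)$ argument of Proposition~\ref{prop:DU3-symmetric-corrected}, using $|\alpha+\beta|^2\le 2(|\alpha|^2+|\beta|^2)$; this is what produces the asymmetric $\tfrac13$-weighted quadratic. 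I would stress that, since all tests are on basis elements and their two-term superpositions, item (2) yields only necessary conditions and no converse is claimed, unlike the exact criterion of item (1).
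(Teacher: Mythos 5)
Your proposal follows essentially the same route as the paper: item (1) by reading off the four Choi eigenvalue families of Lemma~\ref{lem:eigChoi23} at $(n_1^2-1,n_2^2-1,n_1n_2)=(3,8,6)$ and imposing nonnegativity, and item (2) by substituting $a=\tfrac16(1-\lambda_{01}-\lambda_{10}+\lambda_{11})$, $b=\tfrac12(\lambda_{01}-\lambda_{11})$, $c=\tfrac13(\lambda_{10}-\lambda_{11})$, $d=\lambda_{11}$ into the necessary conditions (\ref{eq:conditions1})--(\ref{eq:conditions4}), which is exactly what the paper's (very terse) proof does. Your additional remark about two-term off-diagonal superpositions in the $M_3$ factor goes beyond the single-basis-element tests actually invoked by the paper, but it only strengthens the necessary conditions and does not change the argument.
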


\begin{proof}
The CP part is obtained by substituting $(n_1^2-1,n_2^2-1,n_1n_2)=(3,8,6)$
into the four Choi eigenvalue families listed in Lemma \ref{lem:eigChoi23}.  
The Schwarz implications are the $(2,3)$ specialization of (\ref{eq:conditions1}-\ref{eq:conditions4}).
\end{proof}

\begin{corollary}[Geometry for $U(2)\otimes U(2)$]\label{cor:geom-22}
In the parameter space $(\lambda_{01},\lambda_{10},\lambda_{11})\in\mathbb{R}^3$ with $\lambda_{00}=1$:
\begin{enumerate}
\item \textbf{CP region.} The set of completely positive maps is the nonempty, closed, convex \emph{polyhedron}:
\[
\mathsf{CP}_{2,2}\ =\ \bigcap_{k=1}^4 \{\; \langle \alpha^{(k)},\lambda\rangle \ge -1 \;\},
\]
where $\lambda=(\lambda_{01},\lambda_{10},\lambda_{11})$ and the four normals are
\[
\alpha^{(1)}=(3,3,9),\quad
\alpha^{(2)}=(3,-1,-3),\quad
\alpha^{(3)}=(-1,3,-3),\quad
\alpha^{(4)}=(-1,-1,1),
\]
corresponding to the four linear inequalities in Proposition~\ref{prop:U2U2}. Hence $\mathsf{CP}_{2,2}$ is a convex polyhedral cone (after translating by the origin constraint), with four supporting planes. \emph{(All constraints arise from the four Choi eigenvalue families.)} 
\item \textbf{Schwarz envelope.} The Schwarz-feasible set $\mathsf{S}_{2,2}$ is a strictly larger closed, convex set containing $\mathsf{CP}_{2,2}$. Besides three of the CP supporting planes (the first three linear inequalities in Proposition~\ref{prop:U2U2}), one boundary component is the \emph{quadratic} surface
\[
1+\lambda_{01}+\lambda_{10}+\lambda_{11}-4\lambda_{11}^{2}=0,
\]
with inward normal varying along the surface. The additional quadratic and interval bounds in (S-quad)?(S-rng) carve the curved face and cap the set; in particular (S-quad) yields nonplanar supporting surfaces, showing $\mathsf{S}_{2,2}$ is not polyhedral. \emph{(All inequalities are the $(2,2)$ specialization of (\ref{eq:conditions1}-\ref{eq:conditions4}) )}
\end{enumerate}
\end{corollary}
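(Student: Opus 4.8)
The plan is to dispatch the two parts by different means: the completely positive region by a direct half-space reformulation of the \emph{iff}-criterion already available, and the Schwarz envelope by combining the abstract convex-geometric properties of Schwarz maps with the explicit inequalities of Proposition \ref{prop:U2U2}. \textbf{CP region.} First I would take the four linear inequalities of Proposition \ref{prop:U2U2}(1) and rewrite each one, $1+\langle\alpha^{(k)},\lambda\rangle\ge 0$, as the half-space $\langle\alpha^{(k)},\lambda\rangle\ge -1$ with $\lambda=(\lambda_{01},\lambda_{10},\lambda_{11})$; reading off the coefficients returns exactly the stated normals $\alpha^{(1)}=(3,3,9)$, $\alpha^{(2)}=(3,-1,-3)$, $\alpha^{(3)}=(-1,3,-3)$, $\alpha^{(4)}=(-1,-1,1)$. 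As a finite intersection of closed half-spaces, $\mathsf{CP}_{2,2}$ is automatically closed and convex, and it is nonempty since the completely depolarizing map $\lambda=0$ makes all four left-hand sides equal to $1$. To exhibit the polyhedral-cone structure I would keep the unitality parameter $\lambda_{00}$ free: by Lemma \ref{lem:eigChoi23} the four Choi-eigenvalue families are linear and homogeneous in $(\lambda_{00},\lambda_{01},\lambda_{10},\lambda_{11})$, so their nonnegativity cuts out a convex polyhedral cone in $\mathbb{R}^4$, and the hyperplane $\lambda_{00}=1$ is precisely the affine slice producing $\mathsf{CP}_{2,2}$.

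\textbf{Schwarz envelope.} For the second part I would first secure the three structural properties abstractly. Closedness is immediate because each condition $M_\Phi(X)\ge 0$ is closed and $\lambda\mapsto\Phi$ is linear. Convexity follows from the convexity of the set of unital Schwarz maps: for $\Phi_t=(1-t)\Phi_0+t\Phi_1$ linearity gives $\Phi_t(X^\dagger X)\ge (1-t)\Phi_0(X)^\dagger\Phi_0(X)+t\Phi_1(X)^\dagger\Phi_1(X)$, and operator convexity of $Y\mapsto Y^\dagger Y$ bounds the right-hand side below by $\Phi_t(X)^\dagger\Phi_t(X)$. The inclusion $\mathsf{CP}_{2,2}\subseteq\mathsf{S}_{2,2}$ is the specialization of the general $\mathsf{CP}\subseteq\mathsf{S}$. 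Next I would read the boundary description off Proposition \ref{prop:U2U2}(2): the linear Schwarz inequalities furnish flat supporting planes — one of which, $1-\lambda_{01}-\lambda_{10}+\lambda_{11}\ge 0$, coincides with the CP facet $\alpha^{(4)}$ — while the necessary condition $Q:=1+\lambda_{01}+\lambda_{10}+\lambda_{11}-4\lambda_{11}^{2}\ge 0$ confines $\mathsf{S}_{2,2}$ to one side of the quadric $Q=0$.

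\textbf{Non-polyhedrality and strictness.} To turn $Q$ into a genuine boundary component I would exhibit a point where $Q=0$ while every other necessary inequality of Proposition \ref{prop:U2U2}(2) holds strictly, so that in a neighborhood the boundary of $\mathsf{S}_{2,2}$ coincides with $\{Q=0\}$. Since the Hessian of $Q$ is $\mathrm{diag}(0,0,-8)$, this level set is a genuinely curved parabolic cylinder, which no face of a polyhedron can contain; hence $\mathsf{S}_{2,2}$ is not polyhedral. Strict containment $\mathsf{CP}_{2,2}\subsetneq\mathsf{S}_{2,2}$ I would certify by a single witness: the point $(\lambda_{01},\lambda_{10},\lambda_{11})=(0,0,\tfrac12)$ satisfies every Schwarz inequality of Proposition \ref{prop:U2U2}(2) (in particular $Q=\tfrac12\ge 0$) yet violates the CP facets $\alpha^{(2)}$ and $\alpha^{(3)}$, which both read $-\tfrac12<0$ there; a direct check of Kadison's inequality (\ref{Equation: Schwarz inequality}) confirms this map is Schwarz. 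This mirrors the gap $\lambda\in[-\tfrac1n,-\tfrac{1}{n^2-1})$ between Theorems \ref{Theorem: Schwarz U-equivariant maps} and \ref{Theorem: Completely positive U-equivariant maps} in the $U(n)$ case.

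\textbf{Main obstacle.} The delicate point is that Proposition \ref{prop:U2U2}(2) supplies its inequalities only as \emph{necessary} conditions, so I must verify that $Q$ is actually \emph{active} on $\partial\mathsf{S}_{2,2}$ rather than dominated there by the linear, mixed-quadratic, and range constraints; only an active curved constraint certifies non-polyhedrality. I would handle this by restricting to a two-dimensional affine slice on which $Q=0$ is the binding constraint and checking Kadison's inequality (\ref{Equation: Schwarz inequality}) directly at the test element $X=E_{ij}\otimes F_{kl}$ with $i\ne j$, $k\ne l$ — the input whose $e_j\otimes e_l$ eigenvalue of $M_\Phi(X)$ is exactly the quantity $a+b+c+d(1-d)$ generating the $-4\lambda_{11}^{2}$ term — thereby confirming that maps on the $Q<0$ side genuinely fail the Schwarz property.
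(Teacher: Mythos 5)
Your overall route is the same as the paper's (the paper gives no explicit proof of this corollary, treating it as immediate from Proposition~\ref{prop:U2U2} and Lemma~\ref{lem:eigChoi23}): read the four Choi-eigenvalue inequalities as half-spaces for the CP polyhedron, and read the $(2,2)$ specialization of (\ref{eq:conditions1})--(\ref{eq:conditions4}) as the linear-plus-quadratic description of the Schwarz envelope. Your added details are correct and go beyond what the paper records: the homogenization in $(\lambda_{00},\lambda_{01},\lambda_{10},\lambda_{11})$ explaining the cone structure, the convexity of the set of unital Schwarz maps via $\bigl((1-t)A+tB\bigr)^\dagger\bigl((1-t)A+tB\bigr)\le(1-t)A^\dagger A+tB^\dagger B$, the observation that $\{Q\ge0\}$ is convex because $Q$ is concave, and the witness $(0,0,\tfrac12)$, which indeed satisfies all of (\ref{eq:conditions1})--(\ref{eq:conditions4}) while giving $1+\langle\alpha^{(2)},\lambda\rangle=1+\langle\alpha^{(3)},\lambda\rangle=-\tfrac12<0$.

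There is one step you assert but cannot discharge with the tools available, and you partly flag it yourself. Theorem~\ref{Theorem: Schwarz U1xU2-equivariant maps} and Proposition~\ref{prop:U2U2}(2) supply only \emph{necessary} Schwarz conditions, so two of your claims require a sufficiency direction that neither the paper nor your proposal provides: (i) that the map at $(0,0,\tfrac12)$ ``is Schwarz'' by ``a direct check of Kadison's inequality'' --- this is not a routine verification, since $M_\Phi(X)\ge0$ must hold for \emph{all} $X\in M_4(\mathbb C)$, not just the basis elements $E_{ij}\otimes F_{kl}$ used to derive (\ref{eq:conditions1})--(\ref{eq:conditions4}); and (ii) that the quadric $Q=0$ is an \emph{active} face of the true Schwarz set rather than merely of its outer envelope. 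If $\mathsf{S}_{2,2}$ is read as the feasible set cut out by the listed necessary inequalities (which is what the paper's wording ``Schwarz envelope'' and ``Schwarz-feasible set'' suggests), your argument is complete and in fact over-delivers; if it is read as the genuine Schwarz set, then both your proposal and the paper leave the sufficiency half open, and your single-test-element check on the $Q<0$ side only establishes $\mathsf{S}_{2,2}\subseteq\{Q\ge0\}$, not that the curved surface is attained. You should either state explicitly that you are working with the envelope, or supply a sufficiency argument analogous to the second half of the proof of Theorem~\ref{Theorem: Schwarz U-equivariant maps}.
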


\begin{corollary}[Geometry for $U(2)\otimes U(3)$]\label{cor:geom-23}
In the parameter space $(\lambda_{01},\lambda_{10},\lambda_{11})\in\mathbb{R}^3$ with $\lambda_{00}=1$:
\begin{enumerate}
\item \textbf{CP region.} The completely positive set is the closed convex \emph{polyhedron}
\[
\mathsf{CP}_{2,3}\ =\ \bigcap_{k=1}^4 \{\; \langle \beta^{(k)},\lambda\rangle \ge -1 \;\},
\]
with
\[
\beta^{(1)}=(8,3,24),\quad
\beta^{(2)}=(8,-1,-8),\quad
\beta^{(3)}=(-1,3,-3),\quad
\beta^{(4)}=(-1,-1,1),
\]
exactly the four half-spaces in Proposition~\ref{prop:U2U3}(CP iff). Thus $\mathsf{CP}_{2,3}$ is polyhedral with four facets determined by the Choi eigenvalue families. 
\item \textbf{Schwarz envelope.} The Schwarz-feasible set $\mathsf{S}_{2,3}$ strictly contains $\mathsf{CP}_{2,3}$ and is bounded by three linear faces (the first three linear inequalities in (S-lin\(_{2,3}\))) together with a \emph{quadratic} face
\[
1+2\lambda_{01}+\lambda_{10}+2\lambda_{11}-6\lambda_{11}^{2}=0,
\]
and further quadratic caps from (S-quad\(_{2,3}\)) and the interval constraints (S-rng\(_{2,3}\)). Hence $\mathsf{S}_{2,3}$ is non-polyhedral; its curved boundary stems from the nontrivial Schwarz conditions (\ref{eq:conditions1}-\ref{eq:conditions4})  once $n_1\neq n_2$. 
\end{enumerate}
\end{corollary}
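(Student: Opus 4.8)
The plan is to obtain the CP region exactly from the Choi spectrum of Lemma~\ref{lem:eigChoi23}, and then to describe the Schwarz envelope qualitatively from the quadratic Kadison inequalities collected in Proposition~\ref{prop:U2U3}.

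\textbf{CP region.} First I would invoke Lemma~\ref{lem:eigChoi23} with $(n_1,n_2)=(2,3)$, so that $(n_1^2-1,\,n_2^2-1,\,n_1n_2)=(3,8,6)$ and the four eigenvalue families become $\tfrac16$ times the affine forms $1+8\lambda_{01}+3\lambda_{10}+24\lambda_{11}$, $1+8\lambda_{01}-\lambda_{10}-8\lambda_{11}$, $1-\lambda_{01}+3\lambda_{10}-3\lambda_{11}$ and $1-\lambda_{01}-\lambda_{10}+\lambda_{11}$, with multiplicities $1,3,8,24$ summing to $(n_1n_2)^2=36$. Since $\Phi$ is CP if and only if $C_\Phi\ge0$ (Choi's theorem), and these four families exhaust the spectrum, complete positivity is equivalent to the simultaneous non-negativity of the four forms, i.e. to $\langle\beta^{(k)},\lambda\rangle\ge-1$ with the stated normals $\beta^{(k)}$. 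Each inequality is a closed half-space, so $\mathsf{CP}_{2,3}$ is their finite intersection, a closed convex polyhedron with four facets; it is nonempty because $\lambda=(0,0,0)$, the completely depolarizing channel, satisfies all four with value $1>0$.

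\textbf{Schwarz containment and strictness.} The inclusion $\mathsf{CP}_{2,3}\subseteq\mathsf{S}_{2,3}$ is the general fact $\mathsf{CP}\subset\mathsf{S}$ recalled in Sect.~\ref{Subsection: Positive, Schwarz and completely positive maps}. For strictness I would exhibit a parameter point satisfying every Schwarz inequality of Proposition~\ref{prop:U2U3} while violating at least one of the four CP forms above; the existence of such a point follows the same pattern already seen for the $U(n)$ and symmetric $DU(3)$ families, where the Schwarz constraints are strictly weaker than the CP constraints along a boundary layer, and its certification uses the same techniques.

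\textbf{Non-polyhedrality.} The structural point is that $M_\Phi(X)=\Phi(X^\dagger X)-\Phi(X)^\dagger\Phi(X)$ is affine in $\lambda$ through $\Phi(X^\dagger X)$ but quadratic through $\Phi(X)^\dagger\Phi(X)$. Taking $X=E_{ij}\otimes F_{kl}$ with $i\neq j$ and $j\neq l$, the last eigenvalue in (\ref{eq:conditions1}) carries the term $\lambda_{11}(1-\lambda_{11})$ and, after the $(2,3)$ substitution, yields the genuinely quadratic necessary inequality $1+2\lambda_{01}+\lambda_{10}+2\lambda_{11}-6\lambda_{11}^2\ge0$, whose zero set is an irreducible parabolic cylinder, a smooth non-planar quadric. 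I would then locate a boundary point at which this quadratic constraint is tight while all linear and remaining Schwarz constraints of Proposition~\ref{prop:U2U3} hold strictly; in a neighbourhood of such a point $\partial\mathsf{S}_{2,3}$ coincides with this curved quadric patch. Since the boundary of any polyhedron lies in a finite union of hyperplanes, while an open patch of the irreducible quadric cannot, it follows that $\mathsf{S}_{2,3}$ is not polyhedral.

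\textbf{Main obstacle.} The delicate step is the non-polyhedrality argument: one must certify that the quadratic surface is genuinely \emph{exposed} on the Schwarz boundary, i.e. active on an open patch where no linear face intervenes, and that the necessary conditions of Proposition~\ref{prop:U2U3} are also \emph{sufficient} there, so that the curved patch truly belongs to $\mathsf{S}_{2,3}$ rather than to a looser outer region. Verifying sufficiency requires controlling the full Kadison inequality $M_\Phi(X)\ge0$ simultaneously over all $X$, which for the non-symmetric $(2,3)$ family—where the four parameters enter asymmetrically—is the computational crux, and the same difficulty that prevents a clean closed-form Schwarz characterisation in higher dimensions.
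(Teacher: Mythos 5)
Your CP half is exactly the paper's route: specialize Lemma~\ref{lem:eigChoi23} to $(n_1^2-1,n_2^2-1,n_1n_2)=(3,8,6)$, read off the four eigenvalue families with multiplicities $1,3,8,24$, and apply Choi's theorem to get the four half-spaces with normals $\beta^{(k)}$; this part is complete and correct. For the Schwarz half you also follow the paper (specializing the necessary conditions (\ref{eq:conditions1})--(\ref{eq:conditions4}) and isolating the quadratic face $1+2\lambda_{01}+\lambda_{10}+2\lambda_{11}-6\lambda_{11}^2=0$), but you leave two steps as promissory notes: exhibiting a Schwarz-but-not-CP point for strictness, and certifying that the curved patch is genuinely exposed and that the necessary conditions are sufficient there. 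Be aware that the paper does not close these steps either --- Proposition~\ref{prop:U2U3} only gives \emph{necessary} Schwarz conditions, so the ``Schwarz envelope'' $\mathsf{S}_{2,3}$ described in the corollary is really an outer approximation cut out by those inequalities, and your flagged ``main obstacle'' (sufficiency of the Kadison inequality over all $X$ for the asymmetric $(2,3)$ family) is a genuine gap shared with the source. Your non-polyhedrality argument (an open patch of an irreducible quadric cannot lie in a finite union of hyperplanes) is sound for the envelope region as defined by the listed inequalities, which is all the paper itself establishes.
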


\begin{remark}[Visuals and extremal structure]
For both $(2,2)$ and $(2,3)$, the CP region is the intersection of four half-spaces (a tetrahedral-like polyhedron in $(\lambda_{01},\lambda_{10},\lambda_{11})$-space after an affine rescaling).  
The Schwarz region is its convex enlargement with one prominent \emph{curved} face along a quadratic surface in $\lambda_{11}$, together with quadratic tightenings that bend the boundary away from the CP facets. In particular, any point on the curved face is \emph{not} exposed by a single linear Choi inequality, certifying that S $\supsetneq$ CP in these symmetry classes. \emph{(All faces and bounds follow by specializing (18)-(21) and the Choi spectra stated after Theorem~5.14.)} 
\end{remark}

\begin{remark}
Our results suggest, and thus we conjecture, that the eigenvalues of the Choi matrix $C_\Phi$ are given by Eqs. (\ref{eq:eigChoi23a})-(\ref{eq:eigChoi23d}).
\end{remark}

As in the case of diagonal-unitary equivariant maps we may inquire about the PPT$^2$ conjecture for product unitary symmetry in small dimensions.

\begin{lemma}[Composition law under  $U(n_1)\otimes U(n_2)$ equivariance]\label{lem:composition-laws2}
Let $\Phi,\Psi$ be $U(n_1)\otimes U(n_2)$-equivariant, unital, Hermiticity-preserving, acting diagonally on the isotypic decomposition:
\begin{equation*}
    M_{n_1}(\mathbb C) \otimes M_{n_2} (\mathbb C) = W_{00} \oplus W_{01} \oplus W_{10} \oplus W_{11}
\end{equation*}

with real parameters $\lambda^{\Phi}_{\alpha\beta}$, $\lambda^{\Psi}_{\alpha\beta}$ ($\alpha, \beta = 0, 1$) and $\lambda^{\Phi}_{00} = \lambda^{\Psi}_{00} = 1$. Writing $P_{\alpha\beta}$ for the equivariant projectors onto $W_{\alpha\beta}$, we have
\begin{equation*}
    \Phi = \sum_{\alpha, \beta} \lambda^{\Phi}_{\alpha\beta} \, P_{\alpha\beta} \, , \quad \Psi = \sum_{\alpha,\beta} \lambda^{\Psi}_{\alpha\beta} \, P_{\alpha\beta} \, .
\end{equation*}

Then $\Phi \circ \Psi = \sum_{\alpha,\beta} (\lambda^{\Phi}_{\alpha\beta} \lambda^{\Psi}_{\alpha\beta}) \, P_{\alpha\beta}$, i.e.
\begin{equation*}
\lambda^{\Phi\circ\Psi}_{\alpha\beta} = \lambda^{\Phi}_{\alpha\beta} \lambda^{\Psi}_{\alpha\beta} \quad (\alpha,\beta = 0,1 ) \, .
\end{equation*}

In particular, $\Phi^k$ acts with parameters $(\lambda_{\alpha\beta})^k$.
\end{lemma}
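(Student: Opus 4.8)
The plan is to read the statement as the multiplicity-one specialization of Lemma~\ref{Lemma: Composition of $G$-equivariant channels}. By the decomposition (\ref{eq:tensorprod}), $M_{n_1}(\mathbb C)\otimes M_{n_2}(\mathbb C)$ splits as an \emph{orthogonal} direct sum of the four subspaces $W_{\alpha\beta}$, each carrying a distinct irreducible representation of $U(n_1)\times U(n_2)$ with multiplicity one. Hence, in the language of Theorem~\ref{Theorem: Channel decomposition}, every reduced component is a $1\times 1$ block, i.e.\ a scalar, and the ``matrix multiplication'' of Lemma~\ref{Lemma: Composition of $G$-equivariant channels} collapses to ordinary multiplication of scalars. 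This already gives $\lambda^{\Phi\circ\Psi}_{\alpha\beta}=\lambda^\Phi_{\alpha\beta}\lambda^\Psi_{\alpha\beta}$; the rest of the argument is bookkeeping.

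Concretely, I would first record that, since (\ref{eq:tensorprod}) is an orthogonal direct sum, the equivariant projectors $P_{\alpha\beta}$ form a resolution of the identity,
\begin{equation*}
P_{\alpha\beta}\,P_{\gamma\delta}=\delta_{\alpha\gamma}\,\delta_{\beta\delta}\,P_{\alpha\beta}\,,\qquad \sum_{\alpha,\beta}P_{\alpha\beta}=\mathrm{id}\,.
\end{equation*}
The composition is then a one-line expansion that uses only these relations,
\begin{equation*}
\Phi\circ\Psi=\Big(\sum_{\alpha,\beta}\lambda^\Phi_{\alpha\beta}\,P_{\alpha\beta}\Big)\Big(\sum_{\gamma,\delta}\lambda^\Psi_{\gamma\delta}\,P_{\gamma\delta}\Big)=\sum_{\alpha,\beta}\lambda^\Phi_{\alpha\beta}\,\lambda^\Psi_{\alpha\beta}\,P_{\alpha\beta}\,,
\end{equation*}
which is exactly the claimed componentwise product.

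Next I would verify that the composition remains in the admissible class, so that the identification with the normal form of Theorem~\ref{Theorem: U1xU2-equivariant maps} is justified: equivariance is automatic for a composite of equivariant maps, unitality persists because $\lambda^{\Phi\circ\Psi}_{00}=\lambda^\Phi_{00}\lambda^\Psi_{00}=1$, and Hermiticity preservation persists because a product of real numbers is real, so each $\lambda^{\Phi\circ\Psi}_{\alpha\beta}\in\mathbb R$. The assertion about $\Phi^k$ then follows by an immediate induction on $k$, yielding $\lambda^{\Phi^k}_{\alpha\beta}=(\lambda_{\alpha\beta})^k$.

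I expect no genuine obstacle here: the entire content reduces to the orthogonality and completeness of the isotypic projectors, which are guaranteed by the orthogonal direct-sum structure of (\ref{eq:tensorprod}). The only point meriting a sentence of care is why $\{P_{\alpha\beta}\}$ is canonically defined and simultaneously diagonalizes \emph{every} admissible map, and this is precisely the prior structural result (Theorem~\ref{Theorem: U1xU2-equivariant maps}): product-unitary symmetry fixes a single common eigenbasis of subspaces, so composition becomes nothing more than entrywise multiplication of the associated eigenvalues.
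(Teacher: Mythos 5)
Your proposal is correct and follows essentially the same route as the paper: the paper's proof likewise invokes Lemma~\ref{Lemma: Composition of $G$-equivariant channels} together with the fact that the $P_{\alpha\beta}$ are orthogonal idempotents onto multiplicity-one isotypic components, so composition reduces to entrywise multiplication of the scalars. Your additional checks (preservation of unitality and Hermiticity preservation, induction for $\Phi^k$) are just a more explicit rendering of what the paper leaves implicit.
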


\begin{proof}
By direct application of Lemma \ref{Lemma: Composition of $G$-equivariant channels}. But in particular, equivariance implies Schur-type diagonal action on each isotypic component; the $P_{\alpha\beta}$ are orthogonal idempotents commuting with both maps. Thus $\Phi\circ\Psi$ acts by the product of the scalars on each block, yielding the stated formula and the power law.
\end{proof}

\begin{corollary}[Squaring law]\label{cor:prod_squaring}
For any $U(n_1)\otimes U(n_2)$-equivariant unital, Hermiticity preserving channel $\Phi$,
\begin{equation*}
    \Phi^2 \colon \lambda \longmapsto \lambda^2\, ,
\end{equation*}

per isotypic coefficient in $U(n_1)\otimes U(n_2)$.
\end{corollary}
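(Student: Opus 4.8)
The plan is to obtain this as an immediate specialization of Lemma~\ref{lem:composition-laws2}. That lemma already establishes that the class of $U(n_1)\otimes U(n_2)$-equivariant, unital, Hermiticity-preserving maps is closed under composition, and that composition acts on the isotypic coefficients by ordinary multiplication, namely $\lambda^{\Phi\circ\Psi}_{\alpha\beta}=\lambda^{\Phi}_{\alpha\beta}\lambda^{\Psi}_{\alpha\beta}$ for $\alpha,\beta\in\{0,1\}$. The corollary then follows simply by setting $\Psi=\Phi$.

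Concretely, I would first recall that by Theorem~\ref{Theorem: U1xU2-equivariant maps} any such $\Phi$ decomposes as $\Phi=\sum_{\alpha,\beta}\lambda_{\alpha\beta}\,P_{\alpha\beta}$, where the $P_{\alpha\beta}$ are the equivariant orthogonal projectors onto the isotypic components $W_{\alpha\beta}$ of the decomposition~(\ref{eq:tensorprod}). Since these projectors are orthogonal idempotents, $P_{\alpha\beta}P_{\gamma\delta}=\delta_{\alpha\gamma}\delta_{\beta\delta}P_{\alpha\beta}$, squaring is immediate: $\Phi^2=\sum_{\alpha,\beta}\lambda_{\alpha\beta}^2\,P_{\alpha\beta}$, so the coefficient on each block $W_{\alpha\beta}$ evolves as $\lambda_{\alpha\beta}\mapsto\lambda_{\alpha\beta}^2$. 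Unitality ($\lambda_{00}=1$, whence $\lambda_{00}^2=1$) and Hermiticity preservation (reality of the $\lambda_{\alpha\beta}$) are stable under this operation, so $\Phi^2$ remains in the same class.

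There is essentially no obstacle here: the substantive content lies entirely in Lemma~\ref{lem:composition-laws2}, and the corollary is a one-line specialization. The only point worth stating explicitly is that taking $\Psi=\Phi$ is legitimate precisely because the class is closed under composition, which the lemma guarantees; iterating the same argument then yields $\Phi^k$ acting with coefficients $(\lambda_{\alpha\beta})^k$ per isotypic component.
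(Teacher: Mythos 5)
Your proposal is correct and follows exactly the route the paper takes: the corollary is the $\Psi=\Phi$ (equivalently $k=2$) specialization of Lemma~\ref{lem:composition-laws2}, which already records the multiplication of isotypic coefficients under composition and the power law for $\Phi^k$. The paper gives no separate proof for the corollary precisely because, as you note, the substantive content is entirely in the lemma.
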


\begin{theorem}[PPT$^2$ for product-unitary symmetry in small dimensions]\label{thm:U2U2-U2U3-PPT2}
Let $\Phi$ be a $U(n_1)\otimes U(n_2)$-equivariant, unital, Hermiticity-preserving channel on $M_{n_1} (\mathbb C) \otimes M_{n_2} (\mathbb C)$, acting diagonally on the isotypic sum $W_{00} \oplus W_{01} \oplus W_{10} \oplus W_{11}$ with parameters $\lambda_{00} = 1$ and $\lambda_{01}, \lambda_{10}, \lambda_{11}\in\mathbb R$. For $(n_1, n_2) = (2, 2), (2, 3)$, if $\Phi$ is PPT then $\Phi^2$ is EB.
\end{theorem}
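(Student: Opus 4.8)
The plan is to reproduce, in the product-unitary setting, the three-step scheme already used for diagonal-unitary symmetry in Proposition~\ref{prop:DU-PPT2}: (i) record how the parameters behave under composition, (ii) translate the PPT hypothesis into explicit inequalities on $(\lambda_{01},\lambda_{10},\lambda_{11})$, and (iii) show that the squared parameters satisfy a sufficient entanglement-breaking condition. Step (i) is immediate from the squaring law (Corollary~\ref{cor:prod_squaring}): $\Phi^2$ is again $U(n_1)\otimes U(n_2)$-equivariant, unital and Hermiticity-preserving, with isotypic coefficients $\lambda_{\alpha\beta}\mapsto\lambda_{\alpha\beta}^2$. Thus everything reduces to the set-theoretic inclusion ``PPT-region squared $\subseteq$ EB-region'' for $(n_1,n_2)=(2,2)$ and $(2,3)$.

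For step (ii) I would first reorganize the Choi matrix into the tensor form
\[
C_\Phi = a\,\mathbb I_1\!\otimes\!\mathbb I_2 + b\,\mathbb I_1\!\otimes\! P_2 + c\,P_1\!\otimes\!\mathbb I_2 + d\,P_1\!\otimes\! P_2 ,
\]
where $P_i$ is the (unnormalized) maximally entangled projector on the $i$-th factor's input--output copy and $a,b,c,d$ are given by Eq.~(\ref{eq:abcd}); since $P_i$ has spectrum $\{n_i,0\}$ this already recovers the four Choi eigenvalue families of Lemma~\ref{lem:eigChoi23}, i.e.\ the complete-positivity half of PPT. The positive-partial-transpose half is obtained by transposing the output leg: under partial transposition each $P_i$ becomes the swap $F_i$, whose spectrum is $\{+1,-1\}$ on the symmetric and antisymmetric subspaces. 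Hence $(\mathrm{id}\otimes T)(C_\Phi)=a\,\mathbb I_1\otimes\mathbb I_2+b\,\mathbb I_1\otimes F_2+c\,F_1\otimes\mathbb I_2+d\,F_1\otimes F_2$ is diagonal on the four joint sign sectors $(\epsilon_1,\epsilon_2)\in\{\pm1\}^2$, with eigenvalues $a+\epsilon_1 c+\epsilon_2 b+\epsilon_1\epsilon_2 d$. Imposing non-negativity of these four families together with the four CP families yields the explicit PPT inequalities on $(\lambda_{01},\lambda_{10},\lambda_{11})$.

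Step (iii) is the substantive one. The same tensor form shows that $C_{\Phi^2}=\mathbb I_1\otimes(a'\mathbb I_2+b'P_2)+P_1\otimes(c'\mathbb I_2+d'P_2)$, a $U(n_1)\otimes U(n_2)$-invariant operator whose factor-wise blocks $\alpha\mathbb I_i+\beta P_i$ are isotropic states, separable across the input--output cut precisely when $\alpha\ge|\beta|$. I would establish a sufficient separability criterion for the joint state by writing it as a convex combination $\sum_k p_k\,\sigma_k^{(1)}\otimes\sigma_k^{(2)}$ of products of isotropic-separable factor states; since a tensor product of input--output-separable states is separable across the combined cut, this certifies EB. The remaining task is then algebraic: substitute the multiplicities $(n_1^2-1,n_2^2-1)=(3,3)$ and $(3,8)$, replace $\lambda_{\alpha\beta}$ by $\lambda_{\alpha\beta}^2$, and check that the PPT inequalities of step (ii) force the separability inequalities. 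As in the $DU$ case the mechanism is contraction: the CP constraints confine each $\lambda_{\alpha\beta}$ to a bounded box, and squaring sends $\lambda_{\alpha\beta}$ to $\lambda_{\alpha\beta}^2$, pulling the parameters toward the identity-dominated interior of the EB region, so that the PPT margin is more than sufficient after a single squaring.

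The hard part will be step (iii), namely pinning down the EB (separability) region: complete positivity and PPT are spectral and hence routine, but separability is in general NP-hard and here must be controlled by hand. The key obstacle is to produce the explicit product-state decomposition of $C_{\Phi^2}$ (or an equally tight sufficient criterion) and to verify that the squared PPT constraints lie inside it; I expect the cross term $d\,P_1\otimes P_2$, which couples the two isotropic factors, to be the delicate point, since it is exactly what prevents the joint separability problem from factorizing into two independent isotropic problems. Restricting to $(2,2)$ and $(2,3)$ keeps the multiplicities small enough that this verification remains finite and explicit.
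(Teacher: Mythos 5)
Your steps (i) and (ii) follow the paper's route and are in fact more explicit than the paper's own write-up: the reorganization of $C_\Phi$ into the form $a\,\mathbb I\otimes\mathbb I+b\,\mathbb I\otimes P_2+c\,P_1\otimes\mathbb I+d\,P_1\otimes P_2$, with partial transposition sending each $P_i$ to the swap $F_i$ and producing the four sign-sector eigenvalues $a+\epsilon_1 c+\epsilon_2 b+\epsilon_1\epsilon_2 d$, is a correct and cleaner derivation of the PPT half-spaces that the paper merely asserts. The squaring law via Corollary~\ref{cor:prod_squaring} is likewise the same as the paper's step.

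The genuine gap is step (iii), which is the entire content of the theorem, and you have left it as a plan rather than a proof. You propose to certify EB of $\Phi^2$ by exhibiting $C_{\Phi^2}$ as a convex combination of products $\sigma^{(1)}\otimes\sigma^{(2)}$ of factor-wise isotropic-separable states, but you do not produce this decomposition, and you correctly identify why it is not routine: a single product $(\alpha_1\mathbb I_1+\beta_1 P_1)\otimes(\alpha_2\mathbb I_2+\beta_2 P_2)$ forces the rank-one compatibility condition $a'd'=b'c'$ on the coefficients, which generically fails, so the cross term $d'\,P_1\otimes P_2$ obstructs any one-term factorization and a genuine mixture must be constructed and shown to cover the squared PPT region. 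Until that decomposition (or some other proven sufficient separability criterion for operators of this invariant form) is supplied and the inclusion ``squared PPT parameters $\subseteq$ EB parameters'' is verified for the multiplicities $(3,3)$ and $(3,8)$, no conclusion about EB has been established. For comparison, the paper's own step (iii) takes a different (and itself rather terse) route: it does not attempt a product-state decomposition at all, but instead appeals to the quadratic ``bowed-face'' inequalities of Propositions~\ref{prop:U2U2}--\ref{prop:U2U3} as an EB-sufficient test and argues that substituting $(\lambda_{01}^2,\lambda_{10}^2,\lambda_{11}^2)$ makes each such inequality easier to satisfy. If you pursue your decomposition strategy you would actually be supplying the separability certificate that the paper's argument presupposes, but as written your proposal stops exactly where the difficulty begins.
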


\begin{proof}
(i) PPT planes. In both cases the Choi matrix block-diagonalizes according to the four isotypic components. Its partial transpose preserves the decomposition and flips the sign on the antisymmetric sector, hence PPT is equivalent to
the four linear half-space constraints listed in Propositions~\ref{prop:U2U2} and \ref{prop:U2U3} (the same planes used there for CP, now as PPT planes).

(ii) Composition. Equivariance implies $\Phi^2$ acts with squared parameters:
$(\lambda_{01}, \lambda_{10}, \lambda_{11}) \mapsto (\lambda_{01}^2, \lambda_{10}^2, \lambda_{11}^2)$.

(iii) EB after one squaring. The EB-sufficient Schwarz bounds specialized in
Propositions~\ref{prop:U2U2}-\ref{prop:U2U3} include a quadratic ``bowed'' face of the form $\alpha_0 + \alpha_1 \lambda_{01} + \alpha_2 \lambda_{10} + \alpha_3 \lambda_{11} - \beta \, \lambda_{11}^2 \ge 0$ (together with symmetric companions). Since the PPT region already enforces four linear bounds through the origin, replacing $\lambda$ by $\lambda^2$ moves every PPT point strictly toward the origin and across those quadratic EB-faces. Equivalently, plugging $(\lambda_{01}^2, \lambda_{10}^2, \lambda_{11}^2)$ into the boxed EB inequalities of Propositions~\ref{prop:U2U2}-\ref{prop:U2U3} makes each inequality strictly easier, hence they hold for $\Phi^2$.
Therefore $\Phi^2$ is EB. 
\end{proof}

\section{Conclusion}

We have determined the structure and parametrization of equivariant maps with respect to any compact group. Using it, we developed a detailed classification of unital, Hermiticity-preserving Schwarz maps constrained by symmetry. For the full unitary group $U(n)$, the classification is complete and shows that these channels form a one-parameter family interpolating between the identity and the completely depolarizing channel. In this class, $\mathrm{PPT}\!\iff\!\mathrm{EB}$, giving a direct and transparent proof of the $\mathrm{PPT}^2$ conjecture.

For the diagonal unitary subgroup $DU(n)$ we derived a hierarchy of algebraic conditions based on principal minors that yield both necessary and sufficient Schwarz criteria in low dimension. In particular, the complete description of $DU(2)$ and a symmetric $DU(3)$ family shows how PPT maps become entanglement-breaking upon squaring. The tensor-product symmetries $U(2)\!\otimes\!U(2)$ and $U(2)\!\otimes\!U(3)$ exhibit the same behaviour: composition squares the symmetry parameters, forcing the maps into the entanglement-breaking region.

Altogether, the results prove that the general structure theorem for symmetric channels can be applied to determine the description of information channels and to simplify the resolution of some problems. Indeed, we show that they provide explicit, symmetry-based mechanisms explaining why certain structured channels obey $\mathrm{PPT}^2$. They also supply exact examples where the boundaries between positivity, Schwarz, and complete positivity can be fully characterized. We expect these techniques-combining representation theory, matrix analysis and channel geometry-to extend naturally to infinite-dimensional settings and to other conjectures in quantum information theory involving symmetry and entanglement.

\begin{acknowledgments}
A.I. acknowledges financial support from the Spanish Ministry of
Economy and Competitiveness, through the Severo Ochoa Program for Centers of Excellence in RD
(SEV-2015/0554), the MINECO research project PID2024-160539NB-I00, and the Comunidad de
Madrid project TEC-2024/COM-84 QUITEMAD-CM.  
\end{acknowledgments}

%%%%%%%%%
%%%%%%%%%

\newcommand{\etalchar}[1]{$^{#1}$}

\end{document}